\newcommand\ubar[1]{%
  \underaccent{\bar}{#1}}
\newtheorem{theorem}{Theorem}[section]
\newtheorem{proposition}[theorem]{Proposition}
\newtheorem{corollary}[theorem]{Corollary}
\newtheorem{example}[theorem]{Example}
\newtheorem{lemma}[theorem]{Lemma}
\newtheorem{definition}[theorem]{Definition}
\newtheorem{conjecture}[theorem]{Conjecture}
\theoremstyle{remark}
\newtheorem{remark}[theorem]{Remark}
\numberwithin{equation}{section}
\newcommand{\C}{\mathbb{C}}
\newcommand{\R}{\mathbb{R}}
\newcommand{\Q}{\mathbb{Q}}
\newcommand{\Z}{\mathbb{Z}}
\newcommand{\p}{\mathbb{P}}
\newcommand{\Pic}{\operatorname{Pic}}
\newcommand{\red}{\operatorname{red}}
\newcommand{\Div}{\operatorname{Div}}
\renewcommand{\div}{\operatorname{div}}
\newcommand{\E}{\mathcal{E}}
\newcommand{\F}{\mathcal{F}}
\newcommand{\A}{\mathcal{A}}
\newcommand{\B}{\mathcal{B}}
\newcommand{\D}{\mathcal{D}}
\newcommand{\K}{\mathcal{K}}
\newcommand{\X}{\mathcal{X}}
\newcommand{\defeq}{\vcentcolon=}
\begin{document}

\title[Deautonomisation by singularity confinement and degree growth]{Deautonomisation by singularity confinement and degree growth}


\author[AS]{Alexander Stokes$^{1,2,\dagger}$}

\address{$^1$Waseda Institute for Advanced Study, Waseda University, 1-21-1 Nishi Waseda Shinjuku-ku Tokyo 169-0051, Japan.}

\author[TM]{Takafumi Mase$^2$}


\author[RW]{Ralph Willox$^2$}

\address{$^2$Graduate School of Mathematical Sciences, The University of Tokyo, 3-8-1 Komaba Meguro-ku Tokyo 153--8914, Japan.}

\author[BG]{Basile Grammaticos$^3$}
\address{$^3$Universit\'e Paris-Saclay, CNRS/IN2P3, IJCLab, 91405 Orsay, France and Universit\'e
de Paris, IJCLab, 91405 Orsay, France}

\address{$^{\dagger}$corresponding author}
\email[Corresponding author]{stokes@aoni.waseda.jp}

\maketitle

\begin{abstract}
In this paper we give an explanation of a number of observations relating to degree growth of birational mappings of the plane and their deautonomisation by singularity confinement.
These observations are of a link between two a priori unrelated notions: firstly the dynamical degree of the mapping and secondly the evolution of parameters required for its singularity structure to remain unchanged under a sufficiently general deautonomisation.
We explain this correspondence for a large class of birational mappings of the plane via the spaces of initial conditions for their deautonomised versions.
We show that even for non-integrable mappings in this class, the surfaces forming these spaces have effective anticanonical divisors and one can define a period map parametrising them, similar to that in the theory of rational surfaces associated with discrete Painlev\'e equations.
This provides a bridge between the evolution of coefficients in the deautonomised mapping and the induced dynamics on the Picard lattice which encode the dynamical degree.
\end{abstract}




\section{Introduction}  \label{section1}


The degree growth of a discrete dynamical system defined by a birational mapping 
$
\varphi : \p^2 \dashrightarrow \p^2,
$
is measured by the (first) dynamical degree 
\begin{equation}
\lambda_1(\varphi) = \lim_{k\to \infty} \left( \deg \varphi^k \right)^{1/k}.
\end{equation}
The quantity $\lim_{k\to\infty} \frac{1}{k} \log \left(\deg \varphi^k \right) = \log \lambda_1(\varphi)$ is used in the integrable systems community under the name `algebraic entropy' \cite{algebraicentropy}, and the system defined by $\varphi$ is said to be integrable if its dynamical degree is 1 or equivalently if its algebraic entropy is 0. 
This means that its degree growth is sub-exponential, which follows the idea that integrability in the discrete case is associated with slow growth in complexity under the dynamics. This idea has roots in Arnol'd's notion of intersection complexity \cite{arnold1, arnold2}, Veselov's ideas of polynomial growth \cite{veselovdiscrete}, and Friedland entropy \cite{friedland, friedlandmilnor}.

The motivation for this paper stems from observations of a remarkable phenomenon in the study of degree growth of birational mappings that have the singularity confinement property.
Singularity confinement, as a property of the mapping,  means that  $\varphi$ is birationally conjugate to an automorphism of a rational surface $X$. 
There is a procedure called deautonomisation by singularity confinement, which produces a non-autonomous mapping, i.e. a sequence of mappings
$\varphi_n : \p^2 \dashrightarrow \p^2$,
such that the structure of destabilising orbits of $\varphi_n$ is the same as for $\varphi$.
This involves embedding $\varphi$ into a family of maps $\varphi_{\mathcal{A}} = \left\{ \varphi_{\boldsymbol{a}} ~|~ \boldsymbol{a} \in \mathcal{A} \right\}$, where $\mathcal{A}$ is some space of parameters, and deriving the parameter evolution $\boldsymbol{a}_n \mapsto \boldsymbol{a}_{n+1}$ such that the singularity structure of $\varphi_n = \varphi_{\boldsymbol{a}_n}$ is the same as $\varphi$.
In the language of singularity confinement, the recurrences governing the parameter evolution are called confinement conditions.


The phenomenon in question can then be summarised as follows. \\

{\it If one performs deautonomisation by singularity confinement on a mapping $\varphi$ using a sufficiently general family $\varphi_{\mathcal{A}}$, then the confinement conditions reduce to a linear recurrence whose characteristic polynomial has $\lambda_1(\varphi)$ as its largest root.}\\

This relationship between dynamical degree and confinement conditions was observed in \cite{justification}  and conjectured to hold  as long as $\varphi$ could be embedded in a sufficiently general family.
Based on this conjectured correspondence, the method of full deautonomisation by singularity confinement was introduced in \cite{fulldeauto, redemption} as a way of determining the degree growth of an autonomous mapping with the singularity confinement property based only on calculations on the level of equations.
This has been shown to work even in cases where other methods \cite{rod, SCasIC, rodexpress} based on singularity confinement on the level of equations fail.

Phrased on the level of mappings, this method involves embedding $\varphi$ into a family $\varphi_{\mathcal{A}}$ of mappings which is taken heuristically as large as it can be in order for deautonomisation  preserving the structure of singularities exactly to be possible.
Then in deautonomising $\varphi$ using $\varphi_{\mathcal{A}}$, if one finds an eigenvalue of magnitude greater than $1$ from the confinement conditions one concludes that this is $\lambda_1(\varphi)$ and the mapping is not integrable. 

This unexpected correspondence between parameter evolution and degree growth requires explanation, and it is this problem that we address in this paper.

\subsection{Background}


\subsubsection{Singularity confinement}
Singularity confinement was proposed in \cite{singularityconfinement} as a discrete counterpart to the Painlev\'e property of ordinary differential equations.
The Painlev\'e property requires that if a solution develops a movable singularity, i.e. one whose location in the complex plane depends on initial conditions, then this does not induce multivaluedness which would obstruct the meaningful definition of a general solution for the equation. 
In the discrete case, one is interested in singularities of mappings that disappear after finitely many iterations. 
Singularity confinement has been used as a heuristic integrability criterion and to isolate non-autonomous versions of autonomous integrable systems. 
In particular, many discrete Painlev\'e equations were found \cite{RGH91,GR93} by applying the method of deautonomisation by singularity confinement to so-called Quispel-Roberts-Thompson (QRT) mappings \cite{QRT1,QRT2}. 

It is important to note, however, that singularity confinement is not a necessary criterion for integrability, in that there exist mappings which are integrable through linearisation and which do not have the singularity confinement property. 
An analogous fact is true in the continuous case in that there are linearisable differential equations which do not have the Painlev\'e property \cite{integrablewithoutpainleve}.
But most importantly, it is also not a sufficient condition since there exist examples of mappings in which singularity confinement holds but the dynamics are chaotic, the most famous example being that due to Hietarinta and Viallet \cite{HVequation}. 

In the years since the realisation that singularity confinement is not sufficient for integrability, thankfully, a complete classification of birational mappings of the plane was obtained, according to their degree growth and whether they have the singularity confinement property, both in the autonomous case by Diller and Favre \cite{dillerfavre} and in the non-autonomous case by one of the authors \cite{MASE}. 
In particular if a non-autonomous mapping with the singularity confinement property has unbounded degree growth, then it must either fit into the framework of discrete Painlev\'e equations given by Sakai \cite{SAKAI2001} or be non-integrable.

We will illustrate singularity confinement first on the level of the equation itself, as it originally was in \cite{singularityconfinement}.
Consider an autonomous second-order discrete equation
\begin{equation} \label{discreteeqgeneral}
(x_{n+1}, y_{n+1} ) = ( f(x_n,y_n), g(x_n,y_n) ),
\end{equation}
where $f$, $g$ are rational functions of their arguments with coefficients independent of $n$, such that the mapping $(x_n,y_n) \mapsto ( f(x_n,y_n), g(x_n,y_n) )$ is birational. 
Note that scalar three-point equations $x_{n+1}= F( x_n, x_{n-1})$ with $F$ homographic in $x_{n-1}$ can be brought to this form by setting $y_n = x_{n-1}$.
By taking $(x_n,y_n)$ and $(x_{n+1},y_{n+1})$ as affine charts on two copies of $\p^2$ the equation \eqref{discreteeqgeneral} defines a birational mapping 
$
\varphi : \p^2 \dashrightarrow \p^2.
$
It is sometimes convenient to instead recast the equation \eqref{discreteeqgeneral} as a birational mapping of $\p^1 \times \p^1$ by taking the variables as affine coordinates in the two $\p^1$ factors, in which case the resulting mapping will be birationally conjugate to $\varphi$ via the standard elementary birational transformation $\p^2 \dashrightarrow \p^1 \times \p^1$.

A singularity is constituted by an iteration from some initial condition that results in a loss of a degree of freedom, in the sense that the iterate takes a value at which the inverse mapping is undefined. 
As already mentioned above, `singularity confinement' refers to the situation where finitely many further iterations lead the singularity to disappear, with the lost degree of freedom being recovered.

\begin{example}[singularity confinement] \label{ex:singconf}
Consider the discrete equation
\begin{equation} \label{0inf0threepoint}
x_{n+1} + x_{n-1} = \frac{1}{x_n^{2}},
\end{equation}
which corresponds to a QRT mapping belonging to Class I in the classification of \cite{RCGOautolimits} (see also \cite{canonicalforms} for a derivation and \cite[Appendix]{ancillary} for an easily consultable list). 


Regarding the iterates $x_n$ as elements of $\p^1$, if while iterating the equation \eqref{0inf0threepoint} some iterate $x_n$ takes the value 0 while $x_{n-1} = u$ is finite, then we have $x_{n+1} = \infty$ irrespective of $u$.
This constitutes a loss of degree of freedom in the sense that the one-parameter family of pairs $(x_{n-1},x_n) = (u, 0)$ all lead to the same value $(x_{n},x_{n+1}) = (0, \infty)$. 
In terms of the mapping $(x_{n-1},x_{n}) \mapsto (x_{n}, x_{n+1})$ of $\p^1 \times \p^1$, the curve given by $x_n=0$ is contracted to a point, i.e. blown down.

Computing further we find that $x_{n+2} = 0$ but then, crucially, see that the value of the next iterate $x_{n+3} = \frac{1}{x_{n+2}^2} - x_{n+1}$ is not defined, 
with the point $(x_{n+1},x_{n+2}) = (\infty,0)$ on $\p^1 \times \p^1$ being an indeterminacy of the mapping.
The way in which singularity confinement is verified, is to take $x_n = \varepsilon$ for some parameter $\varepsilon$ which leads to 
\begin{equation}
x_{n-1} = u, \quad x_n = \varepsilon, \quad x_{n+1} = \frac{1}{\varepsilon^2} -u, \quad x_{n+2} = - \varepsilon + O(\varepsilon^{2}), \quad x_{n+3} =u+ O(\varepsilon).
\end{equation}
Then, defining the values of iterates as the limits of the above as $\varepsilon \to 0$, we have
\begin{equation}
x_{n-1} = u, \quad x_n = 0, \quad x_{n+1} = \infty^2, \quad x_{n+2} = 0, \quad x_{n+3} =u,
\end{equation}
in which $x_{n+1} = \infty^2$ means the iterate is of order $1/\varepsilon^{2}$ as above.
In singularity confinement parlance we say that the lost degree of freedom is recovered through the reappearance of $u$ in the value of $x_{n+3}$, and that the singularity is confined. The results of the above calculation are then summarised by saying that equation \eqref{0inf0threepoint} admits the (confined) singularity pattern $\{0, \infty^2, 0\}$.
\end{example}

\subsubsection{Deautonomisation by singularity confinement}

The procedure of deautonomisation by singularity confinement is based on the idea of obtaining a non-autonomous version of $\varphi$ which, roughly speaking, has the same structure of singularities. 
We again illustrate this first on the level of equations but will make this precise on the level of mappings in Section \ref{section2}.

\begin{example}[deautonomisation by singularity confinement] \label{ex:deauto}
Consider the following generalisation of equation \eqref{0inf0threepoint}:
\begin{equation} \label{dP1general}
x_{n+1} + x_{n-1} = \frac{a_n  + b_n x_n }{x_n^{2}},
\end{equation}
where the sequences $a_n, b_n \in \C$ are to be determined by requiring that the singularity structure of \eqref{dP1general} is the same as equation \eqref{0inf0threepoint}.
As above, some iterate $x_n$ taking the value 0 constitutes a singularity of the mapping defined by \eqref{dP1general}, so we let $x_n = \varepsilon$ and $x_{n-1} = u$, and compute
\begin{equation}
\begin{aligned}
x_{n-1} &= u, \\
x_n &= \varepsilon, \\
x_{n+1} &= \frac{a_n}{\varepsilon^2} + \frac{b_n}{\varepsilon} -u, \\
x_{n+2} &= - \varepsilon + \frac{b_{n+1}}{a_n} \varepsilon^2 + O(\varepsilon^{3}), \\
x_{n+3} &=\frac{a_{n+2} - a_n}{\varepsilon^2} - \left(b_{n+2} -  \frac{2 a_{n+2} b_{n+1}}{a_n} + b_n \right)\frac{1}{\varepsilon} + F(u) + O(\varepsilon),
\end{aligned}
\end{equation}
where $F(u)$ is a known function of $u$ as well as parameters.
In order for the singularity to be confined in the same way as for equation \eqref{0inf0threepoint}, i.e. with the same pattern $\left\{0, \infty^2, 0\right\}$, the singular part of the expansion of $x_{n+3}$ in $\varepsilon$ must vanish. 
This requires 
\begin{equation} \label{confconddP1}
a_{n+2} = a_n
\end{equation}
for all $n$, as well as
\begin{equation} \label{confconddP}
b_{n+2} - 2 b_{n+1} + b_n = 0.
\end{equation}
Performing these calculations again with the confinement conditions \eqref{confconddP1} and \eqref{confconddP} imposed we find  
\begin{equation}
\begin{aligned}
x_{n-1} &= u, \\
x_n &= \varepsilon, \\
x_{n+1} &= \frac{a_n}{\varepsilon^2} + \frac{b_n}{\varepsilon} -u, \\
x_{n+2} &= - \varepsilon + \frac{b_{n+1}}{a_n} \varepsilon^2 + O(\varepsilon^{3}), \\
x_{n+3} &= u + \frac{b_{n+1}(b_{n+1} - b_n)}{a_n} + O(\varepsilon),
\end{aligned}
\end{equation}
and the singularity pattern $\left\{0, \infty^2, 0\right\}$ from the autonomous case \eqref{0inf0threepoint} persists.
Solving the recurrences \eqref{confconddP1} and \eqref{confconddP} leads to 
\begin{equation} \label{dP1}
x_{n+1} + x_{n-1} =  \frac{\alpha+ \beta n}{x_n} + \frac{(-1)^n \gamma + \delta}{x_n^2}  ,
\end{equation}
for some constants $\alpha, \beta, \gamma, \delta$, which reduces to a known example of a discrete Painlev\'e equation when $\beta \neq 0$ \cite{RG96}.
Note that the characteristic polynomial of the linear recurrence \eqref{confconddP} is 
$t^2 - 2 t + 1$,
which has no roots larger than one.
\end{example}

\subsubsection{Dynamical degree reflected in parameter evolution}

The first observations of the dynamical degree being reflected in confinement conditions can be traced back to late-confining versions of discrete Painlev\'e equations  \cite{HVdP1}, where singularities are confined but not at the first opportunity. 
For non-autonomous mappings obtained through this kind of late confinement, as opposed to genuine discrete Painlev\'e equations, the dependence of the coefficients in the mapping on the independent variable $n$ ceases to be additive, multiplicative or elliptic.
Rather, the coefficients satisfy a recurrence relation whose characteristic polynomial has the dynamical degree of the mapping as a root.
Outside the context of late confinement, the same kind of appearance of the dynamical degree in the parameter evolution has been observed in many examples \cite{justification, redemption, fulldeauto}.
However sometimes it is not sufficient to simply allow parameters present in the original mapping to evolve and it is necessary to generalise the mapping further, often by adding extra terms which do not affect the singularity structure, as illustrated in the following example.

\begin{example}[full deautonomisation by singularity confinement] \label{ex:fulldeauto}
Consider the mapping $\varphi$ defined by the discrete equation
\begin{equation} \label{0inf2k0}
x_{n+1} + x_{n-1} = \frac{1}{x_n^{2m}},
\end{equation}
for some integer $m \geq 1$, which can be shown \cite{fulldeauto,rodexpress} to have dynamical degree 
\begin{equation} \label{ex13dynamicaldegree}
\lambda_1(\varphi) = m + \sqrt{m^2-1}.
\end{equation}
Similar analysis to that in example \ref{ex:singconf} shows that this mapping admits the confined singularity pattern $\left\{ 0, \infty^{2m}, 0 \right\}$:
\begin{equation}
x_{n-1} = u, \quad x_n = \varepsilon, \quad x_{n+1} = \frac{1}{\varepsilon^{2m}} -u, \quad x_{n+2} = - \varepsilon + O(\varepsilon^{4m^2}), \quad x_{n+3} =u+ O(\varepsilon).
\end{equation}
 If we proceed to deautonomise, by taking
\begin{equation} \label{0inf2k0deauto}
x_{n+1} + x_{n-1} = \frac{a_n}{x_n^{2m}},
\end{equation}
for $a_n\in \C$ to be determined, we see by similar calculations to those for the $m=1$ case in example \ref{ex:deauto} that
\begin{equation} \label{confcalcs0inf0}
\begin{aligned}
x_{n-1} &= u, \\
x_n &= \varepsilon, \\
x_{n+1} &= \frac{a_n}{\varepsilon^{2m}} -u, \\
x_{n+2} &= - \varepsilon + O(\varepsilon^2), \\
x_{n+3} &=\frac{a_{n+2} - a_n}{\varepsilon^{2m}} + u + O(\varepsilon).
\end{aligned}
\end{equation}
In order for the pattern $\left\{ 0, \infty^{2m}, 0 \right\}$ to persist we must require the confinement condition $a_{n+2} = a_n$. 
Note that this condition is the same regardless of the value of $m$ and, most importantly, does not detect the dynamical degree \eqref{ex13dynamicaldegree} of the mapping.
However, in performing these calculations one notices that adding terms in $\frac{1}{x_n^i}$, $i=1,\dots,2m-1$ to the right-hand side of \eqref{0inf2k0} does not affect the leading behaviours $x_{n+1}\sim \frac{1}{\varepsilon^{2m}}$, $x_{n+2} \sim \varepsilon$, and there is still an opportunity for the singularity to be confined by ensuring that the singular part of the expansion of $x_{n+3}$ in $\varepsilon$ vanishes.
Considering the equation
\begin{equation} \label{0inf0fulldeautoeq}
x_{n+1} + x_{n-1} = \frac{a_n^{(2m)}}{x_n^{2m}} + \sum_{i=2}^{2m-1} \frac{a_n^{(i)}}{x_n^{i}} + \frac{b_n}{ x_{n}},
\end{equation}
with $a_n^{(i)} \in \C$, $i=2,\dots,2m$ and $b_n \in \C$, performing calculations along the same lines as in \eqref{confcalcs0inf0} gives
\begin{equation} \label{confcalcs0inf0general}
\begin{aligned}
x_{n-1} &= u, \\
x_n &= \varepsilon, \\
x_{n+1} &= \frac{a_n^{(2m)}}{\varepsilon^{2m}} + \sum_{i=2}^{2m-1} \frac{a_n^{(i)}}{\varepsilon^{i}} + \frac{b_n}{ \varepsilon} -u, \\
x_{n+2} &= - \varepsilon + \frac{b_{n+1}}{a_n^{(2m)}} \varepsilon^{2m} + O(\varepsilon^{2m+1}), \\
x_{n+3} &= \sum_{i=2}^{2m} \frac{(-1)^i a_{n+2}^{(i)} - a_n^{(i)}}{\varepsilon^{i}}  - \left(b_{n+2} - 2m \frac{a_{n+2}^{(2m)}}{a_{n}^{(2m)}} b_{n+1} + b_n\right) \frac{1}{\varepsilon} +G(u) + O(\varepsilon),
\end{aligned}
\end{equation}
where $G$ is a known rational function of $u$ and parameters.
The confinement conditions on $a^{(i)}_n$ and $b_n$ required for this mapping to admit the singularity pattern $\left\{ 0, \infty^{2m}, 0 \right\}$ are then 
\begin{equation}\label{ichiichigo}
a_{n+2}^{(i)} = (-1)^i a_{n}^{(i)}, ~i=2,\dots,2m, \qquad b_{n+2} - 2 m b_{n+1} + b_{n} = 0.
\end{equation}
With the conditions \eqref{ichiichigo} imposed the singularity is confined, with the lost degree of freedom reappearing via $x_{n+3} = u - \frac{a_n^{(2m-1)}}{a_n^{(2m)}} b_{n+1} + O(\varepsilon).$
The characteristic polynomial for the recurrence satisfied by $b_n$ in \eqref{ichiichigo} is 
\begin{equation}
t^2 - 2 m t + 1,
\end{equation}
whose largest root coincides with the dynamical degree \eqref{ex13dynamicaldegree}.
\end{example}
The term `full deautonomisation by singularity confinement', rather than simply deautonomisation by singularity confinement, was introduced in \cite{fulldeauto, redemption} to refer to the kind of process in example \ref{ex:fulldeauto} by which equation \eqref{0inf0fulldeautoeq} with parameter evolution \eqref{ichiichigo} is obtained by generalising \eqref{0inf2k0} heuristically as much as possible, while still being able to obtain nontrivial parameter evolution such that the singularity patterns of the original autonomous mapping are preserved.

\subsubsection{Spaces of initial conditions for autonomous and non-autonomous mappings of the plane}

The remarkable connection that is observed here between degree growth and parameter evolution is what we aim to explain in this paper.
Our explanation is based on singularity confinement being equivalent to the existence of a space of initial conditions.
For an autonomous system of discrete equations defining a birational mapping $\varphi : \p^2 \dashrightarrow \p^2$, this means that there exists a smooth rational surface $X$ and a birational map $\pi : X \dashrightarrow \p^2$ such that $\tilde{\varphi} \defeq \pi^{-1} \circ \varphi \circ \pi$ is an automorphism of $X$. 

In this situation we have the induced actions on the Picard group $\Pic(X) \cong H^2(X,\Z)$ by pushforward $\varphi_*$ and pullback $\varphi^*$, and $\lambda_1(\varphi)$ can be computed as the spectral radius of $\varphi_*$. 
We remark that to calculate $\lambda_1(\varphi)$ via the induced action on $\Pic(X)$ it is sufficient for $\varphi$ to be conjugate to an algebraically stable birational self-map of some $X$ rather than an automorphism, which goes back to the work of Sibony \cite{SIBONY}.
Diller and Favre showed \cite{dillerfavre} that such $X$ can be constructed in an algorithmic way for any birational mapping $\varphi : \p^2 \dashrightarrow \p^2$, and the problem of calculating the dynamical degree of an autonomous mapping is settled by their procedure, as long as one is prepared to perform the blow-ups involved. 
Practically speaking, in the case of a mapping with the singularity confinement property this amounts to constructing the surface $X$ of which $\varphi$ becomes an automorphism.

In the non-autonomous setting with a sequence of birational mappings $\varphi_n : \p^2 \dashrightarrow \p^2$, one can similarly define the notion of a space of initial conditions. 
This is done in terms of a sequence of rational surfaces $X_n$ and a sequence of birational mappings $\pi_n : X_n \dashrightarrow \p^2$ under which $\varphi_n$ conjugate to isomorphisms $\tilde{\varphi}_n \defeq \pi_{n+1}^{-1} \circ \varphi_n \circ \pi_n : X_n \rightarrow X_{n+1}$, subject to some additional conditions (see Definitions \ref{defspaceofICsNA} and \ref{spaceofICsdefblowdowns}).
These additional conditions are necessary since allowing arbitrary $n$-dependence in the mappings and surfaces leads to pathological examples and prevents the development of a general theory, as discussed in detail in \cite{MASE}.

If $\varphi_n$ has a space of initial conditions in this sense, then all $\Pic(X_n)$ can be identified in a canonical way into a single lattice $\Pic(\X)$ (see Definition \ref{def:picardlattice}) of which the pushforwards $\varphi_n$ induce an automorphism $\Phi$ (see Definition \ref{def:actiononpicardlattice}).
In this case the dynamical degree $\lambda_1(\varphi_n)$ is well-defined as a limit of degrees of compositions $\varphi^{(k)} \defeq \varphi_{n+k-1} \circ \cdots \circ \varphi_{n+1} \circ \varphi_n$ and is computed as the spectral radius of $\Phi$ (see Lemma \ref{lem:takenawa}, due to Takenawa \cite{takenawaDDS}).

\subsubsection{The period map for generalised Halphen surfaces}

In the case of discrete Painlev\'e equations, there is a known bridge between parameter evolution and the induced automorphism $\Phi$ of $\Pic(\X)$.
This is part of the theory of rational surfaces forming the spaces of initial conditions for discrete Painlev\'e equations due to Sakai \cite{SAKAI2001}, and is based on a period map defined on these surfaces, whose construction in special cases goes back to the work of Looijenga \cite{looijenga}.

The space of initial conditions for a discrete Painlev\'e equation is formed of generalised Halphen surfaces (see Definition \ref{def:generalisedhalphensurface}).
In particular each surface has an effective anticanonical divisor $D = \sum_i m_i D_i$, with the classes of the irreducible components $D_i$ spanning a sublattice $Q \subset \Pic(\X)$, which is isomorphic to the root lattice of an affine root system.
The orthogonal complement $Q^{\perp}$ is the root lattice of another affine root system.
Sakai's geometric framework for discrete Painlev\'e equations is based on a classification of the surfaces forming their spaces of initial conditions according to the types of the root systems associated with $Q$ and $Q^{\perp}$ as well as $\operatorname{rank} H_1(D_{\operatorname{red}} , \Z)$, where $D_{\operatorname{red}} = \sum_i D_i$, which determines whether the associated equations are of additive, multiplicative or elliptic type.

Then for a Sakai surface $X$ one can define a period map in terms of a 2-form $\omega$ on $X$ with $-\operatorname{div} \omega = D$ as a function
$\chi : Q^{\perp} \rightarrow \C,$
(see Definition \ref{def:periodmap}), whose construction in the case when $D$ is a cycle of rational curves is due to Looijenga \cite{looijenga}.
Sakai showed \cite{SAKAI2001} that there is a basis of simple roots $\alpha_i$ for $Q^{\perp}$ (which we call a root basis for $Q^{\perp}$, see Definition \ref{rootbasisdef}) which correspond to differences of exceptional curves of the first kind such that $\chi(\alpha_i)$ can be computed in terms of the Poincar\'e residue of $\omega$.
Sakai further realised the set of isomorphism classes of surfaces of each type as a family of surfaces $X_{\boldsymbol{a}}$, with $\boldsymbol{a}\in \mathcal{A}$ in a parameter space consisting of the values of the period map on the root basis and, in cases when the surface type is in addition associated with a differential Painlev\'e equation, an `extra parameter' which becomes the independent variable for the differential equation.

Sakai constructed symmetries of surfaces of each type as an action of an extended affine Weyl group associated to $Q^{\perp}$ on the family $X_{\mathcal{A}}$. 
That is, for each $w$ in the symmetry group, there is an isomorphism $w : X_{\boldsymbol{a}} \rightarrow X_{w.\boldsymbol{a}}$, where the action on parameters $ \boldsymbol{a} \mapsto w. \boldsymbol{a}$ corresponds in a special way to the induced action on $\Pic(\X)$ by pushforward/pullback (see Lemma \ref{rootvarsparamevolution}). 
In particular, the action of a translation element of the symmetry group on $\Pic(\X)$ restricts to $Q^{\perp}$ in the form $\alpha_i \mapsto \alpha_i + c_i \delta$, where $\delta$ is the null root corresponding to the anticanonical divisor class and $c_i \in \Z$ are determined by the weight associated to the translation.
The action of such a translation element on $X_{\mathcal{A}}$ gives a discrete Painlev\'e equation with parameter evolution $a_i \mapsto a_i + c_i d$, where $d \in \C$ corresponds to the value of the period map on the anticanonical class. 
Depending on $\operatorname{rank} H_1(D_{\operatorname{red}} , \Z)$ this gives rise to either additive, multiplicative or elliptic evolution of the coefficients of the discrete Painlev\'e equation when written in coordinates.

\subsection{Outline of main results}

Our explanation of the observed correspondence between degree growth and parameter evolution is based on an extension of the bridge, outlined above, between parameter evolution and dynamics on $\Pic(\X)$ offered by the period map construction.
We extend this beyond the case of discrete Painlev\'e equations to the case of non-integrable mappings which possess spaces of initial conditions. 

This requires, amongst other things, the fact that the surfaces forming the space of initial conditions for a non-integrable mapping have effective anticanonical divisors.
In fact the question of the existence of effective anticanonical and, more generally, antipluricanonical divisors (here meaning a divisor representing some negative multiple of the canonical class) appears frequently in the study of automorphisms of algebraic surfaces and their dynamics \cite{bedfordkim, dillercremona, dillerjacksonsommese, dillerlin, gizatullinGsurfaces, harbourne, mcmullen, zhang}. 
In particular, a conjecture of Gizatullin (communicated to Harbourne by Dolgachev and Looijenga \cite{harbourne}) stated that if a rational surface $X$ has an automorphism $\varphi$ of infinite order, then $X$ must have an effective antipluricanonical divisor. 
In all counterexamples to this conjecture presented by Harbourne \cite{harbourne}, $\varphi$ has zero entropy and the pair $(X,\varphi)$ is non-minimal in the sense that there exists a birational morphism $\pi : X \rightarrow X'$ such that $\pi \circ \varphi \circ \pi^{-1}$ is an automorphism of $X'$ but $\pi$ is not an isomorphism.
McMullen then posed in \cite[Section 12]{mcmullen} a refined version of the conjecture, by asking whether for minimal $(X,\varphi)$, $\varphi$ being of infinite order is sufficient to guarantee that $X$ has an effective antipluricanonical divisor.
For certain classes of $\varphi$ coming from group actions the conjecture is true \cite{zhang}, but these also have zero entropy.
A negative answer to McMullen's question was given by Bedford and Kim \cite{bedfordkim} who constructed a family of mappings which give automorphisms of infinite order with nonzero entropy, on rational surfaces which admit no effective antipluricanonical divisor.

Our first result towards extending the period map construction to the case of spaces of initial conditions for non-integrable maps is Theorem \ref{effectivenesstheorem}. 
This states that in a large class of non-autonomous mappings, namely those that preserve rational 2-forms, even in non-integrable cases the space of initial conditions can be chosen so that the surfaces have effective anticanonical divisors.
This requires us to formulate a notion of minimality of a space of initial conditions in the non-autonomous case (see Definitions \ref{def:contraction} and \ref{def:minimal}), analogous to that mentioned above for the pair $(X,\varphi)$.
Theorem \ref{effectivenesstheorem} then states that if a non-integrable non-autonomous mapping $\varphi_n$ has a minimal space of initial conditions with surfaces $X_n$, and $\varphi_n$ preserves a sequence of rational 2-forms $\omega_n$ then each $X_n$ has an effective anticanonical divisor, under the assumption of some conditions on $n$-dependence in the same spirit as those in the definition of a space of initial conditions for a non-autonomous mapping.

In contrast to the finite list of surface types in the Sakai classification \cite{SAKAI2001}, there are infinitely many possible types of effective anticanonical divisors associated with non-integrable mappings. 
Another main result of this paper is Theorem \ref{theorem:classification} which classifies these based on the homology of $D_{\operatorname{red}}$ and the local intersection configuration of irreducible components.
In particular we see several of the possibilities realised in the examples in sections \ref{section5} and \ref{section6} and in the follow-up to the present paper \cite{part2}. 
In Remark \ref{rem:realisationintheliterature} we list some other types of anticanonical divisors for non-integrable mappings in the classification that have appeared in the literature.

Therefore for a non-integrable mapping with space of initial conditions formed of surfaces with effective anticanonical divisors, one has a sublattice $Q \subset \Pic(\X)$ given by the span of the classes of irreducible components of the anticanonical divisor. 
The next part of the period map construction we extend to the non-integrable case is related to calculation of the values of the period map on $Q^{\perp} \subset \Pic(\X)$ in terms of the parameters in the mapping.


Suppose $\varphi_n$ has a space of initial conditions formed of surfaces $X_n$ with effective anticanonical divisors given by 2-forms $\omega_n$. 
If the surfaces belong to some family as $X_n = X_{\boldsymbol{a}_n}$, with parameter space $\mathcal{A} \ni \boldsymbol{a}_n$, the period map associated to $\omega_n$ can then be defined in a way that gives a function $\chi$ on $Q^{\perp}$ whose values depend on $\mathcal{A}$ (see Definition \ref{def:periodmap}).
If $\chi$ is injective as a map from $Q^{\perp}$ to the space of functions of $\mathcal{A}$, then the parameter evolution $\boldsymbol{a}_n \mapsto \boldsymbol{a}_{n+1}$ will have eigenvalue $\lambda_1(\varphi_n)$. 
This follows from Lemma \ref{rootvarsparamevolution} relating actions on $\Pic(\X)$ and on parameters, as well as the fact that $Q^{\perp}$ contains in its $\R$-span the dominant eigenvector associated with the eigenvalue $\lambda_1(\varphi_n)$ of $\Phi$.

For parameter evolution to detect the dynamical degree via the period mapping, it is sufficient to relax the requirement of injectivity on the whole of $Q^{\perp}$ to a subset which is spanned by elements $\beta_i$ expressible as differences of exceptional curves of the first kind and which contains the dominant eigenvector of $\Phi$. 
We call this a sufficient subset (see Definition \ref{sufficientsubset}).
Our results on the dynamical degree being reflected in parameter evolution can then be summarised as follows.

\begin{theorem} \label{bigsummarytheorem}
Let $\varphi_n$ be a non-autonomous map admitting a space of initial conditions $(X_n,\tilde{\varphi}_n)$.
Suppose that $\tilde{\varphi}_n$ preserves a 2-form $\omega_n$ with an effective anticanonical divisor $D$ on $X_n$. 
If $\Pic(\X)$ admits a sufficient subset spanned by $\beta_i$, and the period map $\chi$ associated to $\omega_n$ is injective on $\operatorname{span} \left\{ \beta_i \right\}$, then $\lambda_1(\varphi_n)$ is an eigenvalue of the induced action of $\varphi_n$ on parameters $\chi(\operatorname{span} \left\{ \beta_i \right\})$.
\end{theorem}

The next main result is Theorem \ref{sufficientsubsetprop}, which states that for a minimal space of initial conditions for a non-integrable non-autonomous mapping, there always exists a sufficient subset of $\Pic(\X)$.

To use these results to study an autonomous map along the lines of full deautonomisation by singularity confinement, we construct a deautonomisation and check that it is sufficient in the sense that $\chi$ is injective on a sufficient subset (see Definition \ref{sufficientdeautonomisationdef}).
This is based on the following, which follows from Theorem \ref{bigsummarytheorem} as well the Definition \ref{deautodef} of deautonomisation. 
\begin{corollary} \label{bigsummarycorollary}
Let $\varphi : \p^2 \dashrightarrow \p^2$ be an autonomous mapping with a space of initial conditions and unbounded degree growth. 
Suppose there exists a rational 2-form $\omega$ on $\p^2$ such that $\varphi^* \omega =   c \, \omega$ for some $c \in \C^*$. 
Then if $\varphi_n = \varphi_{\boldsymbol{a}_n}$ is a sufficient deautonomisation of $\varphi$, the dynamical degree $\lambda_1(\varphi)$ is an eigenvalue of the parameter evolution $\boldsymbol{a}_n\mapsto \boldsymbol{a}_{n+1}$. 
\end{corollary}

The condition that $\varphi$ preserves a 2-form is not a very strong one given the assumption that $\varphi$ has a space of initial conditions, as evidenced by the fact that the only known counterexample is that due to Bedford and Kim \cite{bedfordkim}.
However verifying the injectivity hypothesis is delicate; this is the part of the construction that fails in, for example, the case of the deautonomisation \eqref{0inf2k0deauto} in example \ref{ex:fulldeauto}.
In a follow-up paper \cite{part2} we give sufficient deautonomisations for some more examples, and show that sufficient deautonomisations exist for all maps coming from scalar three-point equations of the the same form as Classes $\rm{I}$-$\rm{VI}$ in the classification of QRT maps in \cite{RCGOautolimits,canonicalforms}. 
This classification is of equations written in the form $F(x_{n+1},x_n,x_{n-1}) = G(x_n)$ for rational functions $F$ and $G$, according to the form that $F$ can take in order for the corresponding mapping to be of QRT type, but here we also consider possibly non-integrable examples of the same form. 
This leads us to conjecture the existence of sufficient deautonomisations in general for mappings with a space of initial conditions with effective anticanonical divisor. 

We also remark that finding the correspondence between the values of the period map and the coefficients in the mapping written in coordinates is sometimes highly nontrivial.
This is the case in, for example, elliptic discrete Painlev\'e equations or QRT-type mappings of Class $\rm{VIII}$. 
We must emphasise, though, that in these cases there are ways to overcome these difficulties and obtain an appropriate parametrisation of coefficients and therefore a linearisation of the confinement conditions such that full deautonomisation by singularity confinement still works \cite{ancillary}.

\subsection{Outline of the paper}

Section \ref{section2} contains the setup relating to spaces of initial conditions for non-autonomous mappings, including induced dynamics on the Picard lattice, degree growth, deautonomisation and our notion of minimality. 
In Section \ref{section3} we introduce the period map and the notions of root basis and sufficient subset, and prove our results relating to existence of effective anticanonical divisors for non-integrable mappings and construction of sufficient subsets. 
We also introduce the notion of sufficient deautonomisation on the level of the space of initial conditions. 
Section \ref{section4} contains the classification of anticanonical divisors for non-integrable mappings,
while Sections \ref{section5} and \ref{section6} illustrate the preceding results in two families of examples.

\section{Spaces of initial conditions for birational mappings of the plane} \label{section2}
In this Section we give the necessary setup in order to give a rigorous formulation and explanation of the mechanism by which the dynamical degree  is reflected in confinement conditions.
We first recall the definition of the singularity confinement property for a second-order mapping, in the sense of the existence of a space of initial conditions for the mapping, in both the autonomous and non-autonomous cases.
We then review the known classifications of autonomous and non-autonomous mappings according to their degree growth and according to whether they have a space of initial conditions.

Most of the setup is given in \cite[Appendix A]{MASE}, but we recall the relevant parts here in order to make the present paper self-contained. 
Throughout the paper we work over $\C$ and use the following notation and conventions. 
\begin{itemize}
\item $X$ : a surface. All surfaces in this paper will be smooth, irreducible, projective, rational surfaces. 
\item $X \dashrightarrow Y$ : a birational mapping between surfaces.
\item $X \rightarrow Y$ : a morphism of surfaces.
\item $\Div(X)$ : the group of divisors on $X$.
\item $\sim$ : linear equivalence of divisors.
\item $\Pic(X)$ : the Picard group of $X$, which is isomorphic to $\Div(X)/\sim$. We often use the same symbol for an element of $\Pic(X)$ as for its corresponding linear equivalence class of divisors and we write the group operation on $\Pic(X)$ additively.
\item $[D]$ : the linear equivalence class of $D \in \Div(X)$. 
We will generally use calligraphic script to indicate classes of divisors, i.e. $\mathcal{D} = [D]$. 
\item $| \F |$ : the linear system of $\F \in \Pic(X)$.
\item $\Pic^+(X) = \left\{ \F \in \Pic(X) ~|~  \F = [F] \text{ where } F \in \Div(X) \text{ is effective}\right\}$: the set of effective classes.
\item $D_1 \cdot D_2$ : the intersection number of the divisors $D_1$ and $D_2$. 
This is well-defined on linear equivalence classes and we use the same notation for the intersection number $\mathcal{D}_1 \cdot \mathcal{D}_2$ of $\mathcal{D}_1$, $\mathcal{D}_2 \in \Pic(X)$.
\item $(D)^2 = (\mathcal{D})^2$ : the self-intersection number of $D \in \Div(X)$ or $\mathcal{D} = [D] \in \Pic(X)$.
\item $\K_{X} \in \Pic(X)$ : the canonical bundle or canonical divisor class of $X$.
\item $\div (\omega) \in \Div(X)$ : the divisor of a rational 2-form $\omega$ on $X$, so $[\div (\omega)] = \K_X$.
\item $\mathcal{O}_{\p^n}(1) \in \Pic(\p^n)$ : the twisting sheaf corresponding to the class of a hyperplane in $\p^n$. 
\item $\Pic_{\Q}(X) = \Pic(X) \otimes \Q$, $\Pic_{\R}(X) = \Pic(X) \otimes \R$, $\Pic_{\C}(X) = \Pic(X) \otimes \C$.
\item $\rho(X)$ : the Picard number of $X$, which is equal to $\operatorname{rank} \Pic(X)$ since $X$ is rational.
\item $H^i(X,D)$ : the $i$-th cohomology group of the divisor $D$ on $X$.
\item $h^{i}(D) = h^i(X,D) = \operatorname{dim} H^i(X,D)$.
\item $b_1 = \operatorname{rank} H^1(X;\Z)$ : the first Betti number.
\item $h^{p,q} = \operatorname{dim} H^{q}(X, \Omega^{p})$ : the Hodge numbers.
\item $g_a(C) = \operatorname{dim}H^1(C, \mathcal{O}_{C})$ : the arithmetic genus of an irreducible curve $C$.
\end{itemize}
We will also use the genus formula $g_a(C) = 1 + \frac{1}{2} [C] \cdot \left( [C] + \mathcal{K}_{X} \right) $ for a curve $C$ on a surface $X$ \cite[Chapter V.1, Exercise 1.3]{hartshorne}.
By an exceptional curve of the first kind, we mean a rational curve of self-intersection -1 on a surface $X$. 
In particular for an exceptional curve $C$ of the first kind on $X$, by the genus formula we have $[C] \cdot \mathcal{K}_X = -1$. 
Such a curve is contractible to a point by a birational morphism $\pi : X \rightarrow X'$.

\subsection{Spaces of initial conditions}

The singularity confinement property of a second-order system of difference equations, defining a birational mapping of the plane, is equivalent to the existence of a space of initial conditions. 
The terminology `space of initial conditions' or `space of initial values' originates in Okamoto's work \cite{OKAMOTO1979} using blow-ups to construct augmented phase spaces on which Hamiltonian forms of the Painlev\'e differential equations are regularised. 
Solutions of these equations can be globally defined by analytic continuation from any point in the space, hence the term space of initial conditions.
The same terminology continues to be used in the discrete case to describe rational surfaces on which birational mappings are regularised in an analogous way. In the autonomous case this means that the birational mapping becomes an automorphism of the surface. 
In the non-autonomous case the $n$-dependence in the coefficients means that the equation becomes a family of mappings, for which a space of initial conditions is a family of rational surfaces between which the mappings become isomorphisms.

\subsubsection{Autonomous case}

Consider an autonomous second-order discrete equation
\begin{equation}
(x_{n+1}, y_{n+1} ) = ( f(x_n,y_n), g(x_n,y_n) ),
\end{equation}
where $f$, $g$ are rational functions of their arguments with coefficients independent of $n$, such that the mapping $(x_n,y_n) \mapsto ( f(x_n,y_n), g(x_n,y_n) )$ is birational. 
By taking $(x_n,y_n)$ as an affine chart we can extend this to a birational mapping 
\begin{equation}
\varphi : \p^2 \dashrightarrow \p^2.
\end{equation}

\begin{definition}[space of initial conditions for an autonomous mapping] \label{defspaceofICsAuto}
We say that an autonomous equation defining a mapping $\varphi$ as above has a space of initial conditions if there exists a surface $X$ and a birational map $\pi : X \dashrightarrow \p^2$ such that $\tilde{\varphi} \defeq \pi^{-1} \circ \varphi \circ \pi$ is an automorphism of $X$:
\begin{center}
\begin{tikzcd}
 X \arrow[r, "\tilde{\varphi}"]  \arrow[d, "\pi", dashed] 		& X  \arrow[d, "\pi", dashed]  \\
\p^2 \arrow[r, dashed, "\varphi"] 					&\p^2 .
\end{tikzcd}
\end{center}
\end{definition}

The automorphism $\tilde{\varphi}$ then induces, by pullback and pushforward, linear transformations of $\Pic(X)$ which we denote by $\tilde{\varphi}^*$ and $\tilde{\varphi}_* = (\tilde{\varphi}^*)^{-1}$ respectively.
These maps are lattice automorphisms of $\Pic(X)$, i.e. $\Z$-module automorphisms which preserve the symmetric bilinear form given by the intersection product, they fix $\K_X$ and preserve effectiveness of divisor classes.

\subsubsection{Non-autonomous case}
In the non-autonomous case, consider a second-order discrete system $(x_{n+1}, y_{n+1} ) = ( f_n(x_n,y_n), g_n(x_n,y_n) ),
$ defining a sequence of mappings
\begin{equation}
\varphi_n : \p^2 \dashrightarrow \p^2,
\end{equation}
 indexed by $n\in \Z$.
In addition to a sequence of surfaces $X_n$ on which $\varphi_n$ conjugates to an isomorphism $X_n \rightarrow X_{n+1}$, for each value of $n$, in the non-autonomous case it is necessary to include in the definition of a space of initial conditions some extra conditions in order to formulate a general theory.
This is because allowing arbitrary $n$-dependence in the mappings and surfaces leads to many pathological examples and prevents us from making any meaningful statements in general; see \cite{MASE} for a detailed discussion. 
In order to state the definition we will work with for non-autonomous mappings, we require the following.

\begin{definition}[basic rational surface]
A basic rational surface is a rational surface that admits $\p^2$ as a minimal model, i.e. there exists a birational morphism from $X$ to $\p^2$.
Such a surface can be obtained from $\p^2$ by a finite number of blow-ups, without the need for blow-downs.
\end{definition}

\begin{definition}[geometric basis] \label{geometricbasisdef}
Let $X$ be a basic rational surface and $(e^{(0)}, \dots, e^{(r)})$ a $\Z$-basis for $\Pic(X)$.
We call $(e^{(0)}, \dots, e^{(r)})$ a geometric basis if there exists a composition of blow-ups $\pi = \pi^{(1)} \circ \cdots \circ \pi^{(r)} : X \rightarrow \p^2$ such that $e^{(0)} = \pi^* \mathcal{O}_{\p^2}(1)$ and $e^{(i)}$ is the class of the total transform (under $\pi^{(i+1)} \circ \cdots \circ \pi^{(r)}$) of the exceptional curve of $\pi^{(i)}$ for $i=1,\dots,r$. 
In this case we say that $(e^{(0)}, \dots, e^{(r)})$ is the geometric basis of $\Pic(X)$ corresponding to $\pi$.
\end{definition}

\begin{definition}[space of initial conditions for a non-autonomous mapping; without blow- downs] \label{defspaceofICsNA}
A space of initial conditions for a non-autonomous mapping $\varphi_n : \p^2 \dashrightarrow \p^2$ consists of sequences $(X_n)_n$ and $(\pi_n)_n$, where each $X_n$ is a basic rational surface and $\pi_n$ is a birational morphism $\pi_n : X_n \rightarrow \p^2$ written as a sequence of blow-ups $\pi_n = \pi_n^{(1)} \circ \cdots \circ \pi_n^{(r)}$, such that the following conditions hold:
\begin{itemize}
\item The mappings $\varphi_n$ become isomorphisms $\tilde{\varphi}_n \defeq \pi_{n+1}^{-1} \circ \varphi_n \circ \pi_n$ as in Figure \ref{fig:spaceofICs1}.
\item Let $e_n = (e_n^{(0)}, \dots, e_n^{(r)})$ be the geometric basis for $\Pic(X_n)$ corresponding to $\pi_n$. Then the matrices of $\tilde{\varphi}_n^{*} : \Pic(X_{n+1}) \rightarrow \Pic(X_n)$ with respect to these bases do not depend on $n$.
\item The set of effective classes in $\Pic(X_n)$ in terms of the basis $(e_n^{(0)},\dots,e_n^{(r)}) $ does not depend on $n$, i.e. if $\sum_{i} a^{(i)} e_n^{(i)} \in \Pic(X_n)$ is effective, then  $\sum_{i} a^{(i)} e_k^{(i)} \in \Pic(X_k)$ is effective for any $k$.
\end{itemize}

\end{definition}

\begin{figure}[htb]
\begin{center}
\begin{tikzcd}[sep=1cm]
\cdots  \arrow[r] & X_{n-1}  \arrow[r, "\tilde{\varphi}_{n-1}"] \arrow[d, "\pi_{n-1}"] & X_{n} \arrow[r, "\tilde{\varphi}_{n}"]  \arrow[d, "\pi_{n}"] & X_{n+1} \arrow[r]  \arrow[d, "\pi_{n+1}"] &\cdots \\
\cdots \arrow[r,dashed] &\p^2 \arrow[r, "\varphi_{n-1}",dashed] &\p^2 \arrow[r, "\varphi_n",dashed] &\p^2 \arrow[r,dashed] & \cdots ,
\end{tikzcd}
\end{center}
 	\caption{Space of initial conditions for a non-autonomous mapping; without blow-downs}
	\label{fig:spaceofICs1}
\end{figure}

While all the examples we will study in this paper will have spaces of initial conditions as defined above, some of our results will rely on the notion of minimality of a space of initial conditions, which requires us to give a more general definition allowing blow-downs in the construction of $X_n$ from $\p^2$.

\begin{definition}[space of initial conditions for a non-autonomous mapping; allowing blowdowns] \label{spaceofICsdefblowdowns}

A space of initial conditions for a non-autonomous mapping $\varphi_n$, allowing blow-downs, consists of sequences $(X_n)_n$, $(Y_n)_n$, $(\pi_n)_n$, $(\rho_n)_n$, and $(\sigma_n)_n$, with 
\begin{itemize}
\item $X_n$ a basic  rational surface with birational morphism $\pi_n : X_n \rightarrow \p^2$, written as a sequence of blow-ups as  $\pi_n = \pi_n^{(r)} \circ \cdots \circ \pi_n^{(1)}$,
\item $Y_n$ a basic rational surface with birational morphism $\rho_n : Y_n \rightarrow \p^2$, written as a sequence of blow-ups as $\rho_n = \rho_n^{(s)} \circ \cdots \circ \rho_n^{(1)}$,
\item $\sigma_n : Y_n \rightarrow X_n$ a birational morphism written as a sequence of blow-ups as $\sigma_n = \sigma_n^{(s-r)} \circ \cdots \circ \sigma_n^{(1)}$,
\end{itemize}
such that the following conditions hold:
\begin{itemize}
\item The mappings $\tilde{\varphi}_n \defeq \sigma_{n+1} \circ \rho_{n+1}^{-1} \circ \varphi_n \circ \rho_n \circ \sigma_n^{-1}: X_n \rightarrow X_{n+1}$ are isomorphisms as in Figure \ref{fig:spaceofICs2}.
\item Let $f_n = (f_n^{(0)}, \dots, f_n^{(s)})$ be the geometric basis for $\Pic(Y_n)$ corresponding to $\rho_n$. 
Then the set of effective classes in $\Pic(Y_n)$ in terms of the basis $f_n$ does not depend on $n$.
\item Let $e_n = (e_n^{(0)}, \dots, e_n^{(r)})$ be the geometric basis for $\Pic(X_n)$ corresponding to $\pi_n$. 
Then the set of effective classes in $\Pic(X_n)$ in terms of the basis $e_n$ does not depend on $n$.
\item Let $F_n^{(j)} \in \Pic(Y_n)$ be the class of (the total transform of) the exceptional curve contracted by $\sigma_n^{(j)}$, for $j=1,\dots, s-r$. 
Then the expression for $F_n^{(j)}$ in terms of the basis $f_n$ does not depend on $n$.
\item The expression for $\sigma_n^{*} (e_n^{(i)})$ in the basis $f_n$ for $\Pic(Y_n)$ does not depend on $n$. 
\item The matrices of $\tilde{\varphi}_n^{*} : \Pic(X_{n+1}) \rightarrow \Pic(X_n)$ with respect to the bases $e_{n+1}$ and $e_{n}$ do not depend on $n$.

\end{itemize}

\end{definition} 

\begin{figure}[htb]
\begin{center}
\begin{tikzcd}[row sep=.5cm,column sep=.5cm]
 \cdots \arrow[r,dashed] & Y_{n-1} \arrow[rr,  dashed] \arrow[dr, "\sigma_{n-1}" ] \arrow[dd, "\rho_{n-1}", near end] &     & Y_n \arrow[dr, "\sigma_{n}"] \arrow[dd, "\rho_{n}", near end] \arrow[rr,  dashed] &    &  Y_{n+1} \arrow[dr, "\sigma_{n+1}"] \arrow[dd, "\rho_{n+1}", near end] \arrow[r, dashed] & \cdots  &   \\
 & \cdots \arrow[r]  & X_{n-1}\arrow[rr, "\tilde{\varphi}_{n-1}",near start] \arrow[dd,"\pi_{n-1}", near end] &   &  X_{n} \arrow[dd, "\pi_n", near end] \arrow[rr, near start, "\tilde{\varphi}_{n}"]  &  &  X_{n+1} \arrow[dd,"\pi_{n+1} ", near end] \arrow[r] & \cdots \\
\cdots \arrow[r,dashed] & \p^2 \arrow[rr, "\varphi_{n-1}", near start, dashed] &     & \p^2 \arrow[rr, "\varphi_{n}", near start, dashed] &   &  \p^2 \arrow[r, dashed] & \cdots &  \\
 & \cdots \arrow[r,dashed]  & \p^2 \arrow[rr, dashed] &   &  \p^2 \arrow[rr, dashed]&  & \p^2 \arrow[r,dashed] & \cdots
\end{tikzcd}
\end{center}
 	\caption{Space of initial conditions for a non-autonomous mapping; allowing blow-downs}
	\label{fig:spaceofICs2}
\end{figure}

\begin{remark}
The key feature of the more general Definition \ref{spaceofICsdefblowdowns} of a space of initial conditions is that $\sigma_n$ may contract curves on $Y_n$ which $\rho_n$ does not, in which case the mapping $\pi_{n+1} \circ \tilde{\varphi}_n \circ \pi_{n}^{-1} : \p^2 \dashrightarrow \p^2$ will not coincide with $\varphi_n$, but rather be conjugate to it by a birational coordinate change on $\p^2$.  
\end{remark}

\begin{definition}[Picard lattice for a non-autonomous mapping with a space of initial conditions] \label{def:picardlattice}
Let $\varphi_n$ have a space of initial conditions in the sense of Definition \ref{spaceofICsdefblowdowns}.
Using the geometric bases $e_n$ we may identify all $\Pic(X_n)$ with a single $\Z$-module
\begin{equation}
\Pic(\X) = \sum_{i=0}^r \Z e^{(i)},
\end{equation}
via 
\begin{equation}
\begin{gathered}
\iota_n : \Pic(\X) \longrightarrow \Pic(X_n), \\
\iota_n(e^{(i)}) = e_n^{(i)},
\end{gathered}
\end{equation} 
The intersection form on $\Pic(X_n)$ can be used to equip $\Pic(\X)$ with the symmetric bilinear form defined by
\begin{equation}
 e^{(i)} \cdot e^{(j)}  = 
\begin{cases}
+1 &(i=j=0) \\
-1 &(i=j\neq0) \\
0   & (i\neq j).
\end{cases}
\end{equation}
We will refer to $\Pic(\X)$ equipped with this bilinear form as the Picard lattice of the space of initial conditions $X_n$. 
\end{definition}

The canonical class in each $\Pic(X_n)$ is identified with the element
\begin{equation}
\K_{\X} \defeq \iota_n^{-1}(\K_{X_n})  = - 3 e^{(0)} + e^{(1)} + \dots + e^{(r)} \in \Pic(\X).
\end{equation}

\begin{definition}[action on the Picard lattice of a non-autonomous mapping with a space of initial conditions]
\label{def:actiononpicardlattice}
Denote the map induced by $\varphi_n^*$, for any $n$, by $\Phi : \Pic(\X) \rightarrow \Pic(\X)$, which is well-defined because of the conditions in Definition \ref{defspaceofICsNA}:
\begin{center}
\begin{tikzcd}[sep=1cm]
& \Pic(\X) \arrow[d, "\iota_{n}"] 	& \Pic(\X) \arrow[d, "\iota_{n+1}"] \arrow[l, "\Phi"] \\
& \Pic(X_n) 				& \Pic(X_{n+1}) \arrow[l, "\tilde{\varphi}_n^*"] .
\end{tikzcd}
\end{center}
We call $\Phi$ the action of the non-autonomous mapping $\varphi_n$ on the Picard lattice $\Pic(\X)$ . 
\end{definition}
Then $\Phi$ is a lattice automorphism of $\Pic(\X)$ that fixes the element $\K_{\X}$ and preserves effectiveness, in the sense that if $\F \in \Pic(\X)$ is such that $\iota_{n+1}(\F)$ is an effective class on $X_{n+1}$, then $\iota_{n} \circ \Phi (\F)$ is an effective class on $X_{n}$. 

\subsubsection{Minimal spaces of initial conditions}

Allowing for blow-downs in the definition of a space of initial conditions means we can introduce the notion of a minimal space of initial conditions for a given non-autonomous mapping.

\begin{definition}[contraction of a space of initial conditions] \label{def:contraction}
Let $\varphi_n : \p^2 \dashrightarrow \p^2$ be a non-autonomous mapping with a space of initial conditions allowing blow-downs
as in Definition \ref{spaceofICsdefblowdowns}, consisting of sequences $(X_n)_n$, $(Y_n)_n$, $(\pi_n)_n$, $(\rho_n)_n$, and $(\sigma_n)_n$.
A contraction of the space of initial conditions consists of sequences $(X_n')_n$ and $(\mu_n)_n$ with 
\begin{itemize}
\item $X_n'$ a basic rational surface
\item a birational morphism $\mu_n : X_n \rightarrow X_n'$ which is a sequence of blow-ups $\mu_n = \mu_n^{(1)}\circ \cdots \circ \mu_n^{(r')} $ 
\end{itemize}
such that the following conditions hold:
\begin{itemize}
\item 
For every $n$, the map $\tilde{\varphi}_n' \defeq \sigma_{n+1}^{-1} \circ \tilde{\varphi}_n \circ \sigma_n : X_n' \rightarrow X_{n+1}'$
is an isomorphism.
\item 
Let $e_n = (e_n^{(0)},\dots , e_n^{(r)})$ be the geometric basis for $\Pic(X_n)$ corresponding to $\pi_n$, and let $E_n^{(k)} \in \Pic(X_n)$ be the class of (the total transform of) the exceptional divisor contracted by $\mu_n^{(k)}$ for $k=1,\dots,r'$. 
Then the expression for $E_n^{(k)}$ in terms of the basis $e_n$ is independent of $n$.
\item There exist birational morphisms $\pi_n' : X_n' \rightarrow \p^2$ such that the geometric bases corresponding to $\pi_n'$ satisfy the requirements for $X_n'$ to form a space of initial conditions for $\varphi_n$ in the sense of Definition \ref{spaceofICsdefblowdowns}.
\end{itemize}
\end{definition}

We give a diagram showing the birational mappings involved in contraction of a space of initial conditions in Figure \ref{fig:minimisationdiagram}.
 
\begin{figure}[htb]
\begin{center}
\begin{tikzcd}[row sep=.5cm,column sep=.4cm]
& \cdots \arrow[r, dashed] & Y_{n-1} \arrow[rrr,dashed] \arrow[ddd,"\rho_{n-1}"] \arrow[dr, "\sigma_{n-1}"]& &  & Y_{n} \arrow[rrr,dashed] \arrow[ddd,"\rho_{n}"] \arrow[dr, "\sigma_{n}"]& & & Y_{n+1} \arrow[r, dashed] \arrow[ddd,"\rho_{n+1}"] \arrow[dr, "\sigma_{n+1}"]& \cdots & & & &\\
& & \cdots \arrow[r] & X_{n-1} \arrow[rrr]  \arrow[ddd,"\pi_{n-1}"] \arrow[dr, "\mu_{n-1}"]& &  & X_{n} \arrow[rrr]  \arrow[ddd,"\pi_{n}"] \arrow[dr, "\mu_{n}"]& & & X_{n+1} \arrow[r]  \arrow[ddd,"\pi_{n+1}"] \arrow[dr, "\mu_{n+1}"]& \cdots & & & \\
& & & \cdots \arrow[r] & X_{n-1}' \arrow[rrr] \arrow[ddd,"\pi_{n-1}'"]& &  & X_n' \arrow[rrr]  \arrow[ddd,"\pi_{n}'"]& & & X_{n+1}' \arrow[r]  \arrow[ddd,"\pi_{n+1}'"]& \cdots & & \\
& \cdots \arrow[r, dashed] & \p^2 \arrow[rrr,dashed, "\varphi_{n-1}"] & &  & \p^2 \arrow[rrr,dashed, "\varphi_{n}"] & & & \p^2 \arrow[r, dashed] & \cdots & & & &\\
& & \cdots \arrow[r, dashed] & \p^2 \arrow[rrr,dashed] & &  & \p^2 \arrow[rrr,dashed] & & & \p^2 \arrow[r, dashed] & \cdots & & & \\
& & & \cdots \arrow[r, dashed] & \p^2 \arrow[rrr,dashed] & &  & \p^2 \arrow[rrr,dashed] & & & \p^2 \arrow[r, dashed] & \cdots & & 
\end{tikzcd}
\end{center}
 	\caption{Contraction of a space of initial conditions}
	\label{fig:minimisationdiagram}
\end{figure}

\begin{definition}[minimal space of initial conditions] \label{def:minimal}
For a space of initial conditions $X_n$, we say that a contraction is trivial if the maps $\mu_n$ are isomorphisms. If there does not exist any nontrivial contraction, then we say the space of initial conditions is minimal.
\end{definition}

Whether or not contraction of a space of initial conditions is possible is characterised by the following.

\begin{lemma}[\cite{MASE}] \label{minimisationlemma}
If there exist a finite set of elements of $\Pic(\X)$ corresponding to mutually disjoint exceptional curves of the first kind which are permuted by $\Phi$, then blowing down these curves provides a nontrivial contraction of the space of initial conditions.
\end{lemma}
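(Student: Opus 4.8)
The plan is to rephrase the hypotheses as intersection-theoretic data on the Picard lattice, contract equivariantly via Castelnuovo's criterion, and then check that each $n$-independence condition in the definitions of minimisation and of a space of initial conditions is inherited. Write $S = \{C^{(1)}, \dots, C^{(p)}\} \subset \Pic(\X)$ for the given set, so that for every $n$ the class $\iota_n(C^{(j)})$ is represented by an honest exceptional curve of the first kind $R_n^{(j)} \subset X_n$, the $R_n^{(j)}$ are pairwise disjoint, and $\Phi$ restricts to a permutation $\tau$ of $S$. First I would record the purely lattice-theoretic content, namely $C^{(i)}\cdot C^{(j)} = -\delta_{ij}$ and $\Phi(C^{(j)}) = C^{(\tau(j))}$. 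Since $\tilde{\varphi}_n^* = \iota_n \circ \Phi \circ \iota_{n+1}^{-1}$ and $\tilde{\varphi}_n$ is an isomorphism, the pullback of the irreducible class $[R_{n+1}^{(j)}]$ is the irreducible class $[R_n^{(\tau(j))}]$, so $\tilde{\varphi}_n$ carries the exceptional locus $\bigcup_j R_n^{(j)}$ of the forthcoming contraction isomorphically onto $\bigcup_j R_{n+1}^{(j)}$.

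Next, for each $n$ I would apply Castelnuovo's contractibility criterion to the disjoint $(-1)$-curves $R_n^{(1)}, \dots, R_n^{(p)}$, producing a birational morphism $\mu_n : X_n \to X_n'$ onto a smooth rational surface that contracts exactly these curves. Because the curves are disjoint I can factor $\mu_n$ into $p$ single blow-downs in a fixed order, and disjointness makes each total transform coincide with the corresponding strict transform, so the class $E_n^{(k)}$ of the $k$-th exceptional divisor equals $\iota_n(C^{(k)})$; expressed in the geometric basis $e_n$ this is the fixed integer vector $C^{(k)} \in \Pic(\X)$, independent of $n$, which is the second condition of minimisation. The isomorphism condition then follows formally: by the first paragraph the morphism $\mu_{n+1} \circ \tilde{\varphi}_n : X_n \to X_{n+1}'$ contracts precisely $\bigcup_j R_n^{(j)}$, so by the universal property of the contraction it factors uniquely as $\tilde{\varphi}_n' \circ \mu_n$ for a morphism $\tilde{\varphi}_n' : X_n' \to X_{n+1}'$, and the same argument applied to $\tilde{\varphi}_n^{-1}$ supplies a two-sided inverse, making $\tilde{\varphi}_n'$ an isomorphism. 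Nontriviality is immediate, since $S \neq \emptyset$ forces each $\mu_n$ to contract at least one curve.

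The remaining step, which I expect to be the main obstacle, is to verify that the $X_n'$ again form a space of initial conditions for $\varphi_n$ in the sense of Definition \ref{spaceofICsdefblowdowns}. The combinatorial half is routine: $\mu_n^*$ identifies $\Pic(X_n')$ with the sublattice $S^\perp \subset \Pic(X_n)$, on which the intersection form, the effective cone, and the induced map $\tilde{\varphi}_n'^*$ are all computed from $n$-independent data on $\Pic(\X)$ (effectiveness is $n$-independent by hypothesis and is preserved by $\Phi$, which also fixes $\K_{\X}$), so the matrix of $\tilde{\varphi}_n'^*$ and the effective classes are just the restrictions of $\Phi$ and of the effective cone to $S^\perp$ and do not depend on $n$. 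The genuinely delicate point is the existence of a birational morphism $\pi_n' : X_n' \to \p^2$ realising a geometric basis, i.e.\ that $X_n'$ is again \emph{basic}. This is not automatic from Castelnuovo: contracting a single $(-1)$-curve on a basic surface can destroy basicness — for instance contracting the class $e^{(0)} - e^{(1)} - e^{(2)}$ on the blow-up of $\p^2$ at two points yields $\p^1 \times \p^1$, which admits no birational morphism to $\p^2$. To rule this out uniformly in $n$, I would exploit the global structure rather than argue abstractly: the contracted classes form a $\Phi$-invariant orthogonal system inside a lattice already carrying the geometric basis $e_n$ coming from $\pi_n : X_n \to \p^2$, and I would use the $\Phi$-invariance of the effective cone together with the fixed anticanonical class $-\K_{\X}$ to exhibit a geometric basis of $S^\perp$, hence the desired morphism $\pi_n'$ (taking $Y_n' = Y_n$ and $\sigma_n' = \mu_n \circ \sigma_n$ for the remaining data). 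Establishing this basicness compatibly across all $n$ is the crux of the argument.
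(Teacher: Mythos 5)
First, a point of comparison: the paper does not actually prove this lemma --- it is imported from \cite{MASE} with only a citation --- so there is no in-paper argument to measure yours against step by step; your proposal has to stand on its own.

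Your overall strategy is the right one and most of it is solid: Castelnuovo contraction of the pairwise disjoint $(-1)$-curves, the observation that disjointness makes the total transforms equal to the strict transforms so that the classes $E_n^{(k)}$ are the fixed vectors $C^{(k)}\in\Pic(\X)$, the universal-property argument descending $\tilde{\varphi}_n$ to an isomorphism $\tilde{\varphi}_n'$, the identification of $\mu_n^*\Pic(X_n')$ with the orthogonal complement of the contracted classes, and the inheritance of the $n$-independence conditions from data already fixed on $\Pic(\X)$. (One small omission: to go from $\tilde{\varphi}_n^*[R_{n+1}^{(j)}]=[R_n^{(\tau(j))}]$ to the equality of the actual curves you should invoke the uniqueness of the effective representative of a $(-1)$-class.) The genuine gap is exactly the step you flag as the crux and then do not carry out: you never establish that the contracted surfaces $X_n'$ are basic, i.e. admit birational morphisms $\pi_n':X_n'\to\p^2$, nor that these can be chosen compatibly in $n$ so that the resulting geometric bases satisfy Definition \ref{spaceofICsdefblowdowns}. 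Your proposed route --- ``use the $\Phi$-invariance of the effective cone together with the fixed class $\K_{\X}$ to exhibit a geometric basis of $S^{\perp}$'' --- is a statement of intent rather than an argument, and it is not clear it can be made to work in that form: invariance of the effective cone does not by itself produce a geometric basis, as your own $\p^1\times\p^1$ example illustrates. The standard way to close this, and the one this paper itself points to in Remark \ref{basicsurfacejustification}, is Nagata's theorem: in the situations where the lemma is actually used ($\Phi$ of infinite order, types (b) and (c) of Proposition \ref{classificationpropnonauto}, in particular Theorem \ref{effectivenesstheorem}), the surfaces carry infinitely many exceptional curves of the first kind, and such a rational surface is necessarily basic. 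Even then one must fix the $\pi_n'$ uniformly across $n$, e.g.\ by transporting a single choice of geometric basis through the identifications $\iota_n$, using that the contracted classes and the effective cone are $n$-independent by hypothesis. Without an argument of this kind the proof is incomplete at precisely the point where you stop.
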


\subsection{Degree growth and entropy}

Hereafter, the definition of integrability of a birational mapping of the plane we will work with is that its algebraic entropy vanishes, or equivalently that its dynamical degree is 1.
For a birational map $f$ from $\p^2$ to itself, written in homogeneous coordinates as
\begin{equation}
\begin{gathered}
f : \p^2 \dashrightarrow \p^2, \\
(x_0 : x_1: x_2) \mapsto \left( f_0(x_0,x_1,x_2) :  f_1(x_0,x_1,x_2) : f_2(x_0,x_1,x_2) \right),
\end{gathered}
\end{equation}
where $f_0,f_1,f_2$ are homogeneous polynomials of the same degree with no common factors, its degree $\deg f$ is defined as the common degree of the polynomials $f_i$.

\begin{definition}[dynamical degree, algebraic entropy {\cite{algebraicentropy}}]
For a possibly non-autonomous mapping $\varphi_n : \p^2 \dashrightarrow \p^2$, the numbers
\begin{equation}
\lambda_1(\varphi_n) = \lim_{k \to \infty}  \left(\deg \varphi^{(k)}\right)^{1/k}, \quad \text{ and } \quad \log \lambda_1(\varphi_n) = \lim_{k \to \infty} \frac{1}{k} \log \left(\deg \varphi^{(k)}\right),
\end{equation}
where $\varphi^{(k)} = \varphi_{n+k-1} \circ \cdots \circ \varphi_{n+1} \circ \varphi_n$, are called the dynamical degree and algebraic entropy \cite{algebraicentropy} respectively, in the case the limits exist.
Note that in the autonomous case $\varphi_n = \varphi$ and $\varphi^{(k)} = \varphi^k$, and the definition of $\lambda_1(\varphi)$ coincides with that of the (first) dynamical degree of a dominant rational self-map of $\p^2$, which is invariant under birational conjugation of $\varphi$.
For a non-autonomous mapping with a space of initial conditions as defined in this paper, the limit $\lambda_1(\varphi_n)$ is well-defined.


\end{definition}

For an autonomous mapping, the dynamical degree $\lambda_1(\varphi)$ is bounded from above by the algebraic degree of $\varphi$, with equality if and only if $\varphi$ is algebraically stable as a birational self-map of $\p^2$ \cite{blanccantat}, i.e. the pushforward action on $\Pic(\p^2)$ satisfies $(\varphi_*)^k = (\varphi^k)_*$ for all integers $k>0$.
This is equivalent to the non-existence of any point of indeterminacy $p \in \mathcal{I}(\varphi^{-1})$ of the inverse map $\varphi^{-1}$ such that $\varphi^{k}(p) \in \mathcal{I}(\varphi)$ for some $k\geq 0$ \cite{dillerfavre}. 
In \cite{jaume} this is referred to as the absence of degree-lowering hypersurfaces.

The possible rates of degree growth of a mapping with a space of initial conditions are provided by the following classifications.

\subsubsection{Autonomous case}

In \cite{dillerfavre}, Diller and Favre classified bimeromorphic self-maps of compact surfaces according to the rate of growth of their spectral radii under iteration. 
Restating some of their results in the language of birational maps, as studied in discrete integrable systems, yields the following.

\begin{proposition}[Diller-Favre \cite{dillerfavre}] \label{classificationpropauto}
An autonomous birational mapping $\varphi$ of the plane belongs to one of the following types:
\begin{description}

\item[type (1)] 
The degree of $\varphi^k$ is bounded.\\
This type of mapping has a space of initial conditions.

\item[type (2)] 
The degree of $\varphi^k$ grows linearly.\\
This type of mapping does not have a space of initial conditions.

\item[type (3)] 
The degree of $\varphi^k$ grows quadratically.\\
This type of mapping has a space of initial conditions, which is a rational elliptic surface.

\item[type (4)] 
The degree of $\varphi^k$ grows exponentially and $\varphi$ has a space of initial conditions.

\item[type (5)] 
The degree of $\varphi^k$ grows exponentially and $\varphi$ does not have a space of initial conditions.

\end{description}

\end{proposition}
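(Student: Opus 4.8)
The plan is to follow the strategy of Diller and Favre \cite{dillerfavre}, whose central idea is to linearise the degree-growth problem by studying the induced action on the Picard lattice, after first removing the obstruction that pullback need not commute with iteration. First I would record that for a birational self-map of $\p^2$ one has $\deg \varphi^n = (\varphi^n)^*\mathcal{O}_{\p^2}(1) \cdot \mathcal{O}_{\p^2}(1)$, so the whole question reduces to understanding the growth of the operators $(\varphi^n)^*$ on Picard groups. The difficulty is that $(\varphi^n)^* \neq (\varphi^*)^n$ in general, because a curve contracted by $\varphi$ may be iterated onto a point of indeterminacy. The first step is therefore the algebraic stabilisation theorem of \cite{dillerfavre}: after finitely many blow-ups $\pi : \hat X \to \p^2$ the lifted map $\hat\varphi : \hat X \dashrightarrow \hat X$ is algebraically stable, meaning no curve contracted by $\hat\varphi$ is eventually mapped to an indeterminacy point, and on such a model $(\hat\varphi^*)^n = (\hat\varphi^n)^*$ on $\Pic(\hat X)$. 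Consequently the growth of $\deg \varphi^n$ coincides with that of $(\hat\varphi^*)^n$, and the dynamical degree equals the spectral radius of $\hat\varphi^*$, which one checks to be a birational invariant.

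The key structural input is the Hodge index theorem: the intersection form on $\Pic(\hat X)$ has Lorentzian signature $(1,\rho-1)$, and $\hat\varphi^*$ is an isometry of this lattice preserving the component of the positive cone $\{x : x^2 > 0,\ x \cdot \mathcal{H} > 0\}$ containing ample classes, where $\mathcal{H} = \pi^*\mathcal{O}_{\p^2}(1)$. Isometries of a Lorentzian lattice preserving such a component fall into exactly three classes, according to whether they fix a class inside the cone (\emph{elliptic}), only on its boundary (\emph{parabolic}), or have an eigenvalue exceeding one (\emph{hyperbolic}), and the growth of $(\hat\varphi^*)^n$ is read off from the Jordan form. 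In the elliptic case $\hat\varphi^*$ has finite order on a finite-index sublattice, giving bounded $\deg\varphi^n$; in the parabolic case the spectral radius is one but the unipotent part is nontrivial, and the signature forces the largest Jordan block attached to a modulus-one eigenvalue to have size at most three, so $\deg\varphi^n$ is of order $n$ or $n^2$; in the hyperbolic case the spectral radius is the dynamical degree $\lambda > 1$ and $\deg\varphi^n \sim \lambda^n$. This already yields the four admissible growth rates: bounded, linear, quadratic, exponential.

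It then remains to match each type with the assertions about spaces of initial conditions. In the elliptic case one shows that bounded degree growth permits $\varphi$ to be regularised to an automorphism, so type (1) has a space of initial conditions. In the parabolic case the fixed isotropic nef class $\theta$ (with $\theta^2 = 0$) defines an invariant fibration, and adjunction $2g-2 = \theta^2 + \theta\cdot\K_{\hat X} = \theta\cdot\K_{\hat X}$ together with the Jordan block size distinguishes the subcases: a size-three block is associated with genus $g=1$, an elliptic fibration on which $\varphi$ is regularised as an automorphism of a relatively minimal rational elliptic surface (type (3)), whereas a size-two block gives a rational ($g=0$) fibration which admits no regularisation to an automorphism (type (2)). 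Finally, in the hyperbolic case one splits according to whether $\varphi$ is conjugate to an automorphism: when it is (type (4)), the restriction of $\hat\varphi^*$ to $\K_{\hat X}^\perp$ has Salem characteristic polynomial, and since the minimal Salem number (Lehmer's) has degree ten the invariant Lorentzian lattice $\K_{\hat X}^\perp$ must have $\operatorname{rank} \K_{\hat X}^\perp \geq 10$, whence $\rho(\hat X) = \operatorname{rank}\Pic(\hat X) \geq 11 > 10$; otherwise one is in type (5).

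The main obstacle is twofold. The genuinely hard technical result underpinning the whole argument is the algebraic stabilisation theorem, whose proof requires controlling how the finitely many curves contracted by $\hat\varphi$ interact with indeterminacy points under iteration, and showing that this interaction can always be resolved by a \emph{finite} sequence of blow-ups. The second delicate point lies in the parabolic case: separating linear from quadratic growth needs the precise Jordan-block bound coming from the Lorentzian signature, and one must prove that the linear (rational-fibration) case genuinely admits \emph{no} regularisation to an automorphism of any rational surface, rather than merely failing on the model at hand, which is exactly where the claim that type (2) has no space of initial conditions resides.
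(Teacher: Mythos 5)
The paper offers no proof of this proposition: it is quoted directly from Diller--Favre, so the only meaningful comparison is between your sketch and their argument. Your overall architecture is indeed theirs: stabilise the map by blow-ups so that $(\hat\varphi^*)^n=(\hat\varphi^n)^*$, invoke the Hodge index theorem to view $\hat\varphi^*$ as an isometry of a lattice of signature $(1,\rho-1)$, split into elliptic/parabolic/hyperbolic cases, bound the Jordan blocks to get the bounded/linear/quadratic/exponential trichotomy, and distinguish types (2) and (3) by the genus of the invariant fibration attached to the isotropic nef class. All of that is correctly placed, and you rightly flag algebraic stability and the non-regularisability in the linear-growth case as the hard points.

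There is, however, a genuine gap in your treatment of type (4). You deduce $\operatorname{rank}\Pic(X)>10$ from the claim that the Salem factor of the characteristic polynomial of $\hat\varphi^*$ on $\K_{\hat X}^{\perp}$ has degree at least ten because ``the minimal Salem number (Lehmer's) has degree ten.'' This fails twice over. First, it is not a theorem that the smallest Salem number is Lehmer's: Salem numbers of degree $4$ exist (e.g.\ the largest root of $t^4-t^3-t^2-t+1$), and the minimality of Lehmer's number among all Salem numbers is open; McMullen's result bounds the \emph{spectral radius} of a hyperbolic lattice isometry from below by Lehmer's number, which says nothing about the degree of its minimal polynomial. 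Second, and decisively, the dynamical degree of a type (4) map need not be an algebraic integer of high degree at all: it can be a reciprocal quadratic integer, as in this very paper, where $\lambda=m+\sqrt{m^2-1}$ has minimal polynomial $t^2-2mt+1$ of degree two yet is realised on surfaces of Picard rank $6m+4$. So the degree of $\mu_\lambda$ gives no lower bound on $\rho(X)$. The correct argument is purely lattice-theoretic: since $X$ is obtained from $\p^2$ by $\rho(X)-1$ blow-ups, $\K_X^2=10-\rho(X)$. If $\rho(X)\le 9$ then $\K_X^2>0$, so by the Hodge index theorem $\K_X^{\perp}$ is negative definite and the $\hat\varphi^*$-invariant sublattice $\K_X^{\perp}$ admits only finitely many isometries, forcing spectral radius one. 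If $\rho(X)=10$ then $\K_X^2=0$, $\K_X^{\perp}$ is negative semi-definite with radical $\Z\K_X$, and the induced action on the negative definite quotient again has finite order, so the growth is at most quadratic. Hence exponential growth forces $\rho(X)>10$. This is exactly the mechanism behind Proposition \ref{picardnumberprop} in the paper (quoted from \cite{MASE}), and it is what should replace your Salem-degree step.
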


\subsubsection{Non-autonomous case}

If a mapping, autonomous or not, has a space of initial conditions then the dynamical degree is encoded in the induced dynamics on its Picard lattice.

\begin{lemma}[Takenawa {\cite{takenawaDDS}}] \label{lem:takenawa}
If a non-autonomous mapping $\varphi_n$ has a space of initial conditions, the degree of $\varphi^{(k)}$ is given by 
\begin{equation}
\deg \varphi^{(k)} = \Phi^k e^{(0)} \cdot e^{(0)},
\end{equation}
where $\Phi$ is the action of the mapping on the Picard lattice.
Then the dynamical degree of the mapping is given by the largest eigenvalue of $\Phi$, and the algebraic entropy by its logarithm.
Similarly if an autonomous mapping $\varphi$ has a space of initial conditions then the dynamical degree is given by the largest eigenvalue of $\varphi^*$.
\end{lemma}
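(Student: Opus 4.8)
The plan is to reduce the algebraic degree of each iterate to an intersection number computed on the surfaces forming the space of initial conditions, the decisive point being that, because each $\varphi_n$ lifts to an \emph{isomorphism} $\tilde{\varphi}_n : X_n \to X_{n+1}$, the pullback on Picard groups is genuinely functorial. For a general dominant rational self-map of $\p^2$ one only has the submultiplicative bound $\deg(g \circ f) \le \deg g \cdot \deg f$, with strict inequality precisely when indeterminacy points of one map lie on curves contracted by the other; it is exactly this degree-dropping phenomenon that a space of initial conditions removes. Concretely, writing $h := \mathcal{O}_{\p^2}(1)$ and $\varphi^{(k)} = \varphi_{n+k-1} \circ \cdots \circ \varphi_n$, the composition lifts to the isomorphism $\tilde{\varphi}^{(k)} := \tilde{\varphi}_{n+k-1} \circ \cdots \circ \tilde{\varphi}_n : X_n \to X_{n+k}$, and I would first establish the identity
\begin{equation}
\deg \varphi^{(k)} = \big( (\tilde{\varphi}^{(k)})^* \pi_{n+k}^* h \big) \cdot \pi_n^* h,
\end{equation}
the intersection being taken in $\Pic(X_n)$.

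The core lemma underlying this is that for any birational $f : \p^2 \dashrightarrow \p^2$ resolved by a blow-up $\pi : X \to \p^2$ (so that $g := f \circ \pi$ is a morphism), one has $\deg f = g^* h \cdot \pi^* h$. To see this I would use the resolution definition of the pullback $f^* h := \pi_* g^* h$ together with the projection formula, which gives $\deg f = f^* h \cdot h = \pi_*(g^* h) \cdot h = g^* h \cdot \pi^* h$, using $h \cdot h = 1$. It then remains to identify $\pi_* g^* h$ with $(\deg f)\, h$: the class $g^* h$ is that of the total transform of a general line $\ell \subset \p^2$, and pushing forward kills the $\pi$-exceptional components and returns the closure in $\p^2$ of $f^{-1}(\ell)$, which is the curve cut out by a generic linear combination of the defining polynomials of $f$ and hence has degree $\deg f$. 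Choosing $\ell$ and the line representing $\pi^* h$ in general position, so as to avoid the indeterminacy locus of $f$ and the finitely many centres blown up by $\pi$, makes all of these identifications honest, and Bezout on $\p^2$ then delivers the claimed degree.

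With this in hand the formula assembles quickly. Functoriality of pullback for isomorphisms gives $(\tilde{\varphi}^{(k)})^* = \tilde{\varphi}_n^* \circ \cdots \circ \tilde{\varphi}_{n+k-1}^*$, which under the identifications $\iota_n$ is precisely $\Phi^k$; since $\pi_n^* h = e_n^{(0)} = \iota_n(e^{(0)})$ and each $\iota_n$ is an isometry for the intersection form, the displayed identity becomes $\deg \varphi^{(k)} = \Phi^k e^{(0)} \cdot e^{(0)}$. The autonomous statement is then the special case $X_n = X$, $\Phi = \tilde{\varphi}^* = \varphi^*$.

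Finally, to pass from this formula to the dynamical degree I would invoke the structure of $\Phi$ as an isometry of the hyperbolic lattice $\Pic(\X)$, whose intersection form has signature $(1,r)$, preserving effectiveness and hence the forward component of the positive cone in $\Pic_\R(\X)$. By the Perron--Frobenius theory for such isometries the spectral radius $\lambda$ of $\Phi$ is a genuine real eigenvalue, realised by an eigenvector $v_+$ in the closure of this cone; since $e^{(0)} = \pi_n^* h$ is nef and big, lying in the interior of the cone with $(e^{(0)})^2 = 1$, the pairings $v_+ \cdot e^{(0)}$ and $v_- \cdot e^{(0)}$ with the $\lambda$- and $\lambda^{-1}$-eigenvectors are strictly positive, and one finds $\Phi^k e^{(0)} \cdot e^{(0)} \sim c\,\lambda^k$ for some $c > 0$ when $\lambda > 1$ (and polynomial growth with limit $1 = \lambda$ when $\lambda = 1$). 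Hence $\lim_k (\deg\varphi^{(k)})^{1/k} = \lambda$ and the algebraic entropy is $\log\lambda$. I expect the main obstacle to be the core lemma of the second paragraph: matching the polynomial-degree definition to the intersection-theoretic pullback requires care with the non-functoriality of $f^*$ for rational maps and with the general-position choices, and it is here that the hypothesis of a genuine space of initial conditions does the essential work.
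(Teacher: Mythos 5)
The paper does not actually prove this lemma --- it is quoted from Takenawa \cite{takenawaDDS} --- but your argument is correct and is essentially the standard proof from that reference (see also \cite{dillerfavre}): reduce $\deg f$ to the intersection number $g^*h\cdot\pi^*h$ on a resolution via the projection formula and a general-position/Bezout computation, use that the lifts $\tilde{\varphi}_n$ are isomorphisms so that pullback is genuinely functorial and corresponds to $\Phi^k$ under the identifications $\iota_n$, and extract the spectral radius from the isometry structure of the hyperbolic lattice, where the positivity of the pairings $v_{\pm}\cdot e^{(0)}$ (forced by the Hodge index inequality, since $e^{(0)}$ lies in the interior of the positive cone) gives the leading term $c\,\lambda^k$ with $c>0$. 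The one caveat is that the exact identity $\deg\varphi^{(k)}=\Phi^k e^{(0)}\cdot e^{(0)}$ as you derive it presumes a space of initial conditions in the sense of Definition~\ref{defspaceofICsNA}, so that $\pi_{n+k}\circ\tilde{\varphi}^{(k)}=\varphi^{(k)}\circ\pi_n$ on the nose; in the more general Definition~\ref{spaceofICsdefblowdowns} the regularised mapping is only birationally conjugate to $\varphi_n$, the identity holds for the conjugated map, and the degrees of the two can differ, though the dynamical degree and hence the main conclusion are unaffected.
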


Using the above fact and the work of Diller-Favre \cite{dillerfavre}, one of the authors obtained the following classification of non-autonomous mappings with spaces of initial conditions.

\begin{proposition}[{\cite{MASE}}]\label{classificationpropnonauto}
A non-autonomous mapping of the plane with a space of initial conditions belongs to one of the following three types according to the Jordan normal form of $\Phi$:
\begin{description}
\item[type (a)] 
$$
\Phi \sim \left(\begin{array}{ccc}\mu_1 &   &   \\  & \ddots &   \\  &   & \mu_{r+1}\end{array}\right),
$$
where $\mu_i$ are all roots of unity. In particular $\Phi^l = \operatorname{id}$ for some $l>0$ so the degree growth of the mapping is bounded. 
\item[type (b)]
$$
\Phi \sim \left(\begin{array}{cccccc}1 & 1 &   &   &   &   \\  & 1 & 1 &   &   &   \\  &   & 1 &   &   &   \\  &   &   & \mu_1 &   &   \\  &   &   &   & \ddots &   \\  &   &   &   &   & \mu_{r-2}\end{array}\right),
$$
where $\mu_i$ are all roots of unity. In this case the degree grows quadratically.
\item[type (c)] 
$$
\Phi \sim \left(\begin{array}{ccccc}\lambda &  &   &   &   \\  & \frac{1}{\lambda} &  &   &   \\  &   & \mu_1 &   &   \\  &   &   & \ddots &   \\  &   &   &   & \mu_{r-1}\end{array}\right),
$$
where $\lambda > 1$ is a reciprocal quadratic integer or a Salem number, and $|\mu_i|=1$. In this case the degree grows exponentially and the dynamical degree is $\lambda$.
\end{description}
\end{proposition}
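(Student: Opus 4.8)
The plan is to reduce the statement to the classification of integral isometries of the Lorentzian lattice $\Pic(\X)$ and then read off the degree growth in each case from Takenawa's formula $\deg \varphi^{(k)} = \Phi^k e^{(0)} \cdot e^{(0)}$. By the discussion preceding the statement, $\Phi$ is a $\Z$-linear automorphism of $\Pic(\X) \cong \Z^{r+1}$ preserving the intersection form, which has signature $(1,r)$, fixing $\K_{\X}$, and preserving effectiveness, hence mapping the forward light cone into itself. Such a $\Phi$ induces an isometry of real hyperbolic space $\mathbb{H}^{r}$, which is elliptic, parabolic, or hyperbolic, and these three classes will correspond to types (a), (b), (c). First I would record the spectral constraints. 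Since $\Phi$ is an isometry, its eigenvalues come in reciprocal pairs $\lambda \leftrightarrow 1/\lambda$, and the generalised eigenspaces satisfy $\langle V_\lambda, V_\mu\rangle = 0$ whenever $\lambda\mu \neq 1$. Consequently the real span of the generalised eigenspaces for eigenvalues off the unit circle splits into two totally isotropic halves of equal dimension $d$ interchanged by the reciprocal pairing, so the form there is split of signature $(d,d)$; as the ambient positive (Sylvester) index is $1$, we must have $d \leq 1$. Hence at most one eigenvalue has modulus greater than one, and if it exists it is a simple real $\lambda>1$ with reciprocal partner $1/\lambda$, all remaining eigenvalues lying on the unit circle. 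Preservation of the effective cone together with a Perron--Frobenius argument for cone-preserving maps ensures the spectral radius is attained by a positive real eigenvalue, so $\lambda$ is genuinely real and positive.

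The second ingredient concerns the unit-circle eigenvalues and the Jordan structure there. Each such eigenvalue is an algebraic integer whose Galois conjugates are again eigenvalues and hence also of modulus one, so by Kronecker's theorem each is a root of unity. To determine the Jordan blocks I would replace $\Phi$ by a suitable power (changing neither the growth type nor the block sizes, for nonzero eigenvalues) so that all unit-circle eigenvalues become $1$, and write the resulting unipotent operator as $\exp(\nu)$ with $\nu$ nilpotent; the isometry condition forces $\nu$ to be skew-adjoint for the intersection form. This is the crux, and the \emph{main obstacle}: by the classification of nilpotent skew-adjoint operators on a quadratic space, Jordan blocks of even size occur with even multiplicity, and each block contributes a definite positive index to the invariant form -- a pair of size-$2$ blocks contributes index $2$, a single size-$m$ odd block contributes index $\tfrac{m+1}{2}$ (for one sign) or $\tfrac{m-1}{2}$ (for the other). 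Matching against the global positive index $1$ forces either $\nu = 0$ or a single nontrivial Jordan block of size exactly $3$ (and, in that case, all remaining directions negative definite). In particular no $2\times 2$ block can occur, which is precisely what excludes linear degree growth for mappings with a space of initial conditions, consistent with Diller--Favre type (2) having no such space.

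Assembling these facts yields the trichotomy. If there is no eigenvalue off the unit circle and $\nu = 0$, then $\Phi$ is semisimple with root-of-unity eigenvalues, so $\Phi^l = \mathrm{id}$ for some $l$ and the degree is bounded: type (a). If there is no eigenvalue off the circle but $\nu \neq 0$, then $\Phi$ has a single $3\times 3$ unipotent block over eigenvalue $1$, whose bottom class is the fixed isotropic anticanonical class; then $\|\Phi^k\|$ grows like $k^2$ and the degree is quadratic: type (b). If there is an eigenvalue $\lambda > 1$, the growth is governed by $\lambda^k$ and the dynamical degree is $\lambda$: type (c). In each case I would confirm that the leading coefficient of $\deg\varphi^{(k)} = \Phi^k e^{(0)}\cdot e^{(0)}$ is nonzero, using that $e^{(0)}$ is nef and big and so pairs nontrivially with the relevant growing class. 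Finally, in type (c), reciprocity of the spectrum together with Kronecker's theorem identifies $\lambda$ as a real algebraic integer greater than one all of whose conjugates except $1/\lambda$ lie on the unit circle, that is, a reciprocal quadratic integer when the minimal polynomial has degree two and a Salem number when it has degree at least four. Throughout, effectiveness is used only to pin down the sign of the leading eigenvalue and the isotropic fixed direction, while the signature $(1,r)$ and the skew-adjoint nilpotent bookkeeping do the decisive work.
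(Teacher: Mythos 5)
The paper does not actually prove this proposition; it is imported from \cite{MASE}, with the one-line indication that it follows from Takenawa's degree formula together with the Diller--Favre analysis. The route you take --- viewing $\Phi$ as an integral isometry of the signature-$(1,r)$ lattice $\Pic(\X)$ preserving the forward cone and fixing $\K_{\X}$, splitting into elliptic/parabolic/hyperbolic cases, using the reciprocal pairing of generalised eigenspaces to bound the off-circle spectrum by the positive index, and controlling unipotent parts via the classification of nilpotent skew-adjoint operators --- is exactly the Diller--Favre/Mase argument, and your signature bookkeeping (even blocks in pairs of index $2$, a lone odd block of size $m$ contributing index $\tfrac{m\pm1}{2}$, hence either $\nu=0$ or a single size-$3$ block) is the correct crux. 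The degree-growth conclusions via $\deg\varphi^{(k)}=\Phi^k e^{(0)}\cdot e^{(0)}$ and the identification of $\lambda$ as a reciprocal quadratic integer or Salem number are also handled correctly.

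One step as written would fail, though it is localised and reparable. You claim that every unit-circle eigenvalue has all Galois conjugates of modulus one and is therefore a root of unity by Kronecker, and you then propose to pass to a power of $\Phi$ making all unit-circle eigenvalues equal to $1$. In type (c) this is false: the conjugates of the Salem number $\lambda$ other than $1/\lambda$ lie on the unit circle but have $\lambda$ itself as a conjugate, so they are not roots of unity and no power of $\Phi$ trivialises them --- which is precisely why the proposition asserts only $|\mu_i|=1$ in type (c) but ``roots of unity'' in types (a) and (b). The repair is to run the Kronecker-plus-unipotent analysis only in the case $d=0$ (no off-circle eigenvalues), where it is valid and yields the (a)/(b) dichotomy; in the case $d=1$ you should instead observe that the hyperbolic plane spanned by the $\lambda$- and $\lambda^{-1}$-eigendirections already exhausts the positive index, so the invariant complement is negative definite, the restriction of $\Phi$ there has finite order (hence is semisimple with unit-modulus eigenvalues), and the diagonal Jordan form of type (c) follows. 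With that adjustment, and a word confirming that the size-$3$ parabolic block must sit over the eigenvalue $+1$ (forced by cone preservation, since the associated isotropic boundary fixed ray must be genuinely fixed rather than reversed), your argument is complete and coincides with the source's.
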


We can partially distinguish between the different types above according to $\operatorname{rank}\Pic(\X)$ as follows.

\begin{proposition}[\cite{MASE}]  \label{picardnumberprop}
If $\operatorname{rank} \Pic(\X) \leq 10$, then the mapping belongs to either type (a) or type (b) of Proposition \ref{classificationpropnonauto}.
\end{proposition}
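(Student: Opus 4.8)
The plan is to rule out type (c) of Proposition \ref{classificationpropnonauto} under the hypothesis $\operatorname{rank}\Pic(\X) \leq 10$; since every mapping with a space of initial conditions is of type (a), (b), or (c), this immediately yields the claim. The key quantity is the self-intersection of the canonical class. Writing $r+1 = \operatorname{rank}\Pic(\X)$ and using $\K_{\X} = -3 e^{(0)} + e^{(1)} + \dots + e^{(r)}$ together with $e^{(0)}\cdot e^{(0)} = 1$ and $e^{(i)}\cdot e^{(i)} = -1$, one finds $(\K_{\X})^2 = 9 - r$. The hypothesis $r+1 \leq 10$ gives $(\K_{\X})^2 \geq 0$, and the whole argument hinges on this sign.

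First I would recall that $\Phi$ is a lattice automorphism of $\Pic(\X)$, which carries the form of signature $(1,r)$, and that $\Phi$ fixes $\K_{\X}$. The strategy is to show that when $(\K_{\X})^2 \geq 0$ every eigenvalue of $\Phi$ is a root of unity, so that the eigenvalue $\lambda > 1$ of type (c) (a reciprocal quadratic integer or a Salem number) cannot occur. In the case $(\K_{\X})^2 > 0$, so $r < 9$, the orthogonal complement $L \defeq \K_{\X}^{\perp}$ is negative definite, $\Phi$ restricts to a lattice automorphism of $L$, and the automorphism group of a definite lattice is finite. Since $\Z\K_{\X} \oplus L$ has finite index in $\Pic(\X)$ and $\Phi$ acts as the identity on $\Z\K_{\X}$ and with finite order on $L$, some power of $\Phi$ is the identity on a finite-index sublattice and hence on all of $\Pic(\X)$. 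Thus $\Phi$ has finite order and the mapping is of type (a).

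The main obstacle is the boundary case $(\K_{\X})^2 = 0$, that is $r = 9$ and $\operatorname{rank}\Pic(\X) = 10$, where $\K_{\X}$ is isotropic and $\K_{\X}^{\perp}$ is only negative semidefinite with radical $\Z\K_{\X}$. Here I would pass to the $\Phi$-invariant flag $0 \subset \langle\K_{\X}\rangle \subset \K_{\X}^{\perp} \subset \Pic(\X)$ and analyse the three successive quotients. On $\langle\K_{\X}\rangle$ the map $\Phi$ is the identity by hypothesis. On $\K_{\X}^{\perp}/\langle\K_{\X}\rangle$, which is negative definite of rank $8$ (an $E_8$ lattice, as expected from rational elliptic surface theory), $\Phi$ again has finite order by finiteness of the isometry group of a definite lattice. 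For the remaining rank-one quotient $\Pic(\X)/\K_{\X}^{\perp}$ I would use the isometry property directly: since $\Phi$ fixes $\K_{\X}$, for any $v$ we have $(\Phi v - v)\cdot \K_{\X} = \Phi v \cdot \Phi \K_{\X} - v \cdot \K_{\X} = 0$, so $(\Phi - \operatorname{id})\bigl(\Pic(\X)\bigr) \subseteq \K_{\X}^{\perp}$ and $\Phi$ acts trivially on $\Pic(\X)/\K_{\X}^{\perp}$ as well.

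Consequently the characteristic polynomial of $\Phi$ is a product of cyclotomic factors, no eigenvalue exceeds one in modulus, and type (c) is excluded, proving the proposition. I expect the isotropic case to be the only genuinely delicate point, precisely because the naive ``restrict to the definite orthogonal complement'' argument fails there and must be replaced by the analysis of the quotient by the radical together with the equivariance of the pairing. As a byproduct, the same filtration shows $(\Phi^{N} - \operatorname{id})^{3} = 0$, where $N$ is the order of $\Phi$ on the $E_8$ quotient, so when $\Phi$ has infinite order its sole nontrivial Jordan block has size three and the degree growth is quadratic, consistent with type (b); this refinement is, however, not needed for the statement itself.
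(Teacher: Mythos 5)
Your argument is correct, and it is essentially the standard one; the paper itself gives no proof of this proposition but simply cites \cite{MASE}, so there is nothing in the text to compare against line by line. Your two ingredients --- $(\K_{\X})^2 = 9 - r \geq 0$ when $\operatorname{rank}\Pic(\X) = r+1 \leq 10$, and the (semi)definiteness of $\K_{\X}^{\perp}$ --- are exactly what drives the result in the literature (going back to Diller--Favre in the autonomous setting). The only remark worth making is that the case split and the three-step filtration can be bypassed: if the mapping were of type (c), its dominant eigenvector $v \in \Pic_{\R}(\X)$ would satisfy $v \cdot v = \lambda^2\, v\cdot v$ and $v \cdot \K_{\X} = \lambda\, v \cdot \K_{\X}$, hence $v^2 = 0$ and $v \in \K_{\X}^{\perp}\otimes\R$; since that subspace is negative semidefinite with radical $\R\K_{\X}$ whenever $(\K_{\X})^2 \geq 0$, Cauchy--Schwarz forces $v \in \R\K_{\X}$, contradicting $\Phi v = \lambda v$ with $\lambda > 1$ and $\Phi\K_{\X} = \K_{\X}$. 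This one-line variant only excludes type (c), which is all the proposition asks; your filtration argument costs a little more but additionally pins down the unipotent structure $(\Phi^{N}-\operatorname{id})^{3}=0$ in the boundary case $\operatorname{rank}\Pic(\X)=10$, which is consistent with (and explains) the quadratic growth of type (b). Both routes are sound.
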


For type (b), when the degree growth is quadratic, one of the authors also showed \cite{MASE} that a minimal space of initial conditions is formed of generalised Halphen surfaces as defined by Sakai as follows.
\begin{definition}[generalised Halphen surface \cite{SAKAI2001}, Sakai surface] \label{def:generalisedhalphensurface}
A rational surface $X$ is called a generalised Halphen surface if it has an effective anticanonical divisor $D \in |-\K_X|$ of canonical type, i.e if $D = \sum_{i} m_i D_i$, $m_i > 0$, is its decomposition into irreducible components then $[D_i] \cdot \K_X = 0$ for all $i$.
A generalised Halphen surface has $\dim  |-\K_X|$ equal to either 1 or 0. 
In the first case $X$ is a rational elliptic surface and in the second case $X$ has a unique effective anticanonical divisor. The latter is the type of surface which forms the spaces of initial conditions for discrete Painlev\'e equations, and we call such $X$ a Sakai surface.
\end{definition}

From Proposition \ref{classificationpropnonauto} we see that if a non-autonomous mapping has a space of initial conditions and unbounded degree growth, then it can be shown to be either a discrete Painlev\'e equation (type (b)) or non-integrable (type (c)). 

\begin{remark} \label{basicsurfacejustification}
The restriction to basic rational surfaces in the Definitions \ref{defspaceofICsNA} and \ref{spaceofICsdefblowdowns} of a space of initial conditions is justified for surfaces associated with mappings of type (b) or (c), since in both cases the surfaces have infinitely many exceptional curves of the first kind and so by a theorem of Nagata \cite{nagata} must be basic rational surfaces.
\end{remark}

\subsection{Deautonomisation}

We now define what it means for a non-autonomous mapping to be a deautonomisation of an autonomous one with reference to the space of initial conditions and dynamics on the Picard lattice.

\begin{definition}[deautonomisation] \label{deautodef}
Consider a non-autonomous mapping $\varphi_n : \p^2 \dashrightarrow \p^2$ with a space of initial conditions as in Definition \ref{spaceofICsdefblowdowns} (allowing blow-downs). 
Consider also an autonomous mapping $\psi : \p^2 \dashrightarrow \p^2$ with a space of initial conditions in the sense of Definition \ref{defspaceofICsAuto} being a basic rational surface $Z$ with $\operatorname{rank} \Pic(Z) = \operatorname{rank} \Pic(\X)$ to which $\psi$ lifts to an automorphism $\tilde{\psi} : Z \rightarrow Z$. 
Consider a birational morphism $Z \rightarrow \p^2$ (given by a sequence of blow-ups) and denote the geometric basis corresponding to that morphism by $(f^{(0)},\dots, f^{(r)})$. 
We say that $\varphi_n$ is a deautonomisation of $\psi$ if under the identification
\begin{equation}
\begin{gathered}
\kappa : \Pic(\X) \rightarrow \Pic(Z),\\
e^{(i)} \mapsto f^{(i)},
\end{gathered}
\end{equation}
the following two conditions hold:
\begin{itemize}
\item If $\F \in \Pic(\X)$ represents an effective class $\iota_n(\F)$ in $\Pic(X_n)$, then $\kappa(\F) \in \Pic(Z)$ is also effective.
\item The linear maps $\Phi$ and $\tilde{\psi}^*$ coincide under the identification, i.e. the following diagram commutes:
\end{itemize}
\begin{center}
\begin{tikzcd}
& \Pic(X_n)									& \Pic(X_{n+1}) \arrow[l, "\tilde{\varphi}_n^*"] \\
& \Pic(\X)   \arrow[d,"\kappa"]	\arrow[swap]{u}{\iota_{n}}	& \Pic(\X) \arrow[l, "\Phi"]  	\arrow[swap]{d}{\kappa} \arrow[u, "\iota_{n+1}"] 	\\
& \Pic(Z) 										& \Pic(Z)	\arrow[l,"\tilde{\psi}^*"]	 .
\end{tikzcd}
\end{center}
\end{definition}

\begin{remark}
This definition is consistent with that of \cite{fiberdependent} formulated in the context of deautonomising QRT mappings to obtain discrete Painlev\'e equations. 
The notion of deautonomisation there involves a choice of fibre from the elliptic fibration of the space of initial conditions for a QRT mapping, which then becomes the unique anticanonical divisor of the surface after allowing locations of centres of blow-ups to move. 
In particular, a different choice of singular fibre will lead to a different choice of which classes stay effective after deautonomisation, and therefore corresponds to a different way of injecting $\Pic^+(X_n)$ into $\Pic^+(Z)$ via $\kappa \circ \iota_n^{-1}$, but this does not change $\Phi$ nor the degree growth of the deautonomised mappings.
\end{remark}

\begin{remark}
If $(\varphi_n,X_n)$ is a deautonomisation of $(\psi,Z)$, then if a nontrivial contraction of the space of initial conditions for $\varphi_n$ is possible, the same will be true of $\psi$, i.e. there will exist a birational morphism $\pi :Z \rightarrow Z'$ such that $ \pi \circ \tilde{\psi} \circ \pi^{-1}$ is an automorphism of $Z'$ but $\pi$ is not an isomorphism. 
So after contraction $(\varphi_n,X_n)$, we do not have a deautonomisation of $(\psi,Z)$ itself, but rather of the contracted version $(\pi \circ \tilde{\psi} \circ \pi^{-1},Z')$.
\end{remark}

The crucial fact that we will use for the remainder of the paper is the following
\begin{lemma}
If $\varphi_n$ is a deautonomisation of an autonomous mapping in the sense of Definition \ref{deautodef}, then it has exactly the same degree growth, dynamical degree and algebraic entropy.
\end{lemma}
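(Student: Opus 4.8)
The plan is to reduce the entire statement to linear algebra on the Picard lattices, using the formula of Takenawa that relates the degrees of iterates to the induced action on $\Pic$. The engine is the commutative diagram in Definition \ref{deautodef} together with the observation that the identification $\kappa$ appearing there is far more than a group isomorphism.

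First I would check that $\kappa : \Pic(\X) \to \Pic(Y)$ is a \emph{lattice isometry}. By construction it sends the geometric basis $(e^{(0)},\dots,e^{(r)})$ of $\Pic(\X)$ to the geometric basis $(f^{(0)},\dots,f^{(r)})$ of $\Pic(Y)$, and both of these bases carry the standard diagonal pairing ($e^{(0)}\cdot e^{(0)} = f^{(0)}\cdot f^{(0)} = 1$, $e^{(i)}\cdot e^{(i)} = f^{(i)}\cdot f^{(i)} = -1$ for $i\geq 1$, with distinct basis vectors orthogonal). Hence $\kappa$ preserves the intersection product; in particular $\kappa(e^{(0)}) = f^{(0)}$ and $\kappa(\K_{\X}) = \K_Y$. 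Next, the bottom square of the diagram in Definition \ref{deautodef} reads $\psi^{*}\circ\kappa = \kappa\circ\Phi$, so $\psi^{*} = \kappa\circ\Phi\circ\kappa^{-1}$ and therefore $(\psi^{*})^{k} = \kappa\circ\Phi^{k}\circ\kappa^{-1}$ for every $k\geq 1$. Thus $\Phi$ and $\psi^{*}$ are conjugate by an isometry and share the same characteristic polynomial, eigenvalues, and Jordan normal form.

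Then I would invoke Takenawa's lemma. For the non-autonomous mapping it gives $\deg\varphi^{(k)} = \Phi^{k}e^{(0)}\cdot e^{(0)}$, and since the autonomous case is the special case of constant $X_n = Y$, $\varphi_n = \psi$, $\Phi = \psi^{*}$, the same formula yields $\deg\psi^{k} = (\psi^{*})^{k}f^{(0)}\cdot f^{(0)}$. Combining this with the conjugacy and the isometry property gives, term by term,
\[
\deg\psi^{k} = \big(\kappa\,\Phi^{k}\,\kappa^{-1}\big)f^{(0)}\cdot f^{(0)} = \kappa\big(\Phi^{k}e^{(0)}\big)\cdot\kappa\big(e^{(0)}\big) = \Phi^{k}e^{(0)}\cdot e^{(0)} = \deg\varphi^{(k)}.
\]
Since the two degree sequences coincide, the degree growth is identical, and as the dynamical degree is the largest eigenvalue of $\Phi$ (equivalently of $\psi^{*}$) and the algebraic entropy is its logarithm, these too agree.

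I expect the main obstacle to lie not in the linear algebra but in the degree-to-lattice dictionary. One must apply Takenawa's formula with the correct direction of pullback and composition, and handle the case in which the space of initial conditions for $\varphi_n$ is the more general one of Definition \ref{spaceofICsdefblowdowns} allowing blow-downs. There, the map $\pi_{n+1}\circ\tilde\varphi_n\circ\pi_n^{-1}$ realised on $\p^2$ is only birationally conjugate to $\varphi_n$, so the exact degree sequence may differ by the contribution of the contracted curves; but a birational change of coordinates leaves the growth rate unchanged, exactly as in the compactification-independence observation of Remark \ref{compactificationremark}, so the dynamical degree and algebraic entropy are unaffected. The clean term-by-term equality then holds in the blow-down-free setting of Definition \ref{defspaceofICsNA}, and equality of growth rates holds in general, completing the argument.
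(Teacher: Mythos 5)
Your argument is correct and is precisely the one the paper implicitly relies on: the lemma is stated without proof as an immediate consequence of Takenawa's formula $\deg\varphi^{(k)}=\Phi^k e^{(0)}\cdot e^{(0)}$ together with the conjugacy $\psi^*=\kappa\circ\Phi\circ\kappa^{-1}$ and the fact that $\kappa$ is an isometry sending $e^{(0)}$ to $f^{(0)}$, all of which are built into Definition \ref{deautodef}. Your additional remark about the blow-down setting, where the realised map on $\p^2$ is only birationally conjugate to $\varphi_n$ so that only the growth rate (not the exact degree sequence) is preserved, is a worthwhile point of care that the paper leaves unstated.
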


\section{Parameter evolution and the period map} \label{section3}

We are now ready to extend the period map construction to the case of a non-integrable mapping which possesses a space of initial conditions and preserves a 2-form.

\subsection{Period map for rational surfaces with effective anticanonical divisor}

In this subsection, let $X$ be a surface with an effective anticanonical divisor $D = \sum_{i} m_i D_i \in | -\K_{X}|$.
Denote the sublattice of $\Pic(X)$ spanned by the classes $\mathcal{D}_i = [D_i]$ of the irreducible components of $D$ by 
\begin{equation}
Q = \sum_{i} \Z \,\mathcal{D}_i.
\end{equation} 
Its orthogonal complement is another sublattice, which we denote 
\begin{equation}
Q^{\perp} = \left\{ \F \in \Pic(X)~|~ \F \cdot \mathcal{D}_i = 0~~\forall~i~\right\},
\end{equation}
on which the period map will give a $\C$-valued function.
When $X$ is a Sakai surface, $Q^{\perp}$ is isomorphic to the root lattice of an affine root system and the values of the period map on a basis of simple roots are called the root variables. These appear as $n$-dependent parameters in discrete Painlev\'e equations \cite{SAKAI2001}. 

Begin by taking a rational $2$-form $\omega$ on $X$ such that $- \div \omega = D$.
Let the support of $D$ be $D_{\red} = \bigcup_{i} D_i$, so $\omega$ defines a holomorphic symplectic form on $X - D_{\red}$. 
This gives the mapping
\begin{equation}
\begin{gathered}
\hat{\chi} : H_2(X - D_{\red} ; \Z) \rightarrow \C, \\
\Gamma \mapsto \int_{\Gamma} \omega.
\end{gathered}
\end{equation}
In order to use this to define a period map on $Q^{\perp}$, we begin with the long exact sequence of relative singular homology for the pair $(X, X - D_{\red})$, which includes the following:
\begin{equation*}
H_{3}(X ; \Z) \rightarrow H_{3}(X, X-D_{\red} ; \Z) \xrightarrow{\partial^*} H_{2}(X - D_{\red};\Z) \xrightarrow{i^*} H_2(X;\Z) \xrightarrow{j^*} H_2(X,X-D_{\red};\Z). 
\end{equation*}
Here $i^*$ comes from the natural injection of cycles on $X-D_{\red}$ into $X$, while $j^*$ is induced by the quotient of cycles on $X$ by cycles on $X - D_{\red}$. We manipulate this exact sequence using the following facts:

\begin{itemize}
\item Lefschetz duality for the pair $(X, X - D_{\red})$ gives 
\begin{equation}
H_k(X, X-D_{\red} ; \Z) \cong H^{4-k}(D_{\red} ; \Z).
\end{equation}
\item Poincar\'e duality for $D_{\red}$ gives
\begin{equation}
H_1(D_{\red} ; \Z) \cong H^1(D_{\red} ; \Z).
\end{equation}
\item Under Poincar\'e duality for $X$, we have 
\begin{equation}
H_3(X;\Z) \cong H^1(X;\Z) = 0,
\end{equation}
because $b_1 = \operatorname{rank} H^1(X;\Z) = \sum_{p+q=1} h^{p,q} = h^{1,0} + h^{0,1} =  2 h^{0,1} = 0$, since $X$ being rational implies that $h^{0,1}=0$.
\item Lefschetz duality as above gives
$H_2(X, X-D_{\red} ; \Z) \cong H^2(D_{\red} ; \Z) \cong Q$,
so under the Poincar\'e duality $H_2(X;\Z) \cong H^2(X;\Z)$, in the long exact sequence above we have 
\begin{equation}
\ker j^* \cong Q^{\perp},
\end{equation}
which is why the orthogonal complement of $Q$ is significant in this construction. 
\end{itemize}
Using these facts, from the sequence above we obtain
\begin{equation}
0 \rightarrow H_{1}(D_{\red} ; \Z) \rightarrow H_{2}(X - D_{\red}; \Z) \rightarrow Q^{\perp} \rightarrow 0,
\end{equation}
which through the mapping $\hat{\chi}$ on $H_2(X - D_{\red};\Z)$ gives the following.

\begin{definition}[period map \cite{looijenga,SAKAI2001}] \label{def:periodmap}
For $X$ with effective anticanonical divisor $D$ and choice of a rational 2-form $\omega$ on $X$ with $- \div \omega = D$, the above construction gives
\begin{equation}
\chi : Q^{\perp} \rightarrow \C ~\mod \hat{\chi}(H_1(D_{\red};\Z)),
\end{equation}
which we call the period map of $X$.
One can define $\chi$ as a $\C$-valued function by making a normalisation of the value of $\hat{\chi}$ on $H_1(D_{\red};\Z)$ if this is nontrivial.
\end{definition}

\subsubsection{Computation of the period map}

The computation of the period map proceeds along the same lines as in \cite[Chapter I, Section 5]{looijenga} and \cite[Lemma 21]{SAKAI2001}, which we recall here for completeness. 
The first step in the procedure is guaranteed to be possible for generalised Halphen surfaces, but for surfaces associated with non-integrable mappings does not come automatically either because the existence of an effective anticanonical divisor is not guaranteed or, even if this does exist, it is not obvious how to describe $Q^\perp$ as a root lattice; we will address this in subsection \ref{subsec:rootbases} below.

For an element $\alpha \in Q^{\perp}$ that can be expressed as the difference of the classes of two exceptional curves of the first kind,
\begin{equation}
\alpha = [C^1] - [C^0],
\end{equation}
the computation of $\chi(\alpha)$ is done as follows:
\begin{itemize}
\item Find the unique component $D_k$ of the anticanonical divisor which intersects both $C^1$ and $C^0$, with multiplicity one, i.e. $D_k$ such that 
\begin{equation}
C^1 \cdot D_k = C^0 \cdot D_k = 1, \quad \text{ and } \quad C^1 \cdot D_j = C^0 \cdot D_j = 0\quad  \text{ for } j \neq k.
\end{equation}
This unique component is guaranteed to exist by Riemann-Roch, Serre duality and the genus formula, as explained in \cite[Chapter I, Section 0]{looijenga}.
\item Denote the points of intersection of these curves with $D_k$ as $D_k \cap C^1$ and $D_k \cap C^0$. The value of $\chi(\alpha)$ is then computed using the residue formula (see \cite[Chapter I, Section 5]{looijenga} for details) as
\begin{equation}
\chi(\alpha) = 2 \pi i \int_{D_k \cap C^0}^{D_k \cap C^1} \operatorname{Res}_{D_k} \omega. \label{residueintegral}
\end{equation}

\end{itemize}
\begin{remark} \label{rem:coefficientsrootvars}
It is important to note that the integral \eqref{residueintegral} will be computed in coordinates and therefore detects parameters in the mapping, which is fundamental to understanding the mechanism behind the reflection of dynamics on $\Pic(\X)$ in confinement conditions.
In particular, if $X$ is obtained as part of the construction of a space of initial conditions for some mapping, the locations of centres of blow-ups giving rise to $C^1$ and $C^0$ will be expressed in coordinates in terms of coefficients in the mapping. 
Then the value $\chi(\alpha)$ will in general depend on these coefficients via the integration limits $D_k\cap C^1$ and $D_k \cap C^0$, which relates coefficients in the mapping to values of the period map.
This will be illustrated in the examples in Sections \ref{section5} and \ref{section6}.
\end{remark}

\subsubsection{Dynamics of the values of the period map}
The crucial property of the period map for our purposes is how it interacts with the linear dynamics on $\Pic(\X)$. 
Let $\varphi_n$ be a non-autonomous mapping with space of initial conditions formed of surfaces $X_n$ and let $\Phi$ be its action on the Picard lattice $\Pic(\X)$.
Take some linearly independent subset $\left\{\alpha_i \right\} \subset Q^{\perp} \subset \Pic(\X)$ whose span is closed under $\Phi$, and let $M_{i,j}$ be the matrix representation of $\Phi$ with respect to this, i.e.
\begin{equation}
\Phi(\alpha_i) = \sum_{j} M_{i,j} \alpha_j.\label{Phirepresentation}
\end{equation}
For each $n$, take $\omega_n$ to be a rational 2-form on $X_n$ 
chosen such that $\tilde{\varphi}_n^*(\omega_{n+1}) = \omega_n$.
Then denote by $\chi_n$ and $\chi_{n+1}$ the period maps defined by $\omega_n$ and $\omega_{n+1}$ on $X_n$ and $X_{n+1}$ respectively, and the values of the period map of $X_n$ by
\begin{equation}
a_i(n) = \chi_n\left(\iota_n^{-1} (\alpha_i)\right).
\end{equation}
The correspondence between the action of the mapping on $\Pic(\X)$ and parameter evolution $a_i(n)\mapsto a_i(n+1)$ is provided by the following crucial lemma, which is already implicit in the original paper of Sakai \cite{SAKAI2001} (see also \cite[Remark 3.1]{dzhamaytakenawa}).

\begin{lemma} \label{rootvarsparamevolution}
\begin{equation}
\begin{aligned}
a_i(n+1) = \sum_{j} M_{i,j} a_j(n) \label{arepresentation}.
\end{aligned}
\end{equation}
\end{lemma}
\begin{proof}
Since 
$\chi_{n+1}\left( \iota_{n+1}^{-1}(\alpha_i) \right) 
= \chi_{n} \left( \tilde{\varphi}_n^* \circ \iota_{n+1}^{-1} (\alpha_i) \right)
= \chi_n \left(\iota_n^{-1} \circ \Phi (\alpha_i) \right) 
= \sum_{j} M_{i,j} \chi_n\left(\iota_n^{-1} (\alpha_i)\right)$.
The second equality comes from the fact that $\tilde{\varphi}_n^*(\omega_{n+1}) = \omega_n$, and that $\tilde{\varphi}_n$ is biholomorphic and acts as a change of variables for the integrals.
\end{proof}
The fact that the representation in \eqref{arepresentation} is given by exactly the same matrix as in \eqref{Phirepresentation}
 provides the sought after bridge between parameter evolution and dynamics on $\Pic(\X)$.

\subsection{The question of existence of an effective anticanonical divisor}

For Sakai surfaces associated with discrete Painlev\'e equations the existence of an effective anticanonical divisor is built into the definition.
However, to carry out the period map construction in the non-integrable case, we must deal with the question of the existence of an effective anticanonical divisor on the surface $X_n$ from a space of initial conditions for a mapping of type (c) in Proposition \ref{classificationpropnonauto}.

We will need the following fact relating to a space of initial conditions for a non-integrable mapping, the proof of which is found in \cite{MASE}.

\begin{lemma}[{\cite[Lemma 4.15]{MASE}}] \label{lem:negdef}
For a mapping of type (c) in Proposition \ref{classificationpropnonauto}, denote by $v \in \Pic_{\R}(\X)$ the dominant eigenvector corresponding to the eigenvalue $\lambda > 1$ of $\Phi$. 
Then the intersection product is negative definite on $v^{\perp} \cap \Pic(\X)$.
\end{lemma}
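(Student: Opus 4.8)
The plan is to combine the fact that $\Phi$ is an isometry of the Lorentzian lattice $\Pic(\X)$ with an arithmetic argument exploiting the irrationality of $\lambda$. First I would record the signature of the intersection form: since the geometric basis satisfies $e^{(0)} \cdot e^{(0)} = 1$ and $e^{(i)} \cdot e^{(i)} = -1$ for $i \geq 1$, with all mixed products zero, the lattice $\Pic(\X)$ has signature $(1,r)$, so $\Pic_{\R}(\X)$ is a Minkowski space whose maximal positive-definite subspace is one-dimensional. Next, because $\Phi$ preserves the intersection form and $\Phi v = \lambda v$, evaluating $\Phi v \cdot \Phi v = v \cdot v$ yields $\lambda^2 (v \cdot v) = v \cdot v$; as $\lambda > 1$ we have $\lambda^2 \neq 1$ and conclude $v \cdot v = 0$, i.e. $v$ is isotropic. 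The same computation applied to the eigenvector $w$ of $\Phi$ for the eigenvalue $1/\lambda$ gives $w \cdot w = 0$, and applied to an eigenvector $u$ for any $\mu_i$ gives $v \cdot u = 0$ since $\lambda \mu_i \neq 1$.

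The geometric core is then a standard fact about a null vector in Minkowski space, which I would establish directly: the restriction of the form to the hyperplane $v^{\perp} \subset \Pic_{\R}(\X)$ is negative semidefinite, with radical exactly $\langle v \rangle$. To see this I would exhibit a hyperbolic plane containing $v$, namely the span $\langle v, w \rangle$; here the pairing $v \cdot w$ is nonzero, for otherwise $v$ would be orthogonal to $w$, to the remaining eigenvectors, and to itself, hence lie in the radical of a nondegenerate form. Thus $\langle v, w \rangle$ has signature $(1,1)$, its orthogonal complement $U$ is negative definite of dimension $r-1$, and since $w \notin v^{\perp}$ while both $U$ and $v$ lie in $v^{\perp}$, one obtains the decomposition $v^{\perp} = \langle v \rangle \oplus U$. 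On $v^{\perp}$ the form is therefore negative semidefinite with null directions precisely the multiples of $v$.

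Finally I would upgrade semidefiniteness to definiteness over the integral points by an arithmetic argument. Since $\lambda$ is a reciprocal quadratic integer or a Salem number it is irrational, and since $\Phi$ is given by an integer matrix with $\lambda$ a simple eigenvalue, the line $\langle v \rangle$ contains no nonzero rational vector: a nonzero $x = s v \in \Pic_{\Q}(\X)$ would force $\Phi x = \lambda x$ to be rational while $\lambda x$ is irrational. Hence $\langle v \rangle \cap \Pic(\X) = \{0\}$. For any nonzero $x \in v^{\perp} \cap \Pic(\X)$, writing $x = s v + u$ with $u \in U$ gives $x \cdot x = u \cdot u \leq 0$ with equality only if $u = 0$, i.e. only if $x \in \langle v \rangle \cap \Pic(\X) = \{0\}$; so $x \cdot x < 0$ for every nonzero such $x$, which is the claim. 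The main obstacle, and indeed the crux of the lemma, is exactly this interplay between geometry and arithmetic: over $\R$ the form on $v^{\perp}$ is merely negative semidefinite, the vector $v$ itself being null, and it is precisely the irrationality of the dominant eigenvalue that prevents $v$ from being proportional to any integral class and thereby excludes null vectors from $v^{\perp} \cap \Pic(\X)$.
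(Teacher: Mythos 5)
Your argument is correct and is essentially the standard proof of this fact (the paper itself defers to \cite[Lemma~4.15]{MASE}, whose proof follows the same Diller--Favre-style route): isotropy of $v$ from $\lambda^2\neq 1$, the signature $(1,r)$ of $\Pic(\X)$ forcing negative semidefiniteness on $v^{\perp}$ with radical exactly $\langle v\rangle$, and the irrationality of $\lambda$ excluding nonzero integral points from $\langle v\rangle$. No gaps.
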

Using techniques similar to those in \cite{dillerlin} and \cite{MASE}, we have the following, 

\begin{theorem} \label{effectivenesstheorem}
Let $\varphi_n : \p^2 \dashrightarrow \p^2$ be a non-autonomous, non-integrable mapping which has a space of initial conditions, i.e. a mapping of type (c) in Proposition \ref{classificationpropnonauto}.
Suppose there exists a sequence of rational 2-forms $\omega_n$ on $\p^2$ and constants $c_n \in \C^*$ such that 
\begin{equation} \label{symplecticformpreservedP2}
\varphi_n^*\, \omega_{n+1} = c_n\, \omega_n.
\end{equation}
Denoting the lift of $\omega_n$ to $X_n$ by 
$
\tilde{\omega}_n = (\rho_{n} \circ \sigma_{n}^{-1} )^* \omega = (\sigma_{n})_* \circ (\rho_{n})^* \omega,
$
with its divisor having irreducible decomposition
\begin{equation}
\div(\tilde{\omega}_n) = \sum_{i} m_i^{(n)} D_i^{(n)}, 
\end{equation}
then if under the identification of all $\Pic(X_n)$ into the Picard lattice $\Pic(\X)$ we have, for all $n$,
\begin{equation} \label{D_iidentificationcondition}
\iota_n^{-1} (D_i^{(n)}) = \iota_{n+1}^{-1} (D_i^{(n+1)}),
\end{equation}
 there exists a minimal space of initial conditions formed by surfaces which have effective anticanonical divisors. 
\end{theorem}
\begin{remark}
In particular, if an autonomous mapping $\psi : \p^2 \dashrightarrow \p^2$ is such that $\psi^* \omega = c \,\omega$ for some rational 2-form $\omega$ on $\p^2$ and some $c \in \C^*$, then any deautonomisation of it satisfies the conditions in Theorem \ref{effectivenesstheorem}. 
\end{remark}
\begin{remark}
Note that even before condition \eqref{D_iidentificationcondition} is imposed, the fact $\varphi_n$ preserves $\omega_n$ already implies that $\iota_n^{-1}(D_i^{(n)}) = \iota_{n+1}^{-1}(D_{\ell(i)}^{(n+1)})$ for some permutation $\ell$ that preserves the multiplicities $m_i^{(n)}$ of $D_i^{(n)}$ in $\div(\tilde{\omega}_n)$.
Condition \eqref{D_iidentificationcondition} is required to remove this ambiguity. 
\end{remark}
\begin{proof}[Proof of Theorem \ref{effectivenesstheorem}]

Let the space of initial conditions for the mapping $\varphi_n$ be as in Definition \ref{spaceofICsdefblowdowns}. 
Let the decomposition of $\div(\tilde{\omega}_n)$ on $X_n$ into effective and anti-effective parts be
\begin{equation}
\div(\tilde{\omega}_n) = D_n^{+} - D_n^{-},
\end{equation}
where $D^{+}$ and $D^{-}$ are effective divisors. Note that $- \div(\tilde{\omega}_n)$ being effective is equivalent to $D_n^{+} = 0$. 

The idea now is to show that if $-\div(\tilde{\omega}_n)$ is not effective then the space of initial conditions is not minimal and a nontrivial contraction is possible. 
Indeed suppose $D_n^{+}\neq 0$ and let its decomposition into irreducible components be 
\begin{equation}
D_n^{+} = \sum_{i} m_i C_i^{(n)},
\end{equation}
where $m_i > 0$ and $C_i^{(n)}$ are irreducible curves on $X_n$. 
Denote the classes of these by $[C_i^{(n)}] \in \Pic(X_n)$. 
Then according to our assumptions, the expressions for $[C_i^{(n)}]$ in terms of the geometric basis for $\Pic(X_n)$ corresponding to $\pi_n$ do not depend on $n$ and we have the following well-defined elements of $\Pic(\X)$:
\begin{equation}
\mathcal{D}_i^{+} = \iota_n^{-1}( [C_i^{(n)}] ), \qquad \mathcal{D}^{+} = \sum_i m_i \iota_n^{-1} ([D]^{+}_i ).
\end{equation}
We will show that there exists some $\mathcal{D}_{i}^{+}$ that corresponds to the class of an exceptional curve of the first kind and which is periodic under $\Phi$, and that its orbit gives a collection of classes of exceptional curves of the first kind which are mutually disjoint, so by Lemma \ref{minimisationlemma} the space of initial conditions is not minimal.

The fact that $\varphi_n^*$ preserves the 2-form $\omega_n$ in the sense of \eqref{symplecticformpreservedP2} implies that $\tilde{\varphi}_{n}^* \tilde{\omega}_{n+1}= c_n \,\tilde{\omega}_n$, and $\tilde{\varphi}_n^*\left( \div( \tilde{\omega}_{n+1})\right) = \div( \tilde{\omega}_n)$.
Further, since $\tilde{\varphi}_n$ is an isomorphism, for any irreducible curve $C$ on $X_n$ we have
\begin{equation}
\operatorname{ord} ( C, \div(\tilde{\omega}_n)) = \operatorname{ord} ( \tilde{\varphi}_n(C), \div(\tilde{\omega}_{n+1})),
\end{equation}
which allows us to deduce that $\Phi$ must permute the $\mathcal{D}^{+}_i$ in a way that preserves their multiplicity in the canonical class.
In particular there exists a permutation $\sigma$ such that $\Phi(\mathcal{D}_i^+) = \mathcal{D}_{\sigma(i)}^+$, so all $\mathcal{D}_i^+$ are in the part of $\Pic(\X)$ periodic under $\Phi$.
Denoting by $\ell$ the order of the permutation $\sigma$, we have
\begin{equation}
\mathcal{D}_{i}^{+} \cdot v =  \Phi^{\ell} (\mathcal{D}_{i}^{+}) \cdot \Phi^{\ell} v = \lambda^{\ell} \left( \mathcal{D}_{i}^{+} \cdot v \right),
\end{equation}
and since $\lambda>1$ we deduce that $\mathcal{D}_i^{+} \in v^{\perp} \cap \Pic(\X)$ for every $i$, so by Lemma \ref{lem:negdef} the intersection form is negative definite on the sublattice $\operatorname{span}_{\Z}\left\{ \mathcal{D}_i^{+} \right\} \subset \Pic(\X)$.
This allows us to deduce that at least one of the $\mathcal{D}_i^{+}$ must correspond to an exceptional curve of the first kind, as follows. 
Taking the intersection of $\mathcal{D}^+$ with the canonical class in $\Pic(\X)$ we have 
\begin{equation*}
\mathcal{D}^+ \cdot \K_{\X}	
						= \mathcal{D}^{+} \cdot \mathcal{D}^{+} - \mathcal{D}^{+} \cdot \mathcal{D}^{-}
						\leq \mathcal{D}^{+} \cdot \mathcal{D}^+  < 0,
\end{equation*}
where the last inequality comes from the intersection form being negative definite on $\operatorname{span}_{\Z}\left\{ \mathcal{D}_i^{+} \right\}$.
Since $\mathcal{D}^+$ corresponds to an effective class we then have that some $\mathcal{D}^+_i$ satisfies $\mathcal{D}_i^+ \cdot \K_{X} < 0$. 
Since $\mathcal{D}_i^+ \in v^{\perp} \cap \Pic(X)$ we also deduce $\mathcal{D}_i^+ \cdot \mathcal{D}_i^+ < 0$.
The genus formula then allows us to deduce that $\mathcal{D}_i^+$ must correspond to an exceptional curve of the first kind.

It remains to be shown that the elements of $\Pic(\X)$ in the orbit of this $\mathcal{D}_i^+$ under $\Phi$ are mutually orthogonal. 
Letting $k \in \Z$ be such that $\Phi^k \mathcal{D}_i^+ = \mathcal{D}_j^+ \neq \mathcal{D}_i^+$, we again use the negative definiteness of the intersection form on $\operatorname{span}_{\Z}\left\{ \mathcal{D}_i^{+} \right\}$ to show that
\begin{equation}
0 > \left(\mathcal{D}_i^+ + \Phi^k \mathcal{D}_i^+ \right)^2 = - 2 + 2 \mathcal{D}_i^+ \cdot \Phi^k \mathcal{D}_i^+ 
\quad \implies \quad
1 > \mathcal{D}_i^+ \cdot \Phi^k \mathcal{D}_i^+ \geq 0,
\end{equation}
so $\mathcal{D}_i^+ \cdot \Phi^k \mathcal{D}_i^+ = 0$ and the orbit of $\mathcal{D}_i^+$ gives a collection of mutually disjoint exceptional curves of the first kind which are permuted by the mapping.

Finally we note that after a contraction, performed by blowing down these curves, we still have a space of initial conditions for the same $\varphi_n$ in the sense of Definition \ref{spaceofICsdefblowdowns}, and we can apply the above procedure as many times as is necessary to obtain a minimal space of initial conditions formed of anticanonical surfaces.
Note that the procedure is guaranteed to terminate since a space of initial conditions for a non-integrable mapping must have $\operatorname{rank} \Pic(\X) >10$ according to Proposition \ref{picardnumberprop}.

\end{proof}

\subsection{Bases for the orthogonal sublattice} \label{subsec:rootbases}

The next problem which must be addressed is ensuring that for surfaces associated with non-integrable mappings, one can find appropriate elements of $Q^{\perp}$ on which to compute the period map.
For the remainder of this Section we assume all surfaces have effective anticanonical divisors.

\begin{definition}[root basis of $Q^{\perp}$] \label{rootbasisdef}
Suppose a $\Z$-basis $\{\alpha_i\}$ for $Q^{\perp}$ is such that each $\alpha_i$ can be expressed as the difference of two classes of exceptional curves of the first kind $\alpha_i = [C^1_i] - [C^0_i]$ and the numbers $c_{i,j} = 2 \frac{\alpha_i \cdot \alpha_j}{\alpha_i \cdot \alpha_i}$ form the entries of a generalised Cartan matrix as defined in \cite{KAC}. 
That is, the diagonal entries $c_{i,i} = 2$ for all $i$, the off-diagonal entries $c_{i,j}$, $i\neq j$ are nonpositive integers, and $a_{i,j}=0$ implies $a_{j,i}=0$.
Then we call this a root basis for $Q^{\perp}$. 
Following \cite{SAKAI2001}, we will refer to the values of the period map on a root basis as root variables.
\end{definition}

For each type of surface in his classification, Sakai \cite{SAKAI2001} found a root basis with a corresponding generalised Cartan matrix that is of affine type.
In \cite{takenawaDDS}, Takenawa constructed the space of initial conditions for a non-integrable mapping (the Hietarinta-Viallet mapping \cite{HVequation}), which has an effective anticanonical divisor. 
Takenawa was also able to find a root basis in this case, with generalised Cartan matrix
\begin{equation}
\left(\begin{array}{ccc}2 & -2 & -2 \\-2 & 2 & -2 \\-2 & -2 & 2\end{array}\right),
\end{equation}
which is of hyperbolic type $H_{71}^{(3)}$ \cite{WAN}.
Takenawa also constructed an example with root basis corresponding to the matrix
\begin{equation}
\left(\begin{array}{ccc}2 & -3 & -3 \\-3 & 2 & -3 \\-3 & -3 & 2\end{array}\right),
\end{equation}
which is of neither finite, affine, nor hyperbolic type.

For non-integrable mappings it is not immediately clear that we can always find a root basis, though for our purposes it will be sufficient to show that one can always construct a linearly independent subset of $Q^{\perp}$ that is closed under $\Phi$ and whose span includes the dominant eigenvector.
To this end we define the following.

\begin{definition}[sufficient subset of $Q^{\perp}$] \label{sufficientsubset} 
For a space of initial conditions for a mapping with unbounded degree growth, suppose a linearly independent subset $\{\beta_j\}$ of $Q^{\perp}$ is such that the following two conditions hold:
\begin{itemize} 
\item Each $\beta_j$ can be expressed as the difference of two classes of exceptional curves of the first kind $C_j^{1}$ and $C_j^{0}$, i.e. $\iota_n(\beta_j) = [C^1_j] - [C^0_j]$,
\item $\Phi$ preserves $\operatorname{span}_{\Z} \{\beta_j\}$.
\end{itemize}
If additionally the following requirement is met, which depends on whether the mapping is of type (b) or (c) of Proposition \ref{classificationpropnonauto}, then we call $\{\beta_j\}$ a sufficient subset:
\begin{itemize}
\item type (c) : The dominant eigenvector $v \in \Pic_{\R}(\X)$ of $\Phi$ is contained in $\operatorname{span}_{\R} \{\beta_j\}$,
\item type (b) : Some element in $\Pic_{\R}(\X)$ which is a generalised eigenvector for the eigenvalue $1$ of $\Phi$, but not an eigenvector,
is contained in $\operatorname{span}_{\R} \{\beta_j\}$.
\end{itemize}
\end{definition}

Note that a root basis is automatically a sufficient subset. For non-autonomous mappings of type (b) of Proposition \ref{classificationpropnonauto}, it was shown in \cite{MASE} that they correspond to Sakai surfaces, for which we always have a sufficient subset in $Q^\perp$, for example the root bases constructed by Sakai for each type in the list \cite{SAKAI2001}.
However, in order to show that we can always find a sufficient subset for mappings of type (c) in Proposition \ref{classificationpropnonauto} we will require the following Lemma consisting of several results from \cite{SCasIC}. 

\begin{lemma}[{\cite{SCasIC}}] \label{SCasIClemma} 
Denote the periodic part of $\Pic(\X)$ under $\Phi$ by
\begin{equation}
\operatorname{P}_{\Phi} = \left\{ \F \in \Pic(\X) ~|~ {\exists}~  m \in \Z\setminus\{0\} ~\text{ such that }~ \Phi^m (\F) = \F\right\},
\end{equation}
and let $\overline{\mu}_{\Phi}(t)\in \Q[t]$ denote the minimal polynomial of $\Phi$ as a linear transformation of $\Pic(\X) / \operatorname{P}_{\Phi}$.
If there exists $F \in \Pic(\X) / \operatorname{P}_{\Phi}$, $F \neq 0$ and a polynomial $\psi(t) \in \Q[t]$ such that
\begin{equation}
\psi[\Phi] F = 0 ~\operatorname{mod}~ \operatorname{P}_{\Phi},
\end{equation} 
then $\psi(t)$ shares a factor with $\overline{\mu}_{\Phi}(t)$.
For mappings of type (c) in Proposition \ref{classificationpropnonauto}, $\overline{\mu}_{\Phi}(t)$ is precisely the minimal polynomial of $\lambda$, the dominant eigenvalue of $\Phi$, over $\Q$. We denote this minimal polynomial as $\mu_{\lambda}(t)$.
\end{lemma}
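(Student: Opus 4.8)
The plan is to reduce everything to elementary module theory over the principal ideal domain $\Q[t]$, supplemented by an eigenvalue analysis that is specific to type (c). First I would record that $\operatorname{P}_{\Phi}$ is a \emph{saturated} sublattice of $\Pic(\X)$: if $k\F \in \operatorname{P}_{\Phi}$ for some nonzero integer $k$, then $\Phi^{m}(k\F) = k\F$ for some $m \neq 0$, and since $\Pic(\X)$ is torsion-free this forces $\Phi^{m}(\F) = \F$, so $\F \in \operatorname{P}_{\Phi}$. Consequently the quotient $\Pic(\X)/\operatorname{P}_{\Phi}$ is again a free $\Z$-module, $\Phi$ descends to a lattice automorphism $\overline{\Phi}$ of it, and $\overline{\mu}_{\Phi}$ is a well-defined monic polynomial in $\Z[t] \subset \Q[t]$.

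For the first assertion I would view $V \defeq (\Pic(\X)/\operatorname{P}_{\Phi}) \otimes \Q$ as a finitely generated torsion module over $\Q[t]$, with $t$ acting as $\overline{\Phi}$. For the given $F \neq 0$, its annihilator ideal $\operatorname{Ann}_{\Q[t]}(F)$ is principal, say generated by a monic $p_{F}(t)$, which is non-constant since $F \neq 0$ (as $F \neq 0$ in the quotient means its image in $V$ is nonzero). The hypothesis $\psi[\Phi]F = 0 \bmod \operatorname{P}_{\Phi}$ says exactly that $\psi \in \operatorname{Ann}_{\Q[t]}(F)$, whence $p_{F} \mid \psi$; and since $\overline{\mu}_{\Phi}$ annihilates the whole module we also have $p_{F} \mid \overline{\mu}_{\Phi}$. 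Thus $p_{F}$ is a common factor of $\psi$ and $\overline{\mu}_{\Phi}$ of degree at least one, which proves the first part. I note that this step is purely formal and uses nothing about the geometry, nor about the type of the mapping.

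The substance lies in the second assertion, identifying $\overline{\mu}_{\Phi}$ with $\mu_{\lambda}$ for type (c). Here I would use that by Proposition \ref{classificationpropnonauto} the map $\Phi$ is semisimple, with spectrum $\{\lambda, 1/\lambda, \mu_{1}, \dots, \mu_{r-1}\}$, where $\lambda > 1$ and $|\mu_{i}| = 1$. Since $\Phi$ is an integer matrix its characteristic polynomial lies in $\Z[t]$, so the eigenvalue multiset is stable under the Galois action. The crucial classification step is then that every eigenvalue is either a conjugate of $\lambda$ (hence a root of the irreducible $\mu_{\lambda}$) or a root of unity: if the Galois orbit of some $\mu_{i}$ avoids $\lambda$ and $1/\lambda$, it consists entirely of eigenvalues on the unit circle, so $\mu_{i}$ is an algebraic integer all of whose conjugates lie on the unit circle, and Kronecker's theorem forces it to be a root of unity; otherwise $\mu_{i}$ is a conjugate of $\lambda$. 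This is the main obstacle, being exactly the point where number-theoretic input is needed --- Kronecker's theorem, together with the Salem (or reciprocal-quadratic) nature of $\lambda$, which guarantees $\mu_{\lambda}$ is coprime to every cyclotomic factor.

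Granting this, semisimplicity yields a $\Q$-rational, $\overline{\Phi}$-invariant splitting $\Pic(\X) \otimes \Q = V_{\lambda} \oplus V_{c}$ into the sum of eigenspaces for the roots of $\mu_{\lambda}$ and for the root-of-unity eigenvalues respectively. A short periodicity argument identifies $\operatorname{P}_{\Phi} \otimes \Q = V_{c}$, since a vector is periodic under $\Phi$ if and only if (by semisimplicity) it lies in a sum of root-of-unity eigenspaces. Hence $(\Pic(\X)/\operatorname{P}_{\Phi}) \otimes \Q \cong V_{\lambda}$, on which $\overline{\Phi}$ has precisely the roots of the irreducible $\mu_{\lambda}$ as its eigenvalues, so its minimal polynomial there is $\mu_{\lambda}$ itself. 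Therefore $\overline{\mu}_{\Phi} = \mu_{\lambda}$, completing the argument.
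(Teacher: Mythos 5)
The paper does not actually prove this lemma --- it is imported wholesale from \cite{SCasIC} --- so there is no internal argument to compare yours against; judged on its own terms, your proof is correct and self-contained. The two places where real content enters are both handled properly: first, the observation that $\operatorname{P}_{\Phi}$ is saturated (so that $\Pic(\X)/\operatorname{P}_{\Phi}$ is free and injects into its rationalisation), which is exactly what legitimises the annihilator argument over the PID $\Q[t]$ for a nonzero class $F$ and gives the common factor $p_F$ of $\psi$ and $\overline{\mu}_{\Phi}$; second, the eigenvalue dichotomy for type (c), where you correctly combine the integrality of the characteristic polynomial, Galois stability of the spectrum, and Kronecker's theorem to conclude that every eigenvalue is either a conjugate of $\lambda$ or a root of unity, so that the rational invariant splitting into $V_{\lambda}\oplus V_c$ exists and $\mu_{\lambda}$ is coprime to the cyclotomic part. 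Two small points are worth making explicit, though neither is a gap: the identification $\operatorname{P}_{\Phi}\otimes\Q=V_c$ (in particular the inclusion $V_c\cap\Pic(\X)\subseteq\operatorname{P}_{\Phi}$) genuinely requires the semisimplicity of $\Phi$ asserted in the type-(c) row of Proposition \ref{classificationpropnonauto} --- for type (b) the analogous claim fails because of the Jordan block at the eigenvalue $1$ --- and one should note that $V_{\lambda}\neq 0$ because $\lambda$ is an eigenvalue, so that $\overline{\mu}_{\Phi}$ equals the irreducible $\mu_{\lambda}$ rather than merely dividing it.
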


\begin{proof}
The first part follows from the well-known fact in linear algebra that if for an endomorphism $f$ on a finite-dimensional vector space $V$ over a field $K$, a nonzero element $v \in V$, and a polynomial $\psi(t) \in K[t]$ it holds that $\psi(f) v = 0$, then $\psi(t)$ and the minimal polynomial of $f$ have a common factor.
By Proposition~\ref{classificationpropnonauto}, the minimal polynomial $\overline{\mu}_{\Phi}(t)$ coincides with $\mu_{\lambda}(t)$ in the case of type (c), which leads to the second statement.
\end{proof}

We will also require the following result, which follows from the classification of the anticanonical divisors for spaces of initial conditions for non-integrable mappings (Theorem \ref{theorem:classification}) which is given in Section \ref{section4}. 

\begin{proposition} \label{propositionA1}
	Consider a mapping $\varphi_n$ which is non-integrable, i.e. of type (c) in Proposition \ref{classificationpropnonauto}, with a minimal space of initial conditions provided by surfaces $X_n$ which are anticanonical, i.e. $X_n$ has an effective anticanonical divisor $D = \sum_j m_j D_j$.
	Then, there exists $j$ such that
	\begin{itemize}
		\item
		$m_j = 1$,
		\item
		$X_n$ has an exceptional curve of the first kind $C$ such that $C \cdot D_i = \delta_{i, j}$.

	\end{itemize}
\end{proposition}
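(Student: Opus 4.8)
The plan is to prove Proposition \ref{propositionA1} by leveraging the effective anticanonical divisor $D = \sum_j m_j D_j$ together with the minimality of the space of initial conditions. The statement asserts the existence of a reduced component $D_j$ (multiplicity one) which is met transversally, and at exactly one point, by some exceptional curve of the first kind $C$, while $C$ is disjoint from every other component. Since we are permitted to invoke the classification of anticanonical divisors in Appendix \ref{appendixA} (Theorem \ref{theorem:classification}), the natural strategy is to establish the existence of the required component $D_j$ structurally, and then to produce the exceptional curve $C$ by intersection-theoretic and effectivity arguments on $\Pic(\X)$.

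First I would extract from the classification the possible configurations of $\div(\tilde\omega_n)$ on a minimal $X_n$ for a type (c) mapping, and isolate those components $D_j$ with $m_j = 1$. The effectiveness established in Theorem \ref{effectivenesstheorem} guarantees $D \in |-\K_{X_n}|$, so $\sum_j m_j \,[D_j] = -\K_{\X}$ in $\Pic(\X)$ under the identification $\iota_n$. Intersecting with $\K_{\X}$ and using the canonical-type or more general structure coming from the classification should pin down which components can carry multiplicity one; the key numerical input is that on a minimal space of initial conditions no $\mathcal{D}_i$ can itself be the class of an exceptional curve of the first kind that is permuted by $\Phi$, by the contrapositive of Lemma \ref{minimisationlemma} and the argument already used in the proof of Theorem \ref{effectivenesstheorem}.

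Next I would construct the exceptional curve $C$. The idea is to use that $X_n$ is basic with $\operatorname{rank}\Pic(\X) > 10$ (Proposition \ref{picardnumberprop}), so it carries infinitely many exceptional curves of the first kind, and then to select one realising the prescribed intersection pattern $C \cdot D_i = \delta_{i,j}$ with the chosen $D_j$. Concretely, one looks for a class $\mathcal{C} \in \Pic(\X)$ with $\mathcal{C}^2 = -1$, $\mathcal{C} \cdot \K_{\X} = -1$, and $\mathcal{C} \cdot \mathcal{D}_i = \delta_{i,j}$; the genus formula then forces $\mathcal{C}$ to be the class of a smooth rational curve, and effectivity (guaranteed by Riemann--Roch together with $\mathcal{C}\cdot\K_{\X} < 0$ and $h^2 = 0$, exactly as in the period-map computation recalled earlier for differences of exceptional classes) yields an honest irreducible $(-1)$-curve. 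The existence of such a class with the required intersection numbers is precisely the content extracted from the local classification in Appendix \ref{appendixA}, where each multiplicity-one component of $\div(\tilde\omega_n)$ is seen to be met by the exceptional divisor of one of the blow-ups in $\pi_n$.

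The main obstacle I expect is the existence half, namely producing a genuine exceptional curve $C$ meeting \emph{only} $D_j$ and transversally at a single point, rather than merely a class with the right numerical profile. The difficulty is twofold: one must rule out that the would-be $(-1)$-curve degenerates or splits off components lying in the support of $D$, and one must verify transversality and the single-point intersection geometrically rather than just via $\mathcal{C}\cdot\mathcal{D}_j = 1$. Both points are where the detailed local analysis of the anticanonical configurations in Appendix \ref{appendixA} becomes indispensable: by inspecting each possible singular fibre or chain of components in the classification, one checks case by case that a multiplicity-one component always admits an adjacent blow-up whose exceptional curve (or a nearby exceptional curve) provides the desired $C$. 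I would therefore organise the argument as a reduction to Theorem \ref{theorem:classification}, treating the numerical construction uniformly and deferring the geometric verification of transversality to the enumeration of configurations carried out in the appendix.
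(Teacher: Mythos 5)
Your proposal runs the argument in the wrong direction, and the step you yourself flag as the ``main obstacle'' is a genuine gap that your outline does not close. You propose to first locate a multiplicity-one component $D_j$ from the classification and then manufacture a class $\mathcal{C}$ with $\mathcal{C}^2 = \mathcal{C}\cdot\K_{\X} = -1$ and $\mathcal{C}\cdot\mathcal{D}_i = \delta_{i,j}$, upgrading it to an honest $(-1)$-curve via Riemann--Roch and the genus formula. But Riemann--Roch only gives $h^0 \geq 1$, i.e.\ an effective divisor in the class, which may well be reducible; the genus formula constrains $g_a$ only for an irreducible curve, so it cannot ``force $\mathcal{C}$ to be the class of a smooth rational curve.'' You would still have to rule out that the effective representative splits off components supported on $D$, and nothing in Theorem \ref{theorem:classification} supplies the assertion you attribute to it: the classification describes the mutual configuration of the components $D_i$, not their incidence with exceptional curves of the blow-up structure. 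Deferring this to an unspecified case-by-case inspection leaves the existence of $C$ unproved.

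The paper's argument is the reverse of yours and is essentially one line of adjunction. Since $\operatorname{rank}\Pic(X_n) > 10$ (Proposition \ref{picardnumberprop}), the surface $X_n$ is a nontrivial blow-up of $\p^2$ and certainly carries an exceptional curve of the first kind $C$ (for instance the exceptional divisor of the last blow-up). Minimality guarantees that no component $D_i$ is itself an exceptional curve of the first kind (each $\mathcal{D}_i$ is orthogonal to the dominant eigenvector $v$, so Lemma \ref{minimisationlemma} would otherwise produce a nontrivial minimisation); hence $C$ is distinct from every $D_i$ and $C\cdot D_i \geq 0$ for all $i$. The genus formula gives $C\cdot(-\K_{X_n}) = 1$, so
\begin{equation*}
1 = \sum_i m_i \, (D_i\cdot C)
\end{equation*}
is a sum of nonnegative integers with all $m_i>0$; exactly one term is nonzero and equals $1$, forcing $m_j = 1$ and $D_j\cdot C = 1$ for that index and $D_i \cdot C = 0$ for the rest. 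The multiplicity-one component is thus an \emph{output} of the computation applied to an arbitrary exceptional curve, not an input that must first be extracted from the classification; the only ingredients are the existence of some exceptional curve of the first kind and the fact, coming from minimality, that none of the $D_i$ is one.
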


We are now ready to prove the following. 

\begin{theorem} \label{sufficientsubsetprop}
For a minimal space of initial conditions for a non-integrable mapping, i.e. of type (c) in Proposition \ref{classificationpropnonauto}, one can always find a sufficient subset of $Q^{\perp}$.
\end{theorem}

\begin{proof}
Let the surfaces forming the space of initial conditions be $X_n$, with $D = \sum_j m_j D_j$ the effective anticanonical divisor of $X_n$. 
Then let $[C] \in \Pic(X_n)$ be the class of the exceptional curve $C$ of the first kind as in Proposition \ref{propositionA1}, which intersects exactly one component $D_j$ of the anticanonical divisor $D = \sum_j m_j D_j$.
The element of $\mathcal{D}_k = \iota_n^{-1} [D_k] \in \Pic(\X)$ corresponding to this class must lie in $\operatorname{P}_{\Phi}$, so let $\ell \in \Z_{\geq 0}$ be the smallest nonnegative integer such that $\Phi^{\ell} (\mathcal{D}_k) =\mathcal{D}_k$.
Furthermore, let $\mathcal{C} = \iota_n^{-1} [C]$ be the element of $\Pic(\X)$ corresponding to the exceptional curve of the first kind $C$, which cannot lie in $\operatorname{P}_{\Phi}$ because of the minimality of the space of initial conditions.
Then we can construct a sufficient subset as follows.
First let 
\begin{equation}
\beta_0 =  (\Phi^{\ell} - 1) \mathcal{C},
\end{equation}
which by construction corresponds to the difference of two classes of exceptional curves of the first kind. 
Then let $\beta_1 = \Phi \beta_0$, $\beta_2 = \Phi \beta_1$ and so on until the process terminates and some $\beta_{h+1}$ is a linear combination of $\beta_{0}, \dots , \beta_{h}$, say $\Phi \beta_h + \sum_{i=0}^{h} a_i \beta_i = 0$.
Then $\{ \beta_0, \dots, \beta_{h} \}$ by construction satisfies the first two conditions in Definition \ref{sufficientsubset}, so it remains only to show that $v \in \operatorname{span}_{\R} \{\beta_0, \dots , \beta_h\}$.
From the construction of this subset we have that
\begin{equation}
\psi[\Phi](\Phi^{\ell} - 1) \mathcal{C} = 0,
\end{equation}
where $\psi(t) = (t^{h+1} + \sum_{i=0}^{h} a_i t^i) \in \Q[t]$.
Since we are in the third case of Proposition \ref{classificationpropnonauto}, Lemma \ref{SCasIClemma} implies that $\psi$ has $\mu_{\lambda}(t)$, the minimal polynomial over $\Q$ of the dominant eigenvalue of $\Phi$, as a factor.
Therefore the characteristic polynomial of the restriction of $\Phi$ to $\operatorname{span}_{\R}\left\{\beta_{0}, \dots , \beta_{h}\right\}$ has $\mu_{\lambda}(t)$ as a factor, so the dominant eigenvector $v$ is contained in $\operatorname{span}_{\R}\left\{\beta_{0}, \dots , \beta_{h}\right\}$ as required.

\end{proof}

\subsection{Sufficient deautonomisation}

We are now ready to apply the above results to describe the mechanism by which the dynamical degree is reflected in parameter evolution for a non-autonomous mapping which has a space of initial conditions and preserves a 2-form.

\begin{definition}[sufficient deautonomisation] \label{sufficientdeautonomisationdef}
Consider a deautonomisation $\varphi_n$ of an autonomous mapping $\psi$ in the sense of Definition \ref{deautodef}, with space of initial conditions given by surfaces $X_n$ with effective anticanonical divisors. 
Further assume that the mapping has unbounded degree growth.
If a sufficient subset $\{ \beta_1, \dots , \beta_r \}$ exists, consider the $\Z$-linear map 
\begin{equation}
\begin{gathered}
\boldsymbol{\chi} : \operatorname{span}_{\Z}\{ \beta_1, \dots , \beta_r \} \rightarrow \C~ \operatorname{mod} \hat{\chi}(H_1 (D_{\operatorname{red}} ; \Z)), \\
\beta \mapsto \chi_{X_n}\left( \iota_n (\beta)\right),
\end{gathered}
\end{equation}
where $\chi_{X_n}$ is the period map of $X_n$. 
If $\boldsymbol{\chi}$ is injective, then we call the deautonomisation sufficient. 
\end{definition}
If a deautonomisation is sufficient in the sense of Definition \ref{sufficientdeautonomisationdef}, this means that the evolution of values of the period map on the sufficient subset is guaranteed to give the same characteristic polynomial as that of $\Phi$ restricted to $\operatorname{span}_{\Z}\{\beta_1,\dots,\beta_r \}$ and we have Theorem \ref{bigsummarytheorem} and Corollary \ref{bigsummarycorollary}.
In the case of type (c) mappings this characteristic polynomial will have $(t-\lambda)$ as a factor (where $\lambda$ is the dominant eigenvalue of $\Phi$), whereas for type (b) mappings it has $(t-1)^2$ as a factor.
As noted in Remark \ref{rem:coefficientsrootvars}, the values of the period map are related to the coefficients in the rational functions giving the mapping in coordinates.

The construction of a sufficient deautonomisation depends significantly on the form in coordinates of the autonomous mapping. 
We give sufficient deautonomisations for two families of examples in Sections \ref{section5} and \ref{section6}, as well as for three more in a follow-up paper \cite{part2}. There we also give procedures for constructing sufficient deautonomisations for mappings that take the same form as scalar three-point mappings belonging to classes I-VI in the QRT-type classification.
The idea of the construction is to ensure that the coefficients in the mapping which parametrise locations of the centres of the final blow-ups performed in the construction of $X_n$ are allowed to depend on $n$.
We conjecture that sufficient deautonomisations always exists for a non-integrable autonomous mapping which has a space of initial conditions and preserves a 2-form.
\begin{conjecture}
Let $\psi$ be an autonomous mapping with a space of initial conditions which is of type (c) in Proposition \ref{classificationpropnonauto} and which preserves a 2-form $\omega$ on $\p^2$.
Then there exists a sufficient deautonomisation of $\psi$.
\end{conjecture}

\section{Anticanonical divisors for non-integrable mappings with spaces of initial conditions}  \label{section4}

%
%
%

In this Section we provide a classification of anticanonical divisors (Theorem~\ref{theorem:classification}) for spaces of initial conditions that correspond to non-integrable mappings, i.e. mappings with dynamical degree greater than 1.
As mentioned in \cite{MASE}, the theory of spaces of initial conditions for mappings of the plane allows for two description that are, however, essentially the same.
While the description introduced in the main part of this paper is more intuitive, the other one, by a rational surface and a Cremona isometry, is more compatible with the general theory needed in such a classification.
Therefore, in this section, we shall use the description by Cremona isometries.
The following are our assumptions and notations that we shall use in this section:
\begin{itemize}
	\item
	$X$ is a rational surface.

	\item
	$\sigma$ is a Cremona isometry.
	That is,
	$\sigma$ is a $\mathbb{Z}$-linear transformation of $\Pic(X)$ that satisfies
	\begin{itemize}
		\item
		$\sigma \K_X = \K_X$,
		
		\item
		$(\sigma F_1) \cdot (\sigma F_2) = F_1 \cdot F_2$ for any $F_1$, $F_2 \in \Pic(X)$,

		\item
		if $F \in \Pic(X)$ is effective, then so is $\sigma F$.

	\end{itemize}

	\item
	$\lambda > 1$:
	the maximum eigenvalue of $\sigma$, which is irrational.

	\item
	$v \in \Pic_{\R}(X)$:
	the dominant eigenvector of $\sigma$, which we may assume is nef (see \cite{dillerfavre,MASE}).

	\item
	$X$ is minimal as a space of initial conditions.

	\item
	The anticanonical class $- \K_X$ is effective.
	That is, we have
	\[
		D = \sum^N_{j = 1} m_j D_j \in |- \K_X|,
	\]
	where $D_j \subset X$ is irreducible, $m_j \in \mathbb{Z}_{> 0}$ and $D_i \ne D_j$ ($i \ne j$).

	\item
	$D_{\operatorname{red}} = D_1 \cup \cdots \cup D_N$.

\end{itemize}

\begin{lemma}
	$D$ is connected.
\end{lemma}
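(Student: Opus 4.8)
The plan is to read off the connectedness of $D$ from the dimension of $H^0(D,\mathcal{O}_D)$, which the rationality of $X$ determines completely, so that neither the minimality of the space of initial conditions nor the Cremona isometry $\sigma$ is actually needed for this particular statement. Since $D \in |-\K_X|$ is the zero divisor of a section of $\mathcal{O}_X(-\K_X)$, its ideal sheaf is $\mathcal{O}_X(-D) = \mathcal{O}_X(\K_X)$, and I would begin from the associated structure sequence
\begin{equation*}
0 \rightarrow \mathcal{O}_X(\K_X) \rightarrow \mathcal{O}_X \rightarrow \mathcal{O}_D \rightarrow 0.
\end{equation*}

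Next I would pass to the long exact sequence in cohomology and invoke that $X$ is rational. Concretely, $H^0(X,\mathcal{O}_X(\K_X)) = H^0(X,\Omega^2_X) = 0$ because $h^{2,0} = h^{0,2} = 0$ for a rational surface, while Serre duality gives $H^1(X,\mathcal{O}_X(\K_X)) \cong H^1(X,\mathcal{O}_X)^{\vee}$, which vanishes since $h^{0,1} = 0$ (the same vanishing already used in the period-map construction). The long exact sequence therefore collapses to an isomorphism $H^0(X,\mathcal{O}_X) \xrightarrow{\sim} H^0(D,\mathcal{O}_D)$, whence $h^0(\mathcal{O}_D) = h^0(\mathcal{O}_X) = 1$.

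Finally I would translate this numerical fact into connectedness. If $D$ were disconnected, one could write $D = D' + D''$ with $D'$, $D''$ nonzero effective divisors whose supports are disjoint; then the subscheme $D$ is the disjoint union of the subschemes $D'$ and $D''$, so $\mathcal{O}_D \cong \mathcal{O}_{D'} \oplus \mathcal{O}_{D''}$ as $\mathcal{O}_X$-algebras and $H^0(D,\mathcal{O}_D) = H^0(D',\mathcal{O}_{D'}) \oplus H^0(D'',\mathcal{O}_{D''})$. Each summand contains the constants and is hence at least one-dimensional, forcing $h^0(\mathcal{O}_D) \geq 2$ and contradicting the computation above; therefore $D_{\red}$ has a single connected component. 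I do not anticipate a genuine obstacle: the only points needing care are the bookkeeping identification $\mathcal{O}_X(-D) = \mathcal{O}_X(\K_X)$ and the standard fact that the number of connected components of the projective scheme $D$ is bounded above by $\dim_{\C} H^0(\mathcal{O}_D)$, both of which are routine.
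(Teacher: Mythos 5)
Your proof is correct and is essentially the argument the paper relies on: the paper's ``proof'' is simply a citation of Sakai's Proposition~6, which establishes connectedness by exactly this computation of $h^0(\mathcal{O}_D)=1$ from the structure sequence $0 \rightarrow \mathcal{O}_X(\K_X) \rightarrow \mathcal{O}_X \rightarrow \mathcal{O}_D \rightarrow 0$ together with the vanishing of $h^0(\K_X)$ and $h^1(\mathcal{O}_X)$ on a rational surface. Your observation that neither minimality nor the Cremona isometry $\sigma$ is needed is precisely what makes the paper's remark that Sakai's proof ``is still valid in our case'' legitimate, since here $D$ is no longer assumed to be of canonical type.
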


\begin{proof}
	The proof of \cite[Proposition~6]{SAKAI2001} is still valid in our case.
\end{proof}

\begin{lemma}\label{lemma:genus}
	$g_a(D_j) \le 1$ for any $j$.
	Moreover, if $D$ has a component $D_j$ such that $g_a(D_j) \ge 1$, then $D = D_j$.
	In particular, if $D$ is not irreducible, then each component of $D$ is a smooth rational curve.
\end{lemma}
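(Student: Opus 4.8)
The plan is to reduce the statement to a property of a single component by pairing it against the rest of the anticanonical divisor. Since $\K_X \sim -D$, the adjunction (genus) formula applied to any irreducible component $D_j$ gives
\[
2 g_a(D_j) - 2 = D_j \cdot (D_j + \K_X) = D_j \cdot (D_j - D) = - D_j \cdot (D - D_j).
\]
Applied to $D$ itself the same computation gives $D \cdot (D + \K_X) = 0$, so $g_a(D) = 1$. Thus everything hinges on controlling the single intersection number $D_j \cdot (D - D_j)$, and the whole lemma will follow once I know that this number is strictly positive whenever $D \neq D_j$.

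The key input I would use is that $D$ is numerically \emph{$1$-connected}, meaning $A \cdot B \ge 1$ for every decomposition $D = A + B$ into nonzero effective divisors. This follows from $h^0(\mathcal{O}_D) = 1$: the structure sequence $0 \to \mathcal{O}_X(-D) \to \mathcal{O}_X \to \mathcal{O}_D \to 0$, in which $\mathcal{O}_X(-D) = \mathcal{O}_X(\K_X)$, together with $h^0(\K_X) = 0$ and $h^1(X,\mathcal{O}_X) = h^1(X,\K_X) = 0$ (all valid because $X$ is rational, so $p_g = q = 0$, the latter using Serre duality), yields $H^0(\mathcal{O}_D) \cong H^0(\mathcal{O}_X) = \C$. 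By the standard numerical connectedness criterion for effective divisors on a smooth surface, $h^0(\mathcal{O}_D) = 1$ is equivalent to $1$-connectedness; this refines the connectedness already recorded in the previous lemma.

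With this in hand the conclusion is immediate. If $D = D_j$ then $g_a(D_j) = g_a(D) = 1$. If $D \neq D_j$, then $E := D - D_j$ is a nonzero effective divisor, so $D = D_j + E$ is a decomposition into nonzero effectives and $1$-connectedness gives $D_j \cdot E \ge 1$; the displayed identity then forces $2 g_a(D_j) - 2 \le -1$, hence $g_a(D_j) = 0$. This proves $g_a(D_j) \le 1$ for every $j$, and moreover any component with $g_a(D_j) \ge 1$ must fall into the first case, i.e. $D = D_j$, giving the "moreover" assertion. Finally, if $D$ is reducible then $D \neq D_j$ for every $j$, so each component has $g_a(D_j) = 0$ and is therefore a smooth rational curve.

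The point that requires care — and the reason a purely intersection-theoretic argument does not suffice — is that a component may occur in $D$ with multiplicity $m_j \ge 2$, in which case $D_j$ is itself a component of $E = D - D_j$ and the naive bound "two distinct curves meet non-negatively" is unavailable; indeed $D_j \cdot E = (m_j - 1)\,D_j^2 + \sum_{k \neq j} m_k\,(D_j \cdot D_k)$ can a priori be negative. Numerical $1$-connectedness is exactly the tool that covers this, since it constrains all effective decompositions, including those sharing components, and so handles every multiplicity uniformly. One could instead attempt to run the argument through the negative-definiteness of the component lattice furnished by the dominant eigenvector $v$ of $\sigma$ (each $D_j$ lies in $v^{\perp}$, whence $D_j^2 < 0$), but extracting the needed lower bound on $D_j \cdot E$ from negative-definiteness alone is considerably more delicate, which is why I expect the connectedness route to be the cleaner one to write out.
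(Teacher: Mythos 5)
There is a genuine gap, and it sits exactly at the step carrying all the weight of your argument: the claim that for an effective divisor on a smooth surface, $h^0(\mathcal{O}_D)=1$ is \emph{equivalent} to numerical $1$-connectedness. Only one direction of this is a theorem (Ramanujam's connectedness lemma: $1$-connected $\Rightarrow h^0(\mathcal{O}_D)=1$); the converse, which is the direction you need, is false. For a counterexample, take a smooth curve $C$ of genus $2$ on a smooth surface with $C^2=-1$ and $\mathcal{O}_C(-C)$ a non-effective line bundle of degree $1$ (generic blow-ups arrange this): the decomposition sequence $0 \to \mathcal{O}_C(-C) \to \mathcal{O}_{2C} \to \mathcal{O}_C \to 0$ gives $h^0(\mathcal{O}_{2C})=1$, yet $D=2C=C+C$ has $C\cdot(D-C)=C^2=-1$, so $D$ is not $1$-connected. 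Your own adjunction identity $2g_a(D_j)-2=-D_j\cdot(D-D_j)$ shows that $D_j\cdot(D-D_j)$ is an even integer $\le 2$, so the inequality $D_j\cdot(D-D_j)\ge 1$ you want to extract is \emph{literally equivalent} to the conclusion $g_a(D_j)=0$; with the cited ``criterion'' removed, nothing in your argument produces it, and — as you yourself observe in your closing paragraph — the multiplicity-$\ge 2$ case cannot be handled by naive positivity of intersections of distinct curves. (The $1$-connectedness of non-multiple fibres of elliptic fibrations, which may be the source of the intuition here, comes from Zariski's lemma and $D^2=0$; in the present setting $D^2=\K_X^2<0$, so that route is unavailable.)

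The paper avoids this entirely by working with a geometric basis: writing $[D_j]=c_{j0}\mathcal{H}+\sum_i c_{ji}\mathcal{E}_i$ for a blow-down $\pi\colon X\to\mathbb{P}^2$, effectivity gives $c_{j0}\ge 0$ while $3=\sum_j m_j c_{j0}$ gives $c_{j0}\le 3$; then $g_a(D_j)\le g_a(\pi(D_j))=\tfrac{1}{2}(c_{j0}-1)(c_{j0}-2)\le 1$ since strict transforms do not increase arithmetic genus, and the case $g_a(D_j)=1$ forces $c_{j0}=3$, $m_j=1$, whereupon the genus formula gives $D_j\cdot\sum_{i\ne j}m_iD_i=0$ and connectedness yields $D=D_j$. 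If you want to keep your adjunction-based framing, you would need to prove the positivity $D_j\cdot(D-D_j)>0$ by some such independent geometric input rather than by appeal to a connectedness criterion that does not exist.
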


\begin{proof}
	Take a geometric basis $\mathcal{H}, \mathcal{E}_1, \ldots, \mathcal{E}_K \in \Pic(X)$ and let $\pi \colon X \to \mathbb{P}^2$ be the corresponding blow-down.
	Using this basis, each $[D_j]$ is expressed as
	\[
		[D_j] = c_{j 0} \mathcal{H} + c_{j 1} \mathcal{E}_1 + \cdots + c_{j K} \mathcal{E}_N.
	\]
	Here, $c_{j 0} \ge 0$ since $D_j$ is effective.
	Moreover, since $- \K_X = 3 \mathcal{H} - \mathcal{E}_1 - \cdots - \mathcal{E}_K$, we have
	\[
		3 = \sum_j m_j c_{j 0},
	\]
	which implies $c_{j 0} \le 3$.

	From here on, we fix $j$.
	If $c_{j 0} = 0$, then $D_j$ is the strict transform of some exceptional curve of $\pi$.
	Therefore, $D_j$ is rational and $g_a(D_j) = 0$.
	Let us consider the remaining cases, i.e.\ $c_{j 0} = 1, 2, 3$.
	In these cases, $C := \pi(D_j) \subset \mathbb{P}^2$ is a curve in $\mathbb{P}^2$ and its class is $[C] = \mathcal{O}_{\mathbb{P}^2}(c_{j 0})$.
	Therefore, by the genus formula on $\mathbb{P}^2$, we have $g_a(C) = \frac{1}{2} (c_{j 0}-2) (c_{j 0} - 1) \le 1$.
	Since taking the strict transform of a curve with respect to a birational morphism does not increase its arithmetic genus, we conclude that $g_a(D_j) \le g_a(C) \leq 1$.

	If $g_a(D_j) = 1$, then $c_{j 0} = 3$ and $m_j = 1$.
	In this case, by the genus formula, we have
	\[
		D_j \cdot \sum_{i \ne j} m_i D_i = D_j \cdot (- \K_X - [D_j]) = 0,
	\]
	which implies $D = D_j$ since $D$ is connected.
\end{proof}

\begin{lemma}\label{lemma:irreducible}
	If $D$ is irreducible, then $g_a(D) = 1$.
\end{lemma}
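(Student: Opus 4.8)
The plan is to show that irreducibility forces $D$ to be a reduced prime divisor representing the class $-\K_X$, after which the arithmetic genus drops out of adjunction. First I would retain the notation from the proof of Lemma~\ref{lemma:genus}: fix a geometric basis $\mathcal{H}, \mathcal{E}_1, \dots, \mathcal{E}_K$ of $\Pic(X)$ coming from a blow-down $\pi \colon X \to \p^2$, write $[D_j] = c_{j0}\mathcal{H} + \sum_{i=1}^{K} c_{ji}\mathcal{E}_i$, and recall the relation $3 = \sum_j m_j c_{j0}$ obtained there by pairing against $-\K_X = 3\mathcal{H} - \sum_{i=1}^K \mathcal{E}_i$. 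If $D$ is irreducible it has a single component, so $D = m_1 D_1$ and the relation becomes $3 = m_1 c_{10}$.

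Next I would pin down $m_1$. A component with $c_{10} = 0$ is the strict transform of an exceptional curve and contributes nothing to $3 = m_1 c_{10}$, so $c_{10} \ge 1$ and the only options are $(m_1, c_{10}) = (1,3)$ or $(3,1)$. The crux of the argument is ruling out $(3,1)$. In that case $-\K_X = 3[D_1]$, i.e.\ $\K_X \in 3\,\Pic(X)$; but in the geometric basis $\K_X = -3\mathcal{H} + \sum_{i=1}^K \mathcal{E}_i$ has every $\mathcal{E}_i$-coefficient equal to $1$, and $X$ arises from $\p^2$ by a nonempty sequence of blow-ups (here $K = \rho(X) - 1 \ge 10$, since non-integrability forces $\rho(X) > 10$), so $\K_X \notin 3\,\Pic(X)$ -- a contradiction. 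Thus $m_1 = 1$, meaning $D = D_1$ is reduced with $[D] = -\K_X$.

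Finally I would invoke the genus formula: since $D$ is an irreducible curve with $D \sim -\K_X$, one has $D + \K_X \sim 0$, whence $2 g_a(D) - 2 = D \cdot (D + \K_X) = 0$ and therefore $g_a(D) = 1$. I expect the only real obstacle to be the exclusion of the non-reduced possibility $D = 3 D_1$; this is exactly where the hypothesis that $X$ is a genuine (non-$\p^2$) blow-up enters, and absent it the triple line in $|-\K_{X}|$ on $\p^2$ itself would be a bona fide exception with $g_a = 0$.
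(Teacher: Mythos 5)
Your argument is correct, and its decisive step---applying the genus formula to an irreducible curve with $[D]=-\K_X$---is exactly the paper's one-line proof. The preliminary work ruling out the non-reduced case $D=3D_1$ (via $3=m_1c_{10}$ and $\K_X\notin 3\Pic(X)$) is sound but unnecessary under the paper's convention, where ``$D$ is irreducible'' already means $D$ is a prime divisor, so $-\K_X-[D]=0$ holds immediately by hypothesis.
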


\begin{proof}
	Since $- \K_X - [D] = 0$ in this case, we have $g_a(D) = 1$ by the genus formula.
\end{proof}

\begin{lemma}\label{lemma:genus_simple}
	If $D$ is not irreducible and $m_j = 1$, then $D_j \cdot \sum_{i \ne j} m_i D_i = 2$.
\end{lemma}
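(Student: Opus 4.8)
The plan is to reduce everything to the adjunction (genus) formula together with the fact, already established in Lemma~\ref{lemma:genus}, that when $D$ is reducible each of its components is a smooth rational curve. The key algebraic observation is that the hypothesis $m_j = 1$ lets me rewrite the divisor $\sum_{i \ne j} m_i D_i$ as a difference of classes in $\Pic(X)$. Indeed, since $[D] = - \K_X$ and $m_j = 1$, I have
\[
	\left[ \sum_{i \ne j} m_i D_i \right] = [D] - [D_j] = - \K_X - [D_j],
\]
so the intersection number in question becomes
\[
	D_j \cdot \sum_{i \ne j} m_i D_i = [D_j] \cdot (- \K_X - [D_j]) = - [D_j] \cdot \K_X - (D_j)^2.
\]

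Next I would invoke the genus formula for the irreducible curve $D_j$ on the smooth surface $X$, namely $2 g_a(D_j) - 2 = (D_j)^2 + [D_j] \cdot \K_X$. Since $D$ is assumed not irreducible, Lemma~\ref{lemma:genus} guarantees that every component, and in particular $D_j$, is a smooth rational curve, so $g_a(D_j) = 0$. Substituting this into the genus formula yields $(D_j)^2 + [D_j] \cdot \K_X = -2$, equivalently $[D_j] \cdot \K_X = -2 - (D_j)^2$. Plugging this back into the expression above gives
\[
	D_j \cdot \sum_{i \ne j} m_i D_i = -\bigl( -2 - (D_j)^2 \bigr) - (D_j)^2 = 2,
\]
which is exactly the claim.

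There is no substantial obstacle here: the entire argument is a one-line consequence of adjunction once the class identity $[\sum_{i\ne j} m_i D_i] = -\K_X - [D_j]$ is in place. The only point requiring care is the appeal to Lemma~\ref{lemma:genus} to force $g_a(D_j) = 0$, which is precisely why the reducibility hypothesis on $D$ is needed; if $D$ were irreducible one would instead be in the situation of Lemma~\ref{lemma:irreducible} with $g_a(D) = 1$, and the computation would not produce the value $2$. Thus the reducibility assumption is used exactly to rule out the genus-one case and to ensure the self-intersection bookkeeping goes through cleanly.
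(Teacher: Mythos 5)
Your proposal is correct and is essentially identical to the paper's own proof: both use the identity $[\sum_{i\ne j} m_i D_i] = -\K_X - [D_j]$ (valid because $m_j=1$), the fact that $g_a(D_j)=0$ when $D$ is reducible, and the genus formula to conclude $D_j \cdot (-\K_X - [D_j]) = 2$. Your write-up just spells out the adjunction bookkeeping that the paper leaves implicit.
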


\begin{proof}
	In this case, $g_a(D_j) = 0$ since $D$ is not irreducible.
	By the genus formula, we have
	\[
		2 = D_j \cdot (- \K_X - [D_j]) = D_j \cdot \sum_{i \ne j} m_i D_i.
	\]
\end{proof}

\begin{lemma}
	$D_j$ is orthogonal to $v$ for all $j$.
	Moreover, there exists $\ell \in \mathbb{Z}_{> 0}$ such that $\sigma^{\ell} [D_j] = [D_j]$ for all $j$.
\end{lemma}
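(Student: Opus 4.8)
The plan is to treat the two assertions separately, deriving the orthogonality first and then using it to extract the periodicity.

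For the orthogonality, I would first observe that $v \cdot \K_X = 0$. Indeed, since $\sigma$ is an isometry with $\sigma v = \lambda v$ and $\sigma \K_X = \K_X$, one has $v \cdot \K_X = (\sigma v) \cdot (\sigma \K_X) = (\lambda v) \cdot \K_X = \lambda\,(v \cdot \K_X)$, and $\lambda \neq 1$ forces $v \cdot \K_X = 0$. Because $[D] = - \K_X$, this reads $\sum_j m_j\,(v \cdot D_j) = 0$. Now $v$ is nef and each $D_j$ is an irreducible curve, so every summand $v \cdot D_j$ is non-negative; since the $m_j$ are strictly positive, each term must vanish, giving $v \cdot D_j = 0$ for all $j$.

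For the periodicity, the key point is that orthogonality places all of the classes $[D_j]$ inside the sublattice $v^{\perp} \cap \Pic(X)$. This sublattice is preserved by $\sigma$: if $w \cdot v = 0$ then $(\sigma w) \cdot v = \lambda^{-1} (\sigma w) \cdot (\sigma v) = \lambda^{-1}(w \cdot v) = 0$, so $\sigma$ restricts to a lattice isometry of $v^{\perp} \cap \Pic(X)$. By \cite[Lemma 4.15]{MASE}, whose proof applies verbatim in the present Cremona-isometry setting, the intersection form is negative definite on $v^{\perp} \cap \Pic(X)$. The orthogonal group of a definite lattice of finite rank is finite, because an isometry permutes the finitely many lattice vectors of each fixed norm and is determined by its values on a basis; hence it embeds in a finite symmetric group. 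Therefore $\sigma$ restricted to $v^{\perp} \cap \Pic(X)$ has some finite order $\ell \in \mathbb{Z}_{> 0}$, and then $\sigma^{\ell}$ fixes every element of $v^{\perp} \cap \Pic(X)$. In particular $\sigma^{\ell}[D_j] = [D_j]$ holds for all $j$ with a single common $\ell$, which is exactly the claim.

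The argument is short and I do not anticipate a serious obstacle; the only points requiring care are confirming that $v$ may be taken nef, which is guaranteed by the standing assumptions of this appendix and is what makes the non-negativity of $v \cdot D_j$ available, and checking that the negative-definiteness lemma of \cite{MASE}, stated there in the $\Pic(\X)$ language, transfers to the description by a rational surface and a Cremona isometry used here. Both are routine given the equivalence of the two formulations noted at the start of the appendix.
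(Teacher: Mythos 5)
Your proposal is correct and follows essentially the same route as the paper: nefness of $v$ plus effectivity of each $D_j$ gives $v\cdot D_j\ge 0$, the vanishing of $v\cdot\K_X$ forces each term in $\sum_j m_j\,(v\cdot D_j)=0$ to vanish, and negative definiteness of $v^{\perp}\cap\Pic(X)$ (via the cited lemma of \cite{MASE}) makes the restriction of $\sigma$ an element of a finite orthogonal group, hence of finite order. The only difference is that you spell out two steps the paper leaves implicit, namely the derivation of $v\cdot\K_X=0$ from $\sigma\K_X=\K_X$, $\sigma v=\lambda v$, $\lambda\neq 1$, and the finiteness of the isometry group of a definite lattice.
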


\begin{proof}
	Since $v$ is nef and $D_j$ is effective, we have
	\[
		v \cdot [D_j] \ge 0.
	\]
	Multiplying both sides by $m_j$ and taking the sum over $j$, we have
	\[
		v \cdot (- \K_X) \ge 0.
	\]
	However, Since $v \cdot (- \K_X) = 0$, $v \cdot D_j$ must be $0$ for all $j$.
	Therefore, $D_j \in v^{\perp} \cap \Pic(X)$.
	Since $\sigma$ preserves the intersection form on the lattice $v^{\perp} \cap \Pic(X)$, which by \cite[Lemma 4.15]{MASE} is negative definite, there exists $\ell > 0$ such that $\left( \sigma \Big|_{v^{\perp} \cap \Pic(X)} \right)^{\ell} = \operatorname{id}_{v^{\perp} \cap \Pic(X)}$.
\end{proof}

\begin{lemma}
	$D^2_j < 0$.
\end{lemma}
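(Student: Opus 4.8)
The plan is to read the result off directly from the preceding lemma together with the negative definiteness that was used in its proof, so that almost no new work is needed. First I would invoke the immediately preceding lemma, which asserts that $[D_j] \in v^{\perp} \cap \Pic(X)$ for every $j$; this is the only structural input required. The crucial auxiliary fact, already exploited there, is that the intersection form restricted to the sublattice $v^{\perp} \cap \Pic(X)$ is negative definite. This is the non-autonomous analogue of \cite[Lemma 4.15]{MASE} for the Cremona isometry $\sigma$, and it holds because $\lambda > 1$ is the unique eigenvalue of $\sigma$ of modulus exceeding one (indeed $\lambda$ is irrational), so the orthogonal complement of the dominant eigendirection $v$ carries a negative definite form.

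Given this, the argument reduces to checking that $[D_j]$ is a nonzero class, after which negative definiteness yields $D_j^2 < 0$ automatically. For this I would argue that an irreducible curve on a smooth projective surface is never linearly equivalent to zero: if $[D_j] = 0$ then $\mathcal{O}_X(D_j) \cong \mathcal{O}_X$, forcing $h^0(\mathcal{O}_X(D_j)) = h^0(\mathcal{O}_X) = 1$, so that the only effective member of the class is the zero divisor, contradicting that $D_j$ is a nonzero effective divisor. Hence $[D_j] \neq 0$ as an element of $v^{\perp} \cap \Pic(X)$, and negative definiteness of the form on this lattice gives $D_j^2 = [D_j] \cdot [D_j] < 0$, as required.

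I expect no genuine obstacle here: the substantive work — the orthogonality of each component to $v$, and the negative definiteness of the intersection form on $v^{\perp} \cap \Pic(X)$ — is already supplied by the earlier lemmas, so the only point requiring any care is the elementary observation that the class of an irreducible curve cannot vanish. The same reasoning applies uniformly to all components $D_j$, and the strictness of the inequality is precisely what one expects in the non-integrable setting, where the dominant eigendirection $v$ is irrational and every anticanonical component lands in the negative definite orthogonal lattice $v^{\perp} \cap \Pic(X)$.
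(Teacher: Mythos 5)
Your proposal is correct and matches the paper's proof, which simply states that the result is clear from $D_j \in v^{\perp} \cap \Pic(X)$ together with the negative definiteness of the intersection form on that sublattice. The only addition is your explicit verification that $[D_j] \neq 0$, which the paper leaves implicit; this is a harmless and valid elaboration.
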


\begin{proof}
	Clear from $D_j \in v^{\perp} \cap \Pic(X)$.
\end{proof}

\begin{lemma}
	$D_j$ is not an exceptional curve of the first kind.
	In particular, $D^2_j \le - 2$.
\end{lemma}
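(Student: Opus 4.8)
The plan is to argue by contradiction using the minimality of $X$ as a space of initial conditions, with Lemma~\ref{minimisationlemma} as the essential tool. Suppose $D_j$ were an exceptional curve of the first kind, so that $D_j^2 = -1$ and (by the genus formula) $g_a(D_j)=0$. First I would assemble the facts already available from the preceding lemmas: $D_j \in v^{\perp}\cap\Pic(X)$, the intersection form is negative definite on this sublattice, and there is $\ell\in\Z_{>0}$ with $\sigma^{\ell}[D_j]=[D_j]$, so the $\sigma$-orbit $\mathcal{O}=\{[D_j],\sigma[D_j],\dots,\sigma^{\ell-1}[D_j]\}$ is finite. Since $\sigma$ arises from the dynamics, it is induced by an automorphism of $X$ (equivalently, by the isomorphisms between the surfaces $X_n$ of the space of initial conditions, under the equivalence of the two descriptions in \cite{MASE}), and such maps carry irreducible $(-1)$-curves to irreducible $(-1)$-curves; hence every class in $\mathcal{O}$ is the class of an exceptional curve of the first kind. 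This identification of the whole orbit with genuine $(-1)$-curves, rather than with merely effective classes having the correct numerics, is the delicate point and the main obstacle; I would resolve it precisely by appealing to the geometric realisation of $\sigma$.

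Next I would check that the curves in $\mathcal{O}$ are mutually disjoint. As $\sigma v=\lambda v$ and $\sigma$ is an isometry, $\sigma$ preserves $v^{\perp}$, so every class in $\mathcal{O}$ lies in the negative definite lattice $v^{\perp}\cap\Pic(X)$. For two distinct classes $\mathcal{C}_1,\mathcal{C}_2\in\mathcal{O}$ the sum $\mathcal{C}_1+\mathcal{C}_2$ is a nonzero element of this lattice, so
\[
0 > (\mathcal{C}_1+\mathcal{C}_2)^2 = -2 + 2\,\mathcal{C}_1\cdot\mathcal{C}_2 ,
\]
giving $\mathcal{C}_1\cdot\mathcal{C}_2\le 0$; since distinct classes of negative self-intersection have distinct (unique) irreducible representatives, which meet non-negatively, this forces $\mathcal{C}_1\cdot\mathcal{C}_2=0$. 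Thus $\mathcal{O}$ is a finite set of classes of mutually disjoint exceptional curves of the first kind that is permuted by $\sigma$, and Lemma~\ref{minimisationlemma} yields a nontrivial minimisation of the space of initial conditions, contradicting the minimality of $X$. Therefore $D_j$ is not an exceptional curve of the first kind.

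Finally, for the self-intersection bound I would combine this with the previous lemma giving $D_j^2<0$, so $D_j^2\le -1$. When $D$ is reducible, Lemma~\ref{lemma:genus} forces $g_a(D_j)=0$, and then $D_j^2=-1$ would, via the genus formula, make $D_j$ an exceptional curve of the first kind, which we have just excluded; hence $D_j^2\le -2$. In the remaining case $D$ irreducible, Lemma~\ref{lemma:irreducible} gives $g_a(D_j)=1$, so $D_j$ is automatically not an exceptional curve of the first kind, and the first assertion holds trivially. Overall, the computational steps (the orbit finiteness, the disjointness estimate, and the genus-formula bookkeeping) are routine once the geometric input on $(-1)$-curves is in place, which is why that input is where I expect the real work to lie.
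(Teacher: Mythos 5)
Your argument is correct for the main assertion, but it takes a noticeably different route from the paper. The paper's proof is a one-line appeal to an external result: by \cite[Proposition~4.17]{MASE}, minimality of $X$ as a space of initial conditions already implies that there is no exceptional curve of the first kind orthogonal to $v$, and the preceding lemma places $D_j$ in $v^{\perp}$. You instead reprove that cited fact from first principles: finiteness of the $\sigma$-orbit of $[D_j]$, negative definiteness on $v^{\perp}\cap\Pic(X)$ forcing pairwise orthogonality via $(\mathcal{C}_1+\mathcal{C}_2)^2=-2+2\,\mathcal{C}_1\cdot\mathcal{C}_2<0$, and then Lemma~\ref{minimisationlemma} to contradict minimality. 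This is precisely the technique the paper itself deploys in the proof of Theorem~\ref{effectivenesstheorem}, so your route is consistent with the paper's toolbox; what it buys is self-containedness, at the cost of having to justify the one genuinely delicate step you correctly single out, namely that every class in the orbit is represented by an \emph{irreducible} $(-1)$-curve. Pure lattice data do not give this (an effective class with $\F^2=\F\cdot\K_X=-1$ need not be irreducible), so your appeal to the geometric realisation of $\sigma$ by the isomorphisms $\tilde\varphi_n$ is indeed necessary and is the right fix; the appendix's remark that the two descriptions of a space of initial conditions are equivalent licenses it.

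One caveat on the ``in particular'' clause: your case split is more careful than the paper's proof (which does not address it at all), and your reducible case is complete, but in the irreducible case you establish only the first assertion and not $D^2\le-2$. Indeed, when $D$ is irreducible of arithmetic genus $1$ the genus formula gives $D\cdot\K_X=-D^2=-\K_X^2$, so $D^2=-1$ is numerically consistent with $\K_X^2=-1$ and is not excluded by $D_j$ failing to be an exceptional curve of the first kind. This residual case is a gap shared with the paper as written rather than a defect introduced by your argument, but it would be worth flagging explicitly rather than leaving the irreducible case with only the weaker bound $D^2\le-1$.
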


\begin{proof}
	Since $X$ is minimal, there does not exist an exceptional curve of the first kind $C$ such that $C \cdot v = 0$ \cite[Proposition~4.17]{MASE}.
	Since $D_j$ is orthogonal to $v$, $D_j$ is not an exceptional curve of the first kind.
\end{proof}

\begin{lemma}\label{lemma:negative_definite}
	For any nontrivial linear combination $F = \sum_j c_j [D_j]$ with $c_j \in \mathbb{R}$, we have $F^2 < 0$.
\end{lemma}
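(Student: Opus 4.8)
The plan is to reduce the whole statement to the negative definiteness of the intersection form on $v^{\perp} \cap \Pic(X)$, which is recorded above and proved in \cite[Lemma 4.15]{MASE}. By the preceding lemma every component class $[D_j]$ lies in $v^{\perp} \cap \Pic(X)$, so any real combination $F = \sum_j c_j [D_j]$ lies in the real span $v^{\perp} \cap \Pic_{\R}(X)$, on which the form is negative definite; hence $F^2 \le 0$, with equality exactly when $F = 0$. Once this is in hand, the entire content of the lemma is the assertion that a \emph{nontrivial} choice of coefficients produces a nonzero class, i.e.\ that $[D_1], \dots, [D_N]$ are linearly independent over $\R$. So the real work is establishing linear independence.

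To prove linear independence I would argue by contradiction, splitting a hypothetical relation into its positive and negative parts. Suppose $\sum_j c_j [D_j] = 0$ with not all $c_j$ zero, and set $P = \{ j : c_j > 0\}$, $N = \{j : c_j < 0\}$, $A = \sum_{j \in P} c_j [D_j]$ and $B = \sum_{k \in N} (-c_k)[D_k]$, so that $A = B$ in $\Pic_{\R}(X)$. First I would rule out the degenerate case in which one of $P$, $N$ is empty: if, say, $N = \emptyset$ then $A = 0$, yet pairing with an ample class $H$ gives $H \cdot A = \sum_{j \in P} c_j (H \cdot D_j) > 0$, because each effective irreducible curve $D_j$ satisfies $H \cdot D_j > 0$ and the $c_j$ are positive. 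This contradiction forces both $P$ and $N$ to be nonempty, and by construction they are disjoint.

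The crux is then the following computation. Since $P \cap N = \emptyset$, the curves occurring in $A$ are distinct from those occurring in $B$, so every cross term $[D_j]\cdot[D_k]$ with $j \in P$, $k \in N$ is nonnegative, distinct irreducible curves meeting nonnegatively. Therefore
\[
A^2 = A \cdot B = \sum_{j \in P}\sum_{k \in N} c_j (-c_k)\, [D_j] \cdot [D_k] \ge 0,
\]
as all the coefficients $c_j$ and $-c_k$ are positive. On the other hand $A \neq 0$ lies in $v^{\perp} \cap \Pic_{\R}(X)$, where negative definiteness gives $A^2 < 0$. This contradiction shows no nontrivial relation exists, so the $[D_j]$ are linearly independent; combined with the first paragraph this yields $F^2 < 0$ for every nontrivial $F$.

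The step I expect to require the most care is the positive/negative splitting, specifically the observation that the index sets $P$ and $N$ are disjoint, which is precisely what keeps the self-intersection terms $[D_j]^2 < 0$ out of the product $A \cdot B$ and makes $A^2 \ge 0$ rather than merely of indefinite sign. The auxiliary point that both parts are nonempty, dealt with by pairing against an ample class, is the other detail that cannot be omitted, since an identically positive relation would otherwise escape the argument.
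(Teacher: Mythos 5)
Your reduction to the negative definiteness of the intersection form on $v^{\perp}\cap\Pic(X)$ is exactly the paper's argument: the published proof is three lines, observing that the form is negative definite on the lattice, hence on $V=\left(v^{\perp}\cap\Pic(X)\right)\otimes\R$, and that $F\in V\setminus\{0\}$ forces $F^2<0$. Where you differ is that you do not take $F\neq 0$ for granted: you supply a Mumford-style argument (splitting a hypothetical relation $\sum_j c_j[D_j]=0$ into positive and negative parts $A=B$, ruling out an empty part by pairing with an ample class, and deriving $0\le A\cdot B=A^2<0$) to show that the classes $[D_1],\dots,[D_N]$ are linearly independent, so that ``nontrivial coefficients'' really does imply ``nonzero class.'' That argument is correct, and it is not idle pedantry: the lemma is invoked immediately afterwards to conclude that the Gram matrix $\left(D_i\cdot D_j\right)_{i,j}$ is negative definite, which requires precisely the statement for arbitrary nonzero coefficient vectors, i.e.\ linear independence of the $[D_j]$ together with negative definiteness on their span. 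The paper's proof leaves this step implicit; your version closes it. The one point worth stating explicitly in your write-up is the (standard, but not completely trivial) fact that an integral form that is negative definite on a lattice remains negative definite after tensoring with $\R$ — the paper also asserts this without comment.
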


\begin{proof}
	The intersection form is negative definite on the lattice $v^{\perp} \cap \Pic(X)$.
	Therefore, the form is negative definite on the real vector space $V := \left( v^{\perp} \cap \Pic(X) \right) \otimes \mathbb{R}$.
	Since $F \in V \setminus \{ 0 \}$, we have $F^2 < 0$.
\end{proof}

\begin{lemma}
	For any $a_j \in \mathbb{Z}_{\ge 0}$, we have
	\[
		h^0 \left( \sum_j a_j D_j \right) = 1,
	\]
	which implies that $D$ is the only effective divisor that belongs to the class $- \K_X$.
	In particular, $\sigma$ acts on the set $\{ D_1, \ldots, D_N \}$ as a permutation.
\end{lemma}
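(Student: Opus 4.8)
The plan is to reduce the whole statement to the single assertion that every effective divisor of the form $G = \sum_j a_j D_j$, with $a_j \in \Z_{\ge 0}$ not all zero, is the \emph{unique} effective divisor in its linear equivalence class, i.e.\ $h^0(G) = 1$ (the case $G = 0$ being trivial). Since $G$ is effective we have $h^0(G) \ge 1$ automatically, so the real task is the upper bound $h^0(G) \le 1$. The two inputs I would lean on are Lemma~\ref{lemma:negative_definite}, which says that $[D_1], \dots, [D_N]$ span a negative definite sublattice of $\Pic(X)$ (hence are linearly independent), and the elementary fact that the movable part of a positive-dimensional complete linear system has nonnegative self-intersection.

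For $h^0(G) \le 1$ I would argue by contradiction. Suppose $h^0(G) \ge 2$ and split the complete linear system $|G|$ into its fixed part $F$ and its movable part $|M|$, so that $\dim |M| = \dim |G| \ge 1$ and $|M|$ has no fixed components. Because $G$ is itself a member of $|G|$, the fixed part obeys $F \le G$, hence $F = \sum_j f_j D_j$ with $0 \le f_j \le a_j$, and the movable class is $[M] = [G] - [F] = \sum_j (a_j - f_j)[D_j]$, again a nonnegative combination of the $[D_j]$. Now two general members of $|M|$ share no component, giving $M^2 \ge 0$; but if $[M] \ne 0$ then $M^2 < 0$ by Lemma~\ref{lemma:negative_definite}. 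The only way out is $[M] = 0$, which forces $\dim |M| = 0$ and contradicts $\dim |M| \ge 1$. Therefore $h^0(G) = 1$.

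Taking $(a_j) = (m_j)$ yields $h^0(-\K_X) = 1$, so $D$ is the only effective divisor in $|-\K_X|$. For the permutation claim I would exploit that $\sigma$ fixes $-\K_X$ and preserves effectiveness: picking effective representatives $E_j$ of the effective classes $\sigma[D_j]$, the divisor $\sum_j m_j E_j$ has class $\sigma(-\K_X) = -\K_X$, hence lies in $|-\K_X| = \{ D \}$, so $\sum_j m_j E_j = D = \sum_k m_k D_k$. As the $m_j$ are positive, each $E_j$ is supported on $D_{\operatorname{red}}$, whence $\sigma[D_j] = \sum_k c_{jk}[D_k]$ with $c_{jk} \in \Z_{\ge 0}$. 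To upgrade the nonnegative matrix $(c_{jk})$ to a permutation matrix, I would invoke the earlier lemma furnishing $\ell > 0$ with $\sigma^{\ell}[D_j] = [D_j]$ for all $j$: then $\sigma$ is a finite-order lattice automorphism of the simplicial cone $\sum_k \Z_{\ge 0}[D_k]$ mapping it into itself, hence a bijection of the cone, so it must permute the extremal rays $\R_{\ge 0}[D_k]$; sending the primitive generator $[D_j]$ to a primitive generator then gives $\sigma[D_j] = [D_{\tau(j)}]$ for a permutation $\tau$.

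The fixed/movable decomposition and the cone bookkeeping are routine. The genuine crux, I expect, is the confrontation in the $h^0(G) \le 1$ step between the geometric inequality $M^2 \ge 0$ for the movable part and the lattice inequality $M^2 < 0$ coming from negative definiteness; making these meet depends precisely on the containment $F \le G$, which keeps the movable class inside the negative definite span of the $[D_j]$. The second delicate point is the final passage from a nonnegative integer matrix to a permutation matrix, but this resolves cleanly once the finite order of $\sigma$ on $v^{\perp} \cap \Pic(X)$ is used.
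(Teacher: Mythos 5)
Your proof is correct, and its engine is the same as the paper's: everything rests on the negative definiteness of the intersection form on the span of the $[D_j]$ (Lemma~\ref{lemma:negative_definite}). The difference is one of self-containment. The paper's proof of the $h^0$ statement is a two-line appeal to that lemma together with a citation of \cite[Proposition~A.9]{MASE}, and it leaves the permutation claim entirely implicit; you instead prove the key implication directly, via the decomposition of $|G|$ into fixed part $F$ and movable part $|M|$, observing that $F \le G$ keeps $[M]$ inside the negative definite span so that $M^2 \ge 0$ (from the absence of fixed components) collides with $M^2 < 0$ unless $[M]=0$, which kills $\dim|M|\ge 1$. This is exactly the standard argument behind the cited proposition, so what your route buys is independence from the external reference at the cost of a page of routine verification. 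Your treatment of the permutation claim is also sound and more detailed than the paper's: uniqueness of $D$ in $|-\K_X|$ forces each effective representative of $\sigma[D_j]$ to be supported on $D_{\operatorname{red}}$, and the finite order of $\sigma$ on $v^{\perp}\cap\Pic(X)$ upgrades the resulting nonnegative integer matrix to a permutation matrix via the extremal-ray argument (linear independence of the $[D_j]$, needed for the cone to be simplicial, again comes from negative definiteness). No gaps.
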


\begin{proof}
	By Lemma~\ref{lemma:negative_definite}, the matrix $\left( D_i \cdot D_j \right)_{i, j}$ is negative definite.
	Hence, we have $h^0 \left( \sum_j a_j D_j \right) = 1$ \cite[Proposition~A.9]{MASE}.
\end{proof}

\begin{lemma}
	Let $E \subset X$ be an exceptional curve of the first kind.
	Then, there exists $i$ such that
	$m_j D_j \cdot E = \delta_{i, j}$.
	In particular, $D$ has a component of multiplicity $1$.
\end{lemma}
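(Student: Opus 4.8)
The plan is to reduce the whole statement to the single numerical identity $D \cdot E = 1$ together with the positivity of the individual intersection numbers $D_j \cdot E$. First I would record that, since $E$ is an exceptional curve of the first kind, we have $E^2 = -1$ and $g_a(E) = 0$; the genus (adjunction) formula $2 g_a(E) - 2 = E^2 + E \cdot \K_X$ then yields $E \cdot \K_X = -1$, that is, $(-\K_X) \cdot E = 1$. Because $D \in |-\K_X|$, so that $[D] = -\K_X$, this is exactly the relation $\sum_{j} m_j \, (D_j \cdot E) = D \cdot E = 1$.

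The next step is to show that each summand on the left is a nonnegative integer. Here I would invoke the earlier lemma that no component $D_j$ is an exceptional curve of the first kind: since $E$ is such a curve, $E$ is distinct from every $D_j$, and as $E$ and $D_j$ are then distinct irreducible curves their intersection number satisfies $D_j \cdot E \ge 0$, while $m_j > 0$ by hypothesis. Thus $\sum_j m_j \, (D_j \cdot E) = 1$ expresses $1$ as a sum of nonnegative integers, so exactly one term equals $1$ and all the others vanish. Writing $i$ for the index of the nonzero term, this forces $m_i \, (D_i \cdot E) = 1$, whence $m_i = 1$ and $D_i \cdot E = 1$, and $D_j \cdot E = 0$ for $j \ne i$ (again using $m_j > 0$). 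Reassembling these, $m_j \, (D_j \cdot E) = \delta_{i, j}$ for every $j$, and the ``in particular'' clause is immediate from $m_i = 1$.

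I do not expect a genuine obstacle: once $D \cdot E = 1$ is in hand the argument is pure bookkeeping with nonnegative integers. The only point demanding care is the inequality $D_j \cdot E \ge 0$, which hinges on $E$ being genuinely different from each component $D_j$ — and this is precisely where the previously established fact that none of the $D_j$ is a $(-1)$-curve (a consequence of the minimality of $X$) enters. If one wished to be thorough one could also note that this lemma is exactly the input needed for Proposition~\ref{propositionA1}, giving the component of multiplicity one and the distinguished $(-1)$-curve used in the construction of a sufficient subset.
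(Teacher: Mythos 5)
Your proof is correct and follows essentially the same route as the paper's: adjunction gives $(-\K_X)\cdot E=1$, and then $1=\sum_j m_j\,(D_j\cdot E)$ with nonnegative terms forces the conclusion. The only difference is that you explicitly justify $D_j\cdot E\ge 0$ by noting $E$ cannot coincide with any $D_j$ (since no component is a $(-1)$-curve), a detail the paper leaves implicit; this is a welcome clarification but not a different argument.
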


\begin{proof}
	By the genus formula, we have $E \cdot (- \K_X) = 1$.
	Since
	\[
		1 = \sum_i m_i D_i \cdot E
	\]
	and $m_i D_i \cdot E \ge 0$, there exists $i$ such that
	$m_j D_j \cdot E = \delta_{i, j}$.
\end{proof}

\begin{lemma}\label{lemma:classification1}
	Suppose $j,k$ are such that $j\neq k$, $m_j = 1$ and $D_j \cdot D_k > 0$.
	Then, only one of the following holds:
	\begin{enumerate}
		\item
		$D_j \cdot D_k = 1$,
		$m_k = 1$.

		\item
		$D_j \cdot D_k = 1$,
		$m_k = 2$.

		\item
		$D_j \cdot D_k = 2$,
		$m_k = 1$.

	\end{enumerate}
\end{lemma}

\begin{proof}
	Clear from
	\[
		0 < m_k D_j \cdot D_k
		\le D_j \cdot \sum_{i \ne j} m_i D_i
		= 2,
	\]
	which follows from Lemma \ref{lemma:genus_simple}.
\end{proof}

\begin{lemma}\label{lemma:classification_a1}
	If $D_1 \cdot D_2 = 2$ and $m_1 = m_2 = 1$, then $D = D_1 + D_2$.
	In this case, there are only two possible configurations of $D = D_1 + D_2$:
	\begin{enumerate}
		\item
		$D_1$ and $D_2$ intersect at a point with multiplicity $2$.

		\item\label{enum:classification_a1q}
		$D_1$ and $D_2$ intersect transversely at two points.

	\end{enumerate}
\end{lemma}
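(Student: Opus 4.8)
The plan is to first pin down the number of irreducible components of $D$ using the intersection-theoretic constraints already in place, and only then to read off the local geometry from the value $D_1 \cdot D_2 = 2$.

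First I would apply Lemma~\ref{lemma:genus_simple} to the component $D_1$, which has multiplicity $m_1 = 1$ in the reducible divisor $D$. This gives $D_1 \cdot \sum_{i \ne 1} m_i D_i = 2$. Splitting off the $D_2$-term and using the hypothesis $D_1 \cdot D_2 = 2$, I obtain $D_1 \cdot \sum_{i \ge 3} m_i D_i = 0$. Since each $D_i$ with $i \ge 3$ is an irreducible curve distinct from $D_1$, every summand $m_i (D_1 \cdot D_i)$ is non-negative, so $D_1 \cdot D_i = 0$ for all $i \ge 3$. Running the identical argument with $D_2$ (also of multiplicity one) yields $D_2 \cdot D_i = 0$ for all $i \ge 3$.

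The second step is to invoke connectedness of $D$ (the first lemma of this appendix). The previous step shows that the support $D_1 \cup D_2$ meets none of $D_3, \ldots, D_N$; hence if $N \ge 3$ the divisor $D$ would be disconnected, a contradiction. Therefore $N = 2$ and $D = D_1 + D_2$ with $m_1 = m_2 = 1$, which establishes the first assertion. Moreover, since $D$ is reducible, Lemma~\ref{lemma:genus} guarantees that each of $D_1$ and $D_2$ is a smooth rational curve.

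Finally I would classify the local intersection pattern. Writing $D_1 \cdot D_2 = \sum_{p} (D_1 \cdot D_2)_p$ as the sum of local intersection multiplicities over the points $p \in D_1 \cap D_2$, each of which is a positive integer, the total value $2$ can only be partitioned as $2$ (a single intersection point) or as $1 + 1$ (two intersection points); three or more points are impossible since each contributes at least $1$. For two smooth curves, a local multiplicity $1$ at a point means transversal intersection, while a local multiplicity $2$ means simple tangency, giving precisely configurations (1) and (2) in the statement. I do not anticipate a genuine obstacle: the first two steps are direct applications of Lemma~\ref{lemma:genus_simple} and connectedness, and the configuration analysis is elementary local intersection theory once the smoothness of the components is in hand. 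The only point needing minor care is that no single intersection point can contribute multiplicity greater than $2$ without exceeding the total, which is immediate from positivity of the remaining local terms.
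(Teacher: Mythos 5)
Your proof is correct and follows essentially the same route as the paper: the paper likewise applies the intersection bound from Lemma~\ref{lemma:genus_simple} (via Lemma~\ref{lemma:classification1}) to conclude that $D_1$ and $D_2$ meet no other component, and then invokes connectedness of $D$ to get $D=D_1+D_2$. The only difference is that you explicitly carry out the (elementary) partition of the local intersection multiplicities to justify the two configurations, a step the paper leaves implicit.
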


\begin{proof}
	By Lemma~\ref{lemma:classification1}, we have
	\[
		2 = D_1 \cdot D_2
		\le \sum_{i \ne 1} m_i D_1 \cdot D_i
		= 2,
	\]
	which implies that $D_2 \cdot D_i = 0$ for all $i \ne 1, 2$.
	Replacing $D_2$ with $D_1$, we have $D_1 \cdot D_i = 0$ for all $i \ne 1, 2$.
	Therefore, $D_1 + D_2$ is a connected component of $D$.
	Since $D$ is connected, $D$ must coincide with $D_1 + D_2$.
\end{proof}

\begin{lemma}\label{lemma:qclassification}
	If $D_1 \cdot D_2 = 1$ and $m_1 = m_2 = 1$, then only the following types of configurations are possible:
	\begin{enumerate}
		\item\label{enum:notcycle}
		$D = D_1 + D_2 + D_3$ and the configuration has only one intersection:
		\begin{center}
			\begin{picture}(200, 200)
				\put(20, 180){\line(1, - 1){160}}
				\put(0, 100){\line(1, 0){200}}
				\put(20, 20){\line(1, 1){160}}
				\put(0, 180){$D_1$}
				\put(0, 80){$D_2$}
				\put(0, 20){$D_3$}
			\end{picture}
		\end{center}
		
		\item\label{enum:cycle}
		$N \ge 3$ and $D = D_1 + \cdots + D_N$ forms a cycle of $\mathbb{P}^1$ (after reordering the indices):
		\begin{center}
			\begin{picture}(200, 200)
				\put(40, 160){\line(1, 0){120}}
				\put(80, 180){\line(-1, - 1){80}}
				\put(120, 180){\line(1, - 1){80}}
				\put(20, 140){\line(0, -1){60}}
				\put(180, 140){\line(0, -1){60}}
	
				\multiput(25, 75)(10, -10){4}{\circle*{2}}
				\multiput(175, 75)(-10, -10){4}{\circle*{2}}
				\multiput(67, 35)(16, 0){5}{\circle*{2}}
	
				\put(90, 145){$D_1$}
				\put(50, 130){$D_N$}
				\put(25, 100){$D_{N - 1}$}
				\put(135, 130){$D_2$}
				\put(160, 100){$D_3$}
			\end{picture}
		\end{center}

	\end{enumerate}
\end{lemma}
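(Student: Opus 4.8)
The plan is to read the configuration of $D$ off its intersection data, organised through the \emph{dual graph} $G$ whose vertices are the components $D_j$ and whose edges record the pairs with $D_i \cdot D_j > 0$. Since $D_1 \ne D_2$, the divisor $D$ is not irreducible, so by Lemma~\ref{lemma:genus} every component is a smooth rational curve; and by the connectedness lemma $D$ is connected, hence so is $G$. The one relation driving everything is the genus identity of Lemma~\ref{lemma:genus_simple}: for each multiplicity-one component, $D_j \cdot (D - D_j) = \sum_{i \ne j} m_i (D_i \cdot D_j) = 2$. I would think of the right-hand side as a ``budget'' of $2$ that each multiplicity-one vertex distributes among its neighbours, a neighbour $D_k$ consuming $m_k (D_j \cdot D_k)$ units.

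First I would analyse the given vertex $D_1$. Its edge to $D_2$ consumes $m_2(D_1 \cdot D_2) = 1$ unit, leaving a budget of $1$. By Lemma~\ref{lemma:classification1}, any further neighbour $D_k$ satisfies $(D_1 \cdot D_k, m_k) \in \{(1,1),(1,2),(2,1)\}$ and so consumes either $1$ or $2$ units; since only one unit remains and each neighbour consumes at least one, there is exactly one further neighbour, necessarily of type $(1,1)$. Hence $D_1$ meets precisely two components, $D_2$ and some $D_3$, each transversally and each of multiplicity one.

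The heart of the argument is then a propagation step. Let $S$ be the set of multiplicity-one components meeting exactly two other components, each with intersection number $1$. For $D_j \in S$ the budget identity reads $m_a + m_b = 2$ for its two neighbours $D_a, D_b$, which (each being $\ge 1$) forces $m_a = m_b = 1$. I would then show $S$ is closed under adjacency: the edge from such a neighbour $D_a$ back to $D_j$ consumes $m_j(D_a\cdot D_j)=1$ unit of $D_a$'s budget of $2$, so the same budget argument as above yields exactly one other neighbour of type $(1,1)$, whence $D_a \in S$. Since $D_1 \in S$ and $G$ is connected, $S$ is everything. In particular every $m_j = 1$ — no multiplicity-two component is ever reached — and every nonzero $D_i \cdot D_j$ equals $1$, so $G$ is a connected, simple, $2$-regular graph, i.e. a single cycle $C_N$ with $N \ge 3$.

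Finally I would translate this cyclic pattern into geometry: adjacent components meet transversally at a single point and non-adjacent ones are disjoint. For $N \ge 4$, no three components can share a point, since among any three at least two are non-adjacent and hence disjoint; the intersection points are thus all distinct, giving the cycle of case~(\ref{enum:cycle}). For $N = 3$ the three pairwise intersection points are either all distinct (again a cycle, now a triangle) or all coincide — for if two agree then all three curves pass through that point, forcing the third to agree too, which is exactly case~(\ref{enum:notcycle}). I expect the main obstacle to be the propagation step, specifically pinning down rigorously that a multiplicity-two component can never be reached from $D_1$; this rests on applying the budget bookkeeping together with Lemma~\ref{lemma:classification1} with care, after which the concluding geometric case distinction is comparatively routine.
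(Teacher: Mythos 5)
Your proposal is correct and follows essentially the same route as the paper: in both, the engine is the genus identity $D_j \cdot (D - D_j) = 2$ for multiplicity-one components (Lemma~\ref{lemma:genus_simple}) combined with the case list of Lemma~\ref{lemma:classification1}, propagated through the connected configuration to force every component to have multiplicity one and exactly two transversal neighbours. The paper organises the propagation as a sequential walk $D_2 \to D_3 \to \cdots$ that either closes into a cycle or terminates at the triple point, while you package the same counting as closure of the set $S$ under adjacency in the dual graph and defer the triple-point versus triangle distinction to the end; the step you flag as a possible obstacle (excluding multiplicity-two neighbours) is in fact already settled by your budget argument, exactly as in the paper.
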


\begin{proof}
	By Lemma~\ref{lemma:classification1}, we have
	\[
		2 = D_2 \cdot \sum_{i \ne 2} m_i D_i
		= D_2 \cdot \left( D_1 + \sum_{i \ne 1, 2} m_i D_i \right)
		= 1 + D_2 \cdot \sum_{i \ne 1, 2} m_i D_i,
	\]
	which implies that $D$ has only one component $D_{\ell}$ other than $D_1, D_2$ such that $D_2 \cdot m_{\ell} D_{\ell} = 1$.
	In particular, $m_{\ell} = 1$ and $D_2 \cdot D_{\ell} = 1$.
	We may assume that $\ell = 3$.
	
	If $D_1, D_2, D_3$ intersect at one point, then none of $D_1, D_2, D_3$ intersects $D - D_1 - D_2 - D_3$.
	Since $D$ is connected, $D$ must be $D_1 + D_2 + D_3$ in this case.

	Let us consider the other possibilities, i.e. when\ $D_1 \cap D_2 \cap D_3 = \emptyset$.
	In this case, $D_2$ does not intersect any $D_j$ other than $D_1, D_3$.
	Using Lemma~\ref{lemma:classification1} for $j = 3$, there exists a component $D_{\ell}$ other than $D_2, D_3$ such that $D_3 \cdot D_{\ell} = 1$ and $m_{\ell} = 1$.
	If $\ell = 1$, then $D = D_1 + D_2 + D_3$.
	If not, we may assume that $\ell = 4$.

	Using Lemma~\ref{lemma:classification1} for $j = 4$, there exists a component $D_{\ell}$ other than $D_3, D_4$ such that $D_4 \cdot D_{\ell} = 1$ and $m_{\ell} = 1$.
	Repeating this procedure, one can find a sequence $D_1, D_2, \ldots, D_N$ that forms a cycle.
\end{proof}

\begin{lemma}\label{lemma:cycle}
	Suppose that $D$ has a cycle $D_{j_1}, \ldots, D_{j_{\ell}}$ for $\ell \ge 3$, i.e.
	\[
		D_{j_i} \cdot D_{j_k} \begin{cases}
			> 0 & (i = k \pm 1 \mod \ell) \\
			< 0 & (i = j) \\
			= 0 & (\text{otherwise})
		\end{cases}
	\]
	for $i, k = 1, \ldots, \ell$.
	Then, $D = D_{j_1} + \cdots + D_{j_{\ell}}$ and the configuration of $D$ must be one of case \eqref{enum:cycle} in Lemma~\ref{lemma:qclassification}.
	Moreover, if $\operatorname{rank} H_1(D_{\operatorname{red}}; \mathbb{Z}) \ge 1$ then the configuration of $D$ must be either \eqref{enum:classification_a1q} of Lemma~\ref{lemma:classification_a1} or one of \eqref{enum:cycle} in Lemma~\ref{lemma:qclassification}.
\end{lemma}

\begin{proof}
	If $m_{j_i} = 1$ for some $i$, then, by Lemma~\ref{lemma:qclassification}, the configuration of $D$ must be one of \eqref{enum:cycle} in Lemma~\ref{lemma:qclassification}.
	Let us assume that $m_{j_1}, \ldots, m_{j_{\ell}} \ge 2$ and deduce a contradiction.

	Take a geometric basis $\mathcal{H}, \mathcal{E}_1, \ldots, \mathcal{E}_K \in \Pic(X)$ and the corresponding blow-down $\pi \colon X \to \mathbb{P}^2$.
	Using this basis, each $[D_{j_i}]$ is expressed as
	\[
		[D_{j_i}] = c_{i 0} \mathcal{H} + c_{i 1} + \mathcal{E}_1 + \cdots + c_{i K} \mathcal{E}_K.
	\]
	Since $D_{j_i}$ is effective, $c_{i 0}$ must be nonnegative.
	Since
	\[
		[m_1 D_{j_1} + \cdots + m_{\ell} D_{j_{\ell}}]
		\le [D]
		= - \K_X
		= 3 \mathcal{H} - \mathcal{E}_1 - \cdots - \mathcal{E}_N,
	\]
	$m_1, \ldots, m_{\ell} \ge 2$ and
	\[
		[m_1 D_{j_1} + \cdots + m_{\ell} D_{j_{\ell}}]
		= (m_1 c_{1 0} + \cdots + m_{\ell} c_{\ell 0}) \mathcal{H} + \left( \text{$\mathcal{E}$-part} \right),
	\]
	at most one of $c_{1 0}, \ldots, c_{\ell 0}$ is $1$ and the others are $0$.

	Let us consider the case where $c_{1 0} = 1$, $c_{2 0} = \cdots = c_{\ell 0} = 0$.
	Let $H = \pi (D_{j_1})$, which is a line in $\mathbb{P}^2$.
	In this case, $\pi(D_{j_2}), \ldots, \pi(D_{j_{\ell}})$ are points on $H$ and any intersection among $D_{j_1}, \ldots, D_{j_{\ell}}$ is transversal.
	Therefore, when considered as a graph, $D_{j_1}, \ldots, D_{j_{\ell}} \subset X$ form a tree with $D_{j_1}$ at its root, which contradicts the choice of $D_{j_1}, \ldots, D_{j_{\ell}}$
	since if a tree graph consists of copies of $\mathbb{P}^1$ and every intersection has multiplicity $1$, then it does not have a cycle.
	
	If $c_{1 0} = \cdots = c_{\ell 0} = 0$, then $\pi(D_{j_1}), \ldots, \pi(D_{j_{\ell}})$ are the same point of $\mathbb{P}^2$.
	Therefore, $D_{j_1}, \ldots, D_{j_{\ell}}$ form a tree and any intersection has multiplicity $1$, which leads to a contradiction in the same way as above.
\end{proof}

\begin{lemma}\label{lemma:classification_double}
	If $D_1 \cdot D_2 \ge 2$, then $D_1 \cdot D_2 = 2$ and $m_1 = m_2 = 1$.
	In particular, the configuration of $D$ is one of those in Lemma~\ref{lemma:classification_a1}.
\end{lemma}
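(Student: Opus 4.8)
The plan is to deduce everything from Lemma~\ref{lemma:classification1}, whose three alternatives already force $D_1 \cdot D_2 = 2$ and $m_1 = m_2 = 1$ as soon as one knows that one of the two multiplicities equals $1$. Indeed, if say $m_1 = 1$, then applying Lemma~\ref{lemma:classification1} with $j = 1$, $k = 2$ and using $D_1 \cdot D_2 \ge 2 > 0$ rules out its cases (1) and (2), leaving only case (3): $D_1 \cdot D_2 = 2$ and $m_2 = 1$. So the entire content of the statement is the implication $D_1 \cdot D_2 \ge 2 \Rightarrow \min(m_1, m_2) = 1$, after which Lemma~\ref{lemma:classification_a1} (whose hypothesis is exactly $D_1 \cdot D_2 = 2$, $m_1 = m_2 = 1$) supplies the two possible configurations of $D = D_1 + D_2$.

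First I would note that we are necessarily in the reducible case, so by Lemma~\ref{lemma:genus} every component is a smooth rational curve. The heart of the argument is to exclude $m_1 \ge 2$ and $m_2 \ge 2$ simultaneously, and here I would mimic the blow-down technique used in the proof of Lemma~\ref{lemma:cycle}. Fixing a geometric basis $\mathcal{H}, \mathcal{E}_1, \dots, \mathcal{E}_K$ with blow-down $\pi \colon X \to \mathbb{P}^2$, I write $[D_j] = c_{j0} \mathcal{H} + (\mathcal{E}\text{-part})$ with $c_{j0} \ge 0$; since $- \K_X = 3\mathcal{H} - \mathcal{E}_1 - \cdots - \mathcal{E}_K$, one has the key identity $3 = \sum_j m_j c_{j0}$. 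Assuming $m_1, m_2 \ge 2$ this gives $2(c_{10} + c_{20}) \le m_1 c_{10} + m_2 c_{20} \le 3$, hence $c_{10} + c_{20} \le 1$. Thus each of $D_1, D_2$ is either contracted by $\pi$ (when $c_{j0} = 0$, so it is the strict transform of an exceptional curve) or maps to a line (when $c_{j0} = 1$), and at most one of the two is a line.

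The step I expect to be the main obstacle is the geometric claim that, in exactly this situation, $D_1 \cdot D_2 \le 1$: the strict transforms of a line and/or of exceptional curves meet only transversally, because blowing up points can only separate branches or preserve transversality and never creates a tangency of strict transforms. This is precisely the ``tree with transversal intersections'' reasoning already carried out in the proof of Lemma~\ref{lemma:cycle} for the subcases $c_{j0} = 0$ and $c_{j0} = 1$, so I would invoke that argument rather than reprove it. Granting this, $D_1 \cdot D_2 \le 1$ contradicts the hypothesis $D_1 \cdot D_2 \ge 2$, so $m_1 \ge 2$ and $m_2 \ge 2$ cannot both hold. Hence $\min(m_1, m_2) = 1$, and the reduction of the first paragraph, followed by Lemma~\ref{lemma:classification_a1}, completes the proof. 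I would also remark that the negative-definiteness of Lemma~\ref{lemma:negative_definite} together with the genus relation $D_j \cdot (- \K_X) = D_j^2 + 2$ does \emph{not} by itself bound the multiplicities (the resulting inequalities are satisfiable for $m_1, m_2 \ge 2$), which is why the global blow-down input is indispensable here.
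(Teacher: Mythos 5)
Your proof is correct, and it reaches the conclusion by a genuinely leaner route than the paper's. The paper does not first isolate the claim $\min(m_1,m_2)=1$; instead it enumerates all pairs $(c_{10},c_{20})$ compatible with $m_1c_{10}+m_2c_{20}\le 3$, eliminates $(0,0)$, $(1,0)$, $(1,1)$, $(2,0)$ by the same transversality considerations you invoke, and is left with $(3,0)$ and $(2,1)$, whence $m_1=1$. At that point it still proves $D_1\cdot D_2=2$ by hand in each surviving case: for $(2,1)$ via $D_1\cdot D_2\le\pi(D_1)\cdot\pi(D_2)=2$, and for $(3,0)$ via the observation that a plane cubic of arithmetic genus $1$ has multiplicity at most $2$ at every point, so that $\pi(D_2)$ is at worst a double point of the cubic $\pi(D_1)$; only then does it apply Lemma~\ref{lemma:classification1} to get $m_2=1$. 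Your reorganisation --- deriving a contradiction from $m_1,m_2\ge 2$ via $c_{10}+c_{20}\le 1$ and $D_1\cdot D_2\le 1$, and then letting Lemma~\ref{lemma:classification1} deliver $D_1\cdot D_2=2$ and $m_2=1$ in one stroke from the identity $D_j\cdot\sum_{i\ne j}m_iD_i=2$ of Lemma~\ref{lemma:genus_simple} --- renders the entire second half of the paper's argument, in particular the cubic-multiplicity step, unnecessary. The transversality input you defer to the proof of Lemma~\ref{lemma:cycle} is precisely the standard fact the paper itself relies on (two exceptional components, or an exceptional component and the strict transform of a line, meet transversally in at most one point), so nothing is missing; and your closing remark that negative definiteness alone cannot bound the multiplicities correctly identifies why the global blow-down input is indispensable.
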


\begin{proof}
	Take a geometric basis $\mathcal{H}, \mathcal{E}_1, \ldots, \mathcal{E}_K \in \Pic(X)$ and the corresponding blow-down $\pi \colon X \to \mathbb{P}^2$.
	Using this basis, $[D_1]$ and $[D_2]$ are expressed as
	\[
		[D_i] = c_{i 0} \mathcal{H} + c_{i 1}  \mathcal{E}_1 + \cdots + c_{i K} \mathcal{E}_K
	\]
	for $i = 1, 2$.
	We may assume that $c_{1 0} \ge c_{2 0}$.
	Since $c_{i 0} \ge 0$ and $m_1 c_{1 0} + m_2 c_{2 0} \le 3$, we have $(c_{1 0}, c_{2 0}) = (0, 0)$,
	$(1, 0)$,
	$(1, 1)$,
	$(2, 1)$,
	$(2, 0)$ or
	$(3, 0)$.

	We show that only the cases $(c_{1 0}, c_{2 0}) = (3, 0)$ and $(2, 1)$ are possible.
	If $c_{1 0} = c_{2 0} = 0$, then $D_1$ and $D_2$ are both the strict transforms of some exceptional curves of $\pi$.
	Therefore, $D_1 \cdot D_2 = 0$ or $1$, which is a contradiction.
	If $c_{1 0} = c_{2 0} = 1$, then $\pi(D_1)$ and $\pi(D_2)$ are both lines in $\mathbb{P}^2$.
	Therefore, $\pi(D_1) \cdot \pi(D_2) = 1$, which contradicts the fact that $\pi(D_1) \cdot \pi(D_2) \ge D_1 \cdot D_2 = 2$.
	If $c_{1 0} = 1, 2$ and $c_{2 0} = 0$, then $\pi(D_1)$ is a smooth curve in $\mathbb{P}^2$ and $D_2$ is the strict transform of some exceptional curve of $\pi$.
	Therefore, $D_1 \cdot D_2 = 0$ or $1$, which is a contradiction.
	Therefore, we have $(c_{1 0}, c_{2 0}) = (3, 0), (2, 1)$.
	In particular, $m_1$ must be $1$.

	Next, we show that $D_1 \cdot D_2 = 2$.
	If $(c_{1 0}, c_{2 0}) = (2, 1)$, then $D_1 \cdot D_2 \le \pi(D_1) \cdot \pi(D_2) = 2$.
	Since $D_1 \cdot D_2 \ge 2$ by the assumption, we have $D_1 \cdot D_2 = 2$.
	Let us consider the case $(c_{1 0}, c_{2 0}) = (3, 0)$.
	Let $C = \pi(D_1) \subset \mathbb{P}^2$ and let $P = \pi(D_2) \in \mathbb{P}^2$.
	Since $c_{1 0} = 3$, the arithmetic genus of $C$ is $1$.
	Then, $C$ is either an elliptic curve, a rational curve with a node or a rational curve with a cusp.
	In particular, for any $Q \in C$, the multiplicity of $C$ at $Q$ is at most $2$.
	On the other hand, the multiplicity of $C$ at $P$ is at least $D_1 \cdot D_2 \ge 2$.
	Therefore, $C$ has multiplicity $2$ at $P$ and $D_1 \cdot D_2 = 2$.

	Since $D_1 \cdot D_2 = 2$ and $m_1 = 1$, it follows from Lemma~\ref{lemma:classification1} that $m_k = 1$.
	Hence, the configuration of $D_1$ and $D_2$ must be one of those in Lemma~\ref{lemma:classification_a1}.
\end{proof}

\begin{lemma}\label{lemma:classification_triple}
	If $D_1, D_2, D_3$ intersect at one point, then $m_1 = m_2 = m_3 = 1$ and $D = D_1 + D_2 + D_3$.
	In particular, the configuration of $D$ is \eqref{enum:notcycle} in Lemma~\ref{lemma:qclassification}.
\end{lemma}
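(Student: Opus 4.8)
The plan is to reduce everything to the single claim that at least one of $m_1,m_2,m_3$ equals $1$, and then to read off the rest from adjunction relations that are already available. First I would record that, since $D_1,D_2,D_3$ are three distinct components, $D$ is not irreducible, so by Lemma~\ref{lemma:genus} every $D_j$ is a smooth rational curve. As the three curves share a point, each pairwise intersection number is at least $1$; if some pair had $D_i\cdot D_j\ge 2$, then Lemma~\ref{lemma:classification_double} together with Lemma~\ref{lemma:classification_a1} would force $D=D_i+D_j$, contradicting the presence of the third curve. Hence $D_i\cdot D_j=1$ for each of the three pairs, and in particular the three branches at the common point have distinct tangent directions.

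The key reduction is the observation that, for any smooth rational component, the genus formula gives $D_j\cdot(D-D_j)=D_j\cdot D-D_j^2=(D_j^2+2)-D_j^2=2$, independently of the multiplicities. Writing $D-D_1=(m_1-1)D_1+\sum_{i\ge 2}m_iD_i$, this reads $(m_1-1)D_1^2+\sum_{i\ge 2}m_iD_1\cdot D_i=2$. Since $D_1^2<0$ and $D_1\cdot D_2=D_1\cdot D_3=1$, as soon as $m_1=1$ we obtain $m_2+m_3\le 2$, whence $m_2=m_3=1$ and $D_1\cdot D_i=0$ for all $i\ge 4$; running the same argument for $D_2$ and $D_3$ shows that none of the three meets any further component, so by connectedness $D=D_1+D_2+D_3$, which is exactly configuration \eqref{enum:notcycle} of Lemma~\ref{lemma:qclassification}. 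Thus it suffices to prove that some $m_j=1$.

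To do this I would argue by contradiction, assuming $m_1,m_2,m_3\ge 2$. Fixing a geometric basis $\mathcal{H},\mathcal{E}_1,\dots,\mathcal{E}_K$ with blow-down $\pi\colon X\to\p^2$ and writing $[D_j]=c_{j0}\mathcal{H}+\cdots$ with $c_{j0}\ge 0$, the identity $-\K_X=3\mathcal{H}-\mathcal{E}_1-\cdots-\mathcal{E}_K$ yields $\sum_j m_jc_{j0}=3$; with all three multiplicities at least $2$ this forces $c_{10}+c_{20}+c_{30}\le 1$. On the other hand, the exceptional locus of $\pi$ over any point of $\p^2$ has a tree as dual graph, so at most two $\pi$-exceptional curves can pass through the common point, and hence $c_{10},c_{20},c_{30}$ cannot all vanish. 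The only surviving possibility is that exactly one of the three, say $D_1$, is the strict transform of a line ($c_{10}=1$) while $D_2,D_3$ are contracted by $\pi$, with $2\le m_1\le 3$.

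The main obstacle is to eliminate this last ``line plus two exceptional curves through a point'' configuration, where naive counting breaks down: because contracted components may carry negative total-transform coefficients, the relation $\sum_j m_jc_{jk}=1$ at a centre $P_k$ no longer forces a multiplicity-one component directly. Here I would exploit that $D_1^2\le -2<1$, so several blow-up centres lie on the line, and analyse the exceptional tree over $P=\pi(p)$: applying the lemma that every exceptional curve of the first kind meets a unique component of $D$, necessarily of multiplicity one, to a leaf of that tree adjacent to $D_2$ or $D_3$ — or, equivalently, blowing up $p$ to separate the three branches, so that the new curve $F$ becomes a $(-1)$-curve meeting $\tilde D_1,\tilde D_2,\tilde D_3$ once each, and then tracking the induced multiplicities — I expect to locate a multiplicity-one component neighbouring the triple. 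Feeding this back into the reduction of the second paragraph then forces $m_1=m_2=m_3=1$ and $D=D_1+D_2+D_3$. I anticipate that this tree-combinatorial step, reconciling the negative coefficients of the contracted components with Lemma~\ref{lemma:negative_definite} and the integrality constraint $D_j^2\le -2$, is where essentially all the work lies.
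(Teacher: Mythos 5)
Your setup matches the paper's: the pairwise intersections are all $1$ by Lemma~\ref{lemma:classification_double}, the adjunction count $D_j\cdot\sum_{i\ne j}m_iD_i=2$ from Lemma~\ref{lemma:genus_simple} shows that a single multiplicity-one member among $D_1,D_2,D_3$ forces $m_1=m_2=m_3=1$ and, via Lemma~\ref{lemma:qclassification}, $D=D_1+D_2+D_3$; and the inequality $m_1c_{10}+m_2c_{20}+m_3c_{30}\le 3$ reduces everything to excluding $(c_{10},c_{20},c_{30})=(0,0,0)$ and $(1,0,0)$ (up to ordering). You exclude $(0,0,0)$ correctly. The genuine gap is the case $(1,0,0)$, which you explicitly leave open (``I expect to locate a multiplicity-one component'', ``this is where essentially all the work lies''). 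Neither of your two sketched strategies closes it as stated: blowing up the common point $p$ changes the surface, so the lemma that an exceptional curve of the first kind meets a unique component of $D$, of multiplicity one --- which is proved on $X$ itself, using $E\cdot(-\K_X)=1$ and the minimality of $X$ --- does not transfer to the new $(-1)$-curve $F$ on the blow-up, where the anticanonical class and the multiplicities are different; and applying that lemma to a $(-1)$-curve in the exceptional tree over $P$ only tells you that it meets \emph{some} multiplicity-one component of $D$, with no control that this component is one of $D_1,D_2,D_3$ or adjacent to them.

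The case is in fact much easier than you anticipate, and the paper disposes of it in one line. If $c_{10}=1$ and $c_{20}=c_{30}=0$, then $H=\pi(D_1)$ is a line and $D_2,D_3$ are exceptional components lying over the single point $P=\pi(D_2)=\pi(D_3)\in H$. Since $H$ is smooth at $P$ (multiplicity $1$), its strict transform meets the exceptional tree over $P$ at a single point lying on exactly one exceptional component: at each stage the strict transform of $H$ meets the newest exceptional curve transversally and is separated from all earlier ones. Hence $D_1$ cannot pass through a point where the two exceptional components $D_2$ and $D_3$ meet each other, so the configuration $(1,0,0)$ simply does not occur --- there is no multiplicity-one component to be extracted from it because the whole situation is geometrically impossible. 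With this observation inserted in place of your tree-combinatorial step, the rest of your argument goes through and coincides with the paper's proof.
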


\begin{proof}
	By Lemma~\ref{lemma:classification_double}, we have
	\begin{equation} \label{intersectioncondapp}
		D_1 \cdot D_2 = D_1 \cdot D_3 = D_2 \cdot D_3 = 1.
	\end{equation}
	Take a geometric basis $\mathcal{H}, \mathcal{E}_1, \ldots, \mathcal{E}_K \in \Pic(X)$ and the corresponding blow-down $\pi \colon X \to \mathbb{P}^2$.
	Using this basis, $[D_1], [D_2], [D_3]$ are expressed as
	\[
		[D_i] = c_{i 0} \mathcal{H} + c_{i 1}  \mathcal{E}_1 + \cdots + c_{i K} \mathcal{E}_K
	\]
	for $i = 1, 2, 3$.
	We may assume that $c_{1 0} \ge c_{2 0} \ge c_{3 0}$.

	First, we show that $m_1$ or $m_2$ is $1$.
	Since $c_{i 0} \ge 0$ and $m_1 c_{1 0} + m_2 c_{2 0} + m_3 c_{3 0} \le 3$, we have
	\[
		(c_{1 0}, c_{2 0}, c_{3, 0})
		= (0, 0, 0),
		(1, 0, 0),
		(1, 1, 0),
		(1, 1, 1),
		(2, 0, 0),
		(2, 1, 0)
		\text{ or }
		(3, 0, 0).	
	\]
	If $c_{1 0} = 0$, then $D_1, D_2, D_3$ are all the strict transforms of some exceptional curves of $\pi$, which contradicts the assumption that $D_1, D_2, D_3$ intersect at a point.
	To see this, write the classes of these strict transforms as $[D_i] = \mathcal{E}_{k_i} - \sum_{j \neq k_i} d_{i j} \mathcal{E}_j$, for $i=1,2,3$, where $d_{i,j} \in \Z_{\geq 0}$. 
	In order for condition \eqref{intersectioncondapp} to hold, we must have $k_1, k_2,k_3$ distinct and
	\begin{align*}
	[D_1] &= \mathcal{E}_{k_1} - \mathcal{E}_{k_2} - \mathcal{E}_{k_3} - \sum_{j\neq k_1,k_2,k_3} d_{1j} \mathcal{E}_j, \\
	[D_2] &= \mathcal{E}_{k_2} - \mathcal{E}_{k_1} - \mathcal{E}_{k_3} - \sum_{j\neq k_1,k_2,k_3} d_{2j} \mathcal{E}_j, \\
	[D_3] &= \mathcal{E}_{k_3} - \mathcal{E}_{k_1} - \mathcal{E}_{k_2} - \sum_{j\neq k_1,k_2,k_3} d_{3j} \mathcal{E}_j.
	\end{align*}
	However this would lead to a contradiction to the assumption that $D_i$ are effective divisors, since for example $D_1 + D_2$ has class being a sum of negative multiples of $\mathcal{E}_i$'s.
	
	If $(c_{1 0}, c_{2 0}, c_{3 0}) = (1, 0, 0)$, then $H := \pi(D_1) \subset \mathbb{P}^2$ is a line and $P := \pi(D_2) = \pi(D_3) \in H \subset \mathbb{P}^2$ is a point.
	In this case, since $H$ has multiplicity $1$ at $P$, it is impossible for $D_1, D_2, D_3$ to intersect at one point, which is again a contradiction.
	Therefore, we have $c_{10} + c_{20} \ge 2$, which implies that $m_j = 1$ for at least one $j\in \{1,2\}$.	
	
	Using Lemma~\ref{lemma:genus_simple}, we have
	\begin{align*}
		2 &= \sum_{i \ne j} m_i D_i \cdot D_j \\
		&= \sum_{k \in \{1,2,3\}, k\neq j} m_k D_k \cdot D_j + \sum_{i \ne 1, 2, 3} m_i D_i \cdot D_j \\
		&= \sum_{k \in \{1,2,3\}, k\neq j} m_k + \sum_{i \ne 1, 2, 3} m_i D_i \cdot D_j,
	\end{align*}
	which implies that $m_k = 1$ for $k\in \{1,2,3\}, k\neq j$.
	Therefore $m_1=m_2=m_3=1$, and the fact that $D_1,D_2,D_3$ intersect at a point means that by Lemma~\ref{lemma:qclassification} we have $D = D_1 + D_2 + D_3$.
\end{proof}

\begin{theorem}\label{theorem:classification}
Let $X$ be a surface with Cremona isometry $\sigma$ with maximal eigenvalue $\lambda > 1$ and effective anticanonical divisor $D = \sum_{j=1}^{N} m_j D_j$, and suppose $X$ is minimal as a space of initial conditions. Then
	\[
		\operatorname{rank} H_1(D_{\operatorname{red}}; \mathbb{Z}) \le 2.
	\]
	Moreover:
	\begin{enumerate}
		\item
		If $\operatorname{rank} H_1(D_{\operatorname{red}}; \mathbb{Z}) = 2$, then $D$ is irreducible and is an elliptic curve.

		\item
		If $\operatorname{rank} H_1(D_{\operatorname{red}}; \mathbb{Z}) = 1$ and $D$ is irreducible, then $D$ is a rational curve with one nodal singularity.

		\item\label{enum:classification_q}
		If $\operatorname{rank} H_1(D_{\operatorname{red}}; \mathbb{Z}) = 1$ and $D$ is not irreducible, then each component of $D$ is a smooth rational curve.
		In this case, there are two possible configurations of $D$:
		\begin{enumerate}
			\item
			$D = D_1 + D_2$ has two intersection points and their multiplicities are both $1$.

			\item
			$D = D_1 + \cdots + D_N$ forms a cycle as in \eqref{enum:cycle} of Lemma~\ref{lemma:qclassification}.

		\end{enumerate}
		
		\item
		If $\operatorname{rank} H_1(D_{\operatorname{red}}; \mathbb{Z}) = 0$ and $D$ is irreducible, then $D$ is a rational curve with one cusp.

		\item
		If $\operatorname{rank} H_1(D_{\operatorname{red}}; \mathbb{Z}) = 0$ and $D$ is not irreducible, then each component of $D$ is a smooth rational curve.
		In this case, there are three possible configurations of $D$:
		\begin{enumerate}
			\item
			$D = D_1 + D_2$ has only one intersection point and its multiplicity is $2$.

			\item
			$D = D_1 + D_2 + D_3$ has only one intersection point as in \eqref{enum:notcycle} of Lemma~\ref{lemma:qclassification}.

			\item
			All the intersections among $D_1, \ldots, D_N$ are transversal and, if considered as a graph, $D_1, \ldots, D_N$ does not have a cycle.

			In this case, if $m_j = 1$, then there exists $k$ such that $m_k = 2$ and
			$D_j \cdot D_i = \begin{cases}
				1 & (i = k) \\
				0 & (i \ne j, k).
			\end{cases}$

		\end{enumerate}

	\end{enumerate}
\end{theorem}

\begin{proof}
	First, we consider the case where $D$ is irreducible.
	In this case, by Lemma~\ref{lemma:irreducible}, the arithmetic genus of $D$ is $1$.
	Therefore, $D$ is either a smooth elliptic curve, a rational curve with one nodal singularity, or a rational curve with one cusp.
	Which case $D$ belongs to is determined by $\operatorname{rank} H_1(D_{\operatorname{red}}; \mathbb{Z}) \in \{ 0, 1, 2 \}$.

	From here on, we assume that $D$ is not irreducible.
	In this case, by Lemma~\ref{lemma:genus}, all the components are smooth rational curves.
	Therefore, by Lemma~\ref{lemma:cycle}, the rank of $H_1(D_{\operatorname{red}}; \mathbb{Z})$ is $0$ or $1$.
	
	Let us consider the case where there exist $k \ne j$ such that $D_j \cdot D_k \ge 2$.
	In this case, by Lemma~\ref{lemma:classification_double}, the configuration of $D$ is one in Lemma~\ref{lemma:classification_a1}.
	Which case $D$ belongs to is determined by $\operatorname{rank} H_1(D_{\operatorname{red}}; \mathbb{Z}) \in \{ 0, 1 \}$.

	If there exist $j, k, \ell$ such that $D_j \cap D_k \cap D_{\ell} \ne \emptyset$, then, by Lemma~\ref{lemma:classification_triple}, the configuration of $D$ is \eqref{enum:notcycle} of Lemma~\ref{lemma:qclassification}.

	Let us consider the remaining cases, i.e.\
	all the intersections among $D_1, \ldots, D_N$ are transversal.
	If $\operatorname{rank} H_1(D_{\operatorname{red}}; \mathbb{Z}) = 1$, then $D$ has a cycle $D_{j_1}, \ldots, D_{j_{\ell}}$ for $\ell \ge 3$.
	Therefore, by Lemma~\ref{lemma:cycle}, the configuration of $D$ is one of \eqref{enum:cycle} in Lemma~\ref{lemma:qclassification}.
	If $\operatorname{rank} H_1(D_{\operatorname{red}}; \mathbb{Z}) = 0$, then $D_1, \ldots, D_N$ do not have a cycle as a graph.
	The last statement follows from Lemmas~\ref{lemma:classification1} and \ref{lemma:qclassification}.
\end{proof}

\begin{remark} \label{rem:realisationintheliterature}
With regards to whether Theorem \ref{theorem:classification} is a classification in the sense that each of the possible types of anticanonical divisors are realised by a rational surface $X$ admitting a Cremona isometry $\sigma$ with largest eigenvalue $\lambda>1$, we note the following.
The family of examples in Section \ref{section5} correspond to subcase (b) of case (5), which has also appeared in \cite{UEHARA}. 
The examples in Section \ref{section6} fall under subcase (c) of case (5), while subcase (a) will appear in the follow-up paper \cite{part2}.
The case (4) of a rational curve with one cusp has appeared in several contexts in the literature, in particular \cite{dillercremona, mcmullen, UEHARACUSPIDAL}.
The case (3) corresponds to rational surfaces with an anticanonical cycle as considered by Looijenga \cite{looijenga}, and while automorphisms of such surfaces with dynamical degree greater than 1 were not explicitly constructed there, an example with cycle of length 3 will be presented in the follow-up paper \cite{part2}.
Examples where $D$ has two components, corresponding to subcase (a) of case (3) and subcase (a) of case (5), were constructed in \cite{jacksonthesis}.
As far as we are aware, there has not been an example corresponding to case (1) where $D$ is an elliptic curve. 
However such examples may be possible to construct as late confining versions of elliptic discrete Painlev\'e equations. 
Similarly examples corresponding to case (2) should be constructable as late confining versions of additive discrete Painlev\'e equations with symmetry type $E_8^{(1)}$.
\end{remark}

\section{Example: a family of mappings of QRT Class III form}  \label{section5}

The first example we will illustrate in detail deals with a family of mappings that are of the form of three-point mappings in QRT class III as appearing in \cite{ancillary}, i.e.
\begin{equation} \label{QRTclassIII}
\left( x_{n+1} + x_n \right)\left( x_n + x_{n-1} \right) = f(x_n),
\end{equation}
where $f$ is rational. 
From this point on we will make use of notation such as  
\begin{equation}
x_{n-1} = \ubar{x}, \qquad x_{n} = x, \qquad x_{n+1} = \bar{x},
\end{equation}
 to denote up- and down-shifts in $n$ of both variables and parameters.

%
We consider the family of equations 
\begin{equation} \label{classIIIexampleauto}
\left( \bar{x} + x \right)\left( x + \ubar{x} \right) = \frac{ \prod_{i=0}^{m} (x^2 - a_i^2) }{ \prod_{j=1}^{m} (x^2 - b_j^2) },
\end{equation}
where 
 $a_i \in \C \backslash \{0\}$, $i = 0, \dots m$ and $b_j \in \C \backslash \{0\}$, $j=1, \dots m$, for $m \in \Z_{\geq 1}$ with all $a_i$'s and $b_j$'s distinct.
Singularities of the mapping correspond to zeroes of the numerator or denominator of the rational function on the right-hand side of \eqref{classIIIexampleauto}, and calculating along the same lines as in the examples in Section \ref{section1} we find that all of these singularities are confined and we have the singularity patterns
\begin{equation}
\left\{ \pm a_i, \mp a_i \right\}, ~\text{ and }~ \left\{ \pm b_j, \infty, \mp b_j \right\}.
\end{equation}

\subsection{Deautonomisation by singularity confinement}

In this case it turns out that the following deautonomisation suffices to detect the dynamical degree of the mapping 

\begin{equation} \label{classIIIexamplenonauto}
\left( \bar{x} + x \right)\left( x + \ubar{x} \right) = \frac{ \prod_{i=0}^{m} \left(x - a_i \right) \left(x - d_i \right) }{ \prod_{j=1}^{m} \left(x  - b_j \right) \left(x - c_j \right)}. 
\end{equation}
Here $a_i$, $b_j$, $c_j$, $d_i$ are now $n$-dependent and will be required to evolve such that the structure of confined singularities persists. We shall now derive these evolutions explicitly.

First consider the singularities that correspond to zeroes of the numerator of the right-hand side of \eqref{classIIIexamplenonauto}.
In order to have the same kind of confinement behaviour of these singularities as the patterns $\left\{ \pm a_i, \mp a_i \right\}$ in the autonomous case, we require singularities which appear when some iterate $x$ takes a value for which the numerator of the right-hand side vanishes, to lead to a next iterate $\bar{x}$ that is a zero of the numerator of the up-shifted version of the equation.
Considering first the singularity $x = a_k$, for some $k\in\{ 0, \dots , m\}$ with $\ubar{x}=u$ free, we introduce a small parameter $\varepsilon$ and compute
\begin{equation}
\ubar{x} = u, \quad x = a_k + \varepsilon, \quad \bar{x} = - a_k + \mathcal{O}(\varepsilon),
\end{equation}
and similarly for the singularity $x= d_k$,
\begin{equation}
\ubar{x} = u, \quad x = d_k + \varepsilon, \quad \bar{x} = - d_k + \mathcal{O}(\varepsilon).
\end{equation}
For these singularities to be confined in the same way after deautonomisation we impose that $x=a_k$ leads to $\bar{x} = \bar{d}_k$, and $x=d_k$ leads to $\bar{x} = \bar{a}_k$, 
which requires the confinement conditions
\begin{equation} \label{classIIIconfconds1}
\bar{d}_k = - a_k,  \quad  \bar{a}_k = - d_k,
\end{equation}
for which it can be verified by direct calculation that we indeed have the confined singularity patterns $\{a_k , \bar{d}_k\}$ and $\{d_k, \bar{a}_k\}$.
This is tantamount to requiring that the lines blown down by the mapping $(\ubar{x},x) \mapsto (x, \bar{x})$ are sent to indeterminacies of the next iteration $(x, \bar{x}) \mapsto (\bar{x}, \bar{\bar{x}})$.
%
%

For the singularities $x=b_k$ and $x=c_k$ we have 
\begin{equation}\label{threepnine}
\begin{aligned}
\ubar{x} &= u, \\
 x &= b_k + \varepsilon, \\
 \bar{x} &= \frac{\prod_{i} (b_k - a_i)(b_k - d_i)}{(b_k + u) \prod_{j\neq k}(b_k - b_j)\prod_j(b_k - c_j)} \varepsilon^{-1} + F(u) + \mathcal{O}(\varepsilon), \\
 \bar{\bar{x}} &= - b_k  + \sum_{j=1}^{m} (\bar{b}_j + \bar{c}_j) + \sum_{i=0}^{m}(\bar{a}_i + \bar{d}_i) + G(u) \varepsilon +  \mathcal{O}(\varepsilon^2),
\end{aligned}
\end{equation}
where $F$ and $G$ are known fractional-linear functions of $u$ with coefficients expressed in terms of parameters.
An evolution for $x=c_k+\varepsilon$ similar to \eqref{threepnine} is obtained from by interchanging the roles of $b_k$ and $c_k$.
We require that these singularities
 are confined in the same way as in the autonomous case, namely through $\bar{\bar{x}}$ taking a value which is a root of the denominator of the right-hand side in the twice up-shifted version of \eqref{classIIIexamplenonauto}, corresponding to an indeterminacy $(\infty, \bar{\bar{b}}_j)$ or $(\infty, \bar{\bar{c}}_j)$ of the mapping $(\bar{x},\bar{\bar{x}})\mapsto (\bar{\bar{x}},\bar{\bar{\bar{x}}})$.
Which of these indetermacies the singularities $x=b_k$, $x=c_k$ are confined through can be chosen, without loss of generality, to correspond to the patterns $\{b_k, \infty, \bar{\bar{c}}_k\}$ and $\{c_k, \infty, \bar{\bar{b}}_k\}$ via the conditions 
\begin{equation}\label{classIIIconfconds2}
\begin{aligned}
\bar{\bar{c}}_k 		&= - b_k  + \sum_{j=1}^{m} (\bar{b}_j + \bar{c}_j) + \sum_{i=0}^{m}(\bar{a}_i + \bar{d}_i), \\
\bar{\bar{b}}_k 		&= - c_k +\sum_{j=1}^{m} (\bar{b}_j + \bar{c}_j) + \sum_{i=0}^{m}(\bar{a}_i + \bar{d}_i).
\end{aligned}
\end{equation}
For a sufficient deautonomisation in this case we can actually take the parameters $a_i$, $d_i$ in the numerator of the right-hand side of \eqref{classIIIexamplenonauto} to be constant: 
\begin{equation}
\bar{a}_i=a_i, \quad \bar{d}_i=d_i,
\end{equation}
see Remark \ref{rem:sec3sufficientdeauto} below.
With these parameters constant, the conditions \eqref{classIIIconfconds1} require
\begin{equation} \label{classIIIconstantparams}
a_i = - d_i \in \C \backslash \{0\}, \text{ for all } n.
\end{equation}
Then the remaining confinement conditions \eqref{classIIIconfconds2} give a linear system for the evolution of $\boldsymbol{b} = (b_1 ,\dots, b_m)^{T}$, $\boldsymbol{c} = (c_1 ,\dots, c_m)^{T}$:
\begin{equation} \label{classIIIconfcondsystem}
\left(\begin{array}{c}\bar{\boldsymbol{b}} \\ \bar{\boldsymbol{c}} \\ \boldsymbol{b} \\ \boldsymbol{c} \end{array}\right) = \left(\begin{array}{cccc}\mathbf{1}_m & \mathbf{1}_m & 0 & -\mathrm{I}_m \\\mathbf{1}_m & \mathbf{1}_m & -\mathrm{I}_m & 0 \\\mathrm{I}_m & 0 & 0 & 0 \\0 & \mathrm{I}_m & 0 & 0\end{array}\right)
 \left(\begin{array}{c} \boldsymbol{b} \\ \boldsymbol{c} \\ \ubar{\boldsymbol{b}} \\ \ubar{\boldsymbol{c}}\end{array}\right),
\end{equation}
where $\mathrm{I}_m$ is the $m \times m$ identity matrix, and $\mathbf{1}_m$ is the $m\times m$ matrix with all entries being equal to one. 
The matrix on the right-hand side is similar under column permutations to  
\begin{equation}
M = \left(\begin{array}{cc}\mathbf{1}_{2m} & - \mathrm{I}_{2m} \\ \mathrm{I}_{2m} & 0\end{array}\right), 
\end{equation}
so the characteristic polynomial of the linear system \eqref{classIIIconfcondsystem} coincides, up to cyclotomic factors, with that of $M$, which can be computed as a block determinant to be 
\begin{equation}
\det (M - t \, \mathrm{I}_{4m} ) = \left( t^2 + 1\right)^{2m-1}(t^2 - 2m t + 1).
\end{equation}
Therefore we see that the characteristic polynomial has as a root 
\begin{equation}
\lambda = m + \sqrt{m^2 - 1},
\end{equation}
which by the results of Section \ref{section2} we can conclude to be the dynamical degree of the original equation \eqref{classIIIexampleauto}, as we will go on to demonstrate in detail.


\subsection{Space of initial conditions}

We now illustrate how the mechanism by which the dynamical degrees of the family of mappings \eqref{classIIIexampleauto} were obtained above, fits into the framework of the results of Sections \ref{section2}-\ref{section4}, beginning with the construction of the space of initial conditions for the autonomous mapping. 
While our treatment of spaces of initial conditions in Section \ref{section2} treated equations as mappings of $\p^2$, often it is convenient to instead regard them as mappings initially on $\p^1 \times \p^1$, and we will do so here.
In Appendix \ref{app:compactification} we give details relating the picture starting from $\p^1 \times \p^1$ to the one using $\p^2$, so that the examples here and in Section \ref{section6} can still serve as demonstrations of the general results above.

We begin by taking the mapping initially on $\p^1 \times\p^1$ and perform a sequence of blow-ups to regularise it as an automorphism.
We do this in the usual way by introducing $X = 1/x$, $Y=1/y$ so $\p^1 \times\p^1$ is covered by the four affine charts $(x,y)$, $(X,y)$, $(x,Y)$ and $(X,Y)$ and the equation \eqref{classIIIexampleauto} defines via $y = \ubar{x}$ the mapping
\begin{equation}
\begin{gathered}
\varphi : \p^1 \times \p^1 \dashrightarrow \p^1 \times \p^1, \\
(x,y) \mapsto (\bar{x}, \bar{y}), \\
\bar{x} = - x + \frac{ \prod_{i=0}^{m} (x^2 - a_i^2 ) }{(x+y) \prod_{j=1}^{m} (x^2  - b_j^2 )}, \quad \bar{y}=x.
\end{gathered}
\end{equation}
 In order to regularise this as an automorphism we require blow-ups of points given in coordinates by 
 \begin{equation}
 \mathfrak{a}_i^{\pm} : (x,y) = (\pm a_i, \mp a_i), \qquad \mathfrak{b}_i^{\pm} : (x,y) = (\infty, \pm b_i), \qquad \tilde{\mathfrak{b}}_i^{\pm} : (x,y) = (\pm b_i, \infty),
 \end{equation}
 after which we have the rational surface $X$ as shown in Figure \ref{fig:surfaceclassIIIauto}. 
  
  \begin{figure}[htb]
 \begin{tikzpicture}[scale=.95,>=stealth,basept/.style={circle, draw=red!100, fill=red!100, thick, inner sep=0pt,minimum size=1.2mm}]
	\begin{scope}[xshift = -4cm]
			\draw [black, line width = 1pt] 	(4.1,2.5) 	-- (0,2.5)		node [left]  {$y=\infty$} node[pos=0, right] {$$};
			\draw [black, line width = 1pt] 	(3.6,3) -- (3.6,-.5)		node [below]  {$x=\infty$} node[pos=0, above, xshift=7pt] {};
			\draw [black, line width =1pt]     (0, -.2) -- (4,2.8)       node [below left, pos=0] {$x+y=0$};
			\node (btp1) at (.3,2.5) [basept,label={$\tilde{\mathfrak{b}}_{1}^{+}$}] {};
			\node (btm1) at (1,2.5) [basept,label={$\tilde{\mathfrak{b}}_{1}^{-}$}] {};
			\node (btilddots) at (1.75,2.5) [label={$\cdots$}] {};
			\node (btpm) at (2.5,2.5) [basept,label={$\tilde{\mathfrak{b}}_{m}^{+}$}] {};
			\node (btmm) at (3.2,2.5) [basept,label={$\tilde{\mathfrak{b}}_{m}^{-}$}] {};
			\node (bp1) at (3.6,-.2) [basept,label=right:{$\mathfrak{b}_{1}^{+}$}] {};
			\node (bm1) at (3.6,.4) [basept,label=right:{$\mathfrak{b}_{1}^{-}$}] {};
			\node (bdots) at (3.6,1) [label=right:{$\vdots$}] {};
			\node (bpm) at (3.6,2) [basept,label=right:{$\mathfrak{b}_{m}^{-}$}] {};
			\node (bpp) at (3.6,1.4) [basept,label=right:{$\mathfrak{b}_{m}^{+}$}] {};
			\node (ap1) at (.4,.1) [basept,label=above:{$\mathfrak{a}_{0}^{+}$}] {};
			\node (am1) at (.8,.4) [basept,label=above:{$\mathfrak{a}_{0}^{-}$}] {};
			\node (ap1) at (2,1.3) [basept,label=above:{$\mathfrak{a}_{m}^{+}$}] {};
			\node (am1) at (2.4,1.6) [basept,label=above:{$\mathfrak{a}_{m}^{-}$}] {};
			\node (adots) at (1.4,.85) [label=above:{$\iddots$}] {};
			\draw (1.8,-1) node [below,anchor=north] {$\p^1 \times\p^1$};
	\end{scope}
	
		\draw [->] (3,1.5)--(1.25,1.5) node[pos=0.5, below] {$\pi$};
	
	\begin{scope}[xshift = 4cm, yshift= 0cm]
			\draw [black, line width = 1pt] 	(4.1,2.5) 	-- (0,2.5)		node [left]  {} node[pos=1, left] {$D_2$};
			\draw [black, line width = 1pt] 	(3.6,3) -- (3.6,-.5)		node [below]  {$D_3$} node[pos=0, above, xshift=7pt] {};
			\draw [black, line width =1pt]     (0, -.2) -- (4,2.8)       node [below left, pos=0] {$D_1$};
			\draw [red, line width=1pt]		(.3, 2.3) --(.3,3) node [above] {$\tilde{B}_1^{+}$};
			\draw [red, line width=1pt]		(1, 2.3) --(1,3) node [above] {$\tilde{B}_1^{-}$};
			\node (btilddots) at (1.75,2.5) [label={$\cdots$}] {};
			\draw [red, line width=1pt]		(2.5, 2.3) --(2.5,3) node [above] {$\tilde{B}_m^{+}$};
			\draw [red, line width=1pt]		(3.2, 2.3) --(3.2,3) node [above] {$\tilde{B}_m^{-}$};
			\draw [red, line width =1pt]	(3.4,-.2) -- (4.1,-.2) node [right] {$B_1^{+}$};
			\draw [red, line width =1pt]	(3.4,.4) -- (4.1,.4) node [right] {$B_1^{-}$};
			\node (bdots) at (3.6,1) [label=right:{$\vdots$}] {};
			\draw [red, line width =1pt]	(3.4,2) -- (4.1,2) node [right] {$B_m^{+}$};
			\draw [red, line width =1pt]	(3.4,1.4) -- (4.1,1.4) node [right] {$B_m^{-}$};
			\draw [red, line width =1pt]	(.1,.5) -- (.7,-.3) node [xshift=3pt,yshift=-7pt] {$A_0^{+}$};
			\draw [red, line width =1pt]	(.5,.8) -- (1.1,0) node [xshift=5pt,yshift=-5pt] {$A_0^{-}$};
			\draw [red, line width =1pt]	(1.7,1.7) -- (2.3,.9) node [xshift=3pt,yshift=-7pt] {$A_m^{+}$};
			\draw [red, line width =1pt]	(2.1,2) -- (2.7,1.2) node [xshift=5pt,yshift=-5pt] {$A_m^{-}$};
			\node (am1) at (2.4,1.6){};
			\node (adots) at (1.1,.55) [label=above:{$\iddots$}] {};
			\draw (1.5,-1) node [below,anchor=north] {$X$};
	\end{scope}
	\end{tikzpicture}
 	\caption{Space of initial conditions for Class III example (autonomous).}
	\label{fig:surfaceclassIIIauto}
\end{figure}

Denoting the composition of the blow-ups by $\pi : X \rightarrow \p^1 \times \p^1$ and the exceptional divisors by $A_i^{\pm} = \pi^{-1} (\mathfrak{a}_i^{\pm})$, $B_j^{\pm} = \pi^{-1} (\mathfrak{b}_j^{\pm})$, $\tilde{B}_j^{\pm} = \pi^{-1} (\tilde{\mathfrak{b}}_j^{\pm})$,  we have 
 \begin{equation}
 \Pic(X) = \Z \mathcal{H}_x + \Z \mathcal{H}_y + \sum_{i=0}^m (\Z \mathcal{A}^+_i+ \Z \mathcal{A}^{-}_i )+ \sum_{j=1}^m (\Z \mathcal{B}^+_j + \Z \mathcal{B}^{-}_j +\Z \tilde{\mathcal{B}}^+_j + \Z \tilde{\mathcal{B}}^{-}_j),
 \end{equation} 
 where $\mathcal{H}_x = \pi^* (\mathcal{O}_{\p^1}(1) \times 1)$ and $\mathcal{H}_y = \pi^* (1 \times \mathcal{O}_{\p^1}(1) )$ correspond to classes of lines of constant $x$ and $y$ respectively, and we have used calligraphic script to indicate classes of exceptional divisors.
 Note that, as guaranteed by Theorem \ref{effectivenesstheorem}, the surface $X$ has effective anticanonical divisor given by 
$
D = D_1 + D_2 +D_3,
$
 where $D_1, D_2, D_3$ are the proper transforms of the curves given by $x+y=0$, $y=\infty$, and $x=\infty$ respectively.
This is the pole divisor of the 2-form on $X$ given in the original coordinates by $\frac{dx \wedge dy}{x+y}$, which is preserved by the mapping.
This corresponds to subcase (b) of case (5) in Theorem \ref{theorem:classification}.

Through calculations in charts, we find that the mapping acts by pullback on $\Pic(X)$ as follows.
\begin{equation} \label{PhiactionclassIII}
\varphi^* : \left\{ 
\begin{aligned}
\mathcal{H}_x 				&\mapsto 
(2m+1)\mathcal{H}_x + \mathcal{H}_y - \sum_{i=0}^{m} \mathcal{A}_{i}^{+}  - \sum_{i=0}^{m} \mathcal{A}_{i}^{-}  - \sum_{j=1}^{m} \tilde{\mathcal{B}}_{j}^{+} - \sum_{j=1}^{m} \tilde{\mathcal{B}}_{j}^{-},   \\
\mathcal{H}_y				&\mapsto		\mathcal{H}_x,  
\quad
\mathcal{A}_i^{\pm} 			\mapsto 		\mathcal{H}_x - \mathcal{A}_i^{\mp}, 
\quad 
\mathcal{B}_j^{\pm} 		\mapsto 		\mathcal{H}_x - \tilde{\mathcal{B}}_j^{\pm}, 
\quad
\tilde{\mathcal{B}}_j^{\pm} 	\mapsto 		\mathcal{B}_j^{\mp}. 
\end{aligned}
\right.
\end{equation}
While it is possible to compute the characteristic polynomial of the matrix of $\varphi^*$ with respect to this basis for $\Pic(X)$, it will be more convenient to do this in terms of the root basis we will introduce after deautonomisation.

\subsection{Space of initial conditions for deautonomised version}


Consider the deautonomised equation \eqref{classIIIexamplenonauto} defining the mapping
\begin{equation}
\begin{gathered}
\varphi_n : \p^1 \times \p^1 \dashrightarrow \p^1 \times \p^1, \\
(x,y) \mapsto (\bar{x}, \bar{y}), \\
\bar{x} = - x + \frac{ \prod_{i=0}^{m} (x - a_i)(x-d_i) }{(x+y) \prod_{j=1}^{m} (x  - b_j )(x - c_j)}, \quad \bar{y}=x,
\end{gathered}
\end{equation}
subject to the confinement conditions derived above, namely
\begin{equation} \label{classIIIexampleconfcondsall}
\begin{gathered}
\bar{a}_i = - d_i,  \quad \bar{d}_i = - a_i, \\
\bar{b}_k 		= - \ubar{c}_k +\sum_{j=1}^{m} (b_j + c_j) + \sum_{i=0}^{m}(a_i + d_i), \\
\bar{c}_k 	= - \ubar{b}_k  + \sum_{j=1}^{m} (b_j + c_j) + \sum_{i=0}^{m}( a_i + d_i).
\end{gathered}
\end{equation}

To construct the space of initial conditions we introduce the surface $X_n$ obtained by blowing up $\p^1 \times \p^1$ at the following points, as shown in Figure \ref{fig:surfaceclassIIInonauto}:
 \begin{equation}
 \begin{aligned}
\mathfrak{a}_i &: (x,y) = (a_i, - a_i), 				&&\mathfrak{e}_i : (x,y) = (d_i, - d_i), 			&&&i=0,\dots,m,	\\
 \mathfrak{b}_j &: (x,y) = (\infty, \ubar{b}_j), 		&&\mathfrak{c}_j : (x,y) = (\infty, \ubar{c}_j), 		&&&\\ 
 \tilde{\mathfrak{b}}_j &: (x,y) = (b_j, \infty), 		&&\tilde{\mathfrak{c}}_j : (x,y) = (c_j, \infty),		&&&j=1,\dots,m.
 \end{aligned}
 \end{equation}
 Denoting the composition of the blow-ups by $\pi_n : X_n \rightarrow \p^1 \times \p^1$, and the exceptional divisors by $A_i = \pi_n^{-1} (\mathfrak{a}_i)$, $E_i = \pi_n^{-1} (\mathfrak{e}_i)$, $B_j = \pi_n^{-1} (\mathfrak{b}_j)$, $C_j = \pi_n^{-1} (\mathfrak{c}_j)$, $\tilde{B}_j = \pi_n^{-1} (\tilde{\mathfrak{b}}_j)$, $\tilde{C}_j = \pi_n^{-1} (\tilde{\mathfrak{c}}_j)$, we identify all $\Pic(X_n)$ into a single $\Z$-module
\begin{equation}
 \Pic(\X) = \Z \mathcal{H}_x + \Z \mathcal{H}_y + \sum_{i=0}^m (\Z \mathcal{A}_i + \Z \mathcal{E}_i) + \sum_{j=1}^m (\Z \mathcal{B}_j + \Z \mathcal{C}_j + \Z \tilde{\mathcal{B}}_j + \Z \tilde{\mathcal{C}}_j),
\end{equation}
where $ \iota_n(\mathcal{H}_x) =  \pi_n^*(\mathcal{O}_{\p^1} (1) \times 1) $, $\iota_n (\mathcal{A}_i) = [A_i]$ and so on. 
The intersection product gives the symmetric bilinear form on $ \Pic(\X)$ defined by $\mathcal{H}_x \cdot \mathcal{H}_y = 1$, $\mathcal{H}_x \cdot \mathcal{H}_x = \mathcal{H}_y \cdot \mathcal{H}_y = 0$, with all generators that correspond to exceptional divisors of blow-ups being of self-intersection $-1$ and orthogonal to all other generators.
Each surface $X_n$ has an effective anticanonical divisor given by 
$
D = D_1 + D_2 +D_3 \in |-\K_{X_{n}} |,
$
 where $D_1, D_2, D_3$ are the proper transforms of the curves given by $x+y=0$, $y=\infty$, and $x=\infty$ respectively.

\begin{figure}[htb]
 \begin{tikzpicture}[scale=.95,>=stealth,basept/.style={circle, draw=red!100, fill=red!100, thick, inner sep=0pt,minimum size=1.2mm}]
	\begin{scope}[xshift = -4cm]
			\draw [black, line width = 1pt] 	(4.1,2.5) 	-- (0,2.5)		node [left]  {$y=\infty$} node[pos=0, right] {$$};
			\draw [black, line width = 1pt] 	(3.6,3) -- (3.6,-.5)		node [below]  {$x=\infty$} node[pos=0, above, xshift=7pt] {};
			\draw [black, line width =1pt]     (0, -.2) -- (4,2.8)       node [below left, pos=0] {$x+y=0$};
			\node (btp1) at (.3,2.5) [basept,label={$\tilde{\mathfrak{b}}_{1}$}] {};
			\node (btm1) at (1,2.5) [basept,label={$\tilde{\mathfrak{c}}_{1}$}] {};
			\node (btilddots) at (1.75,2.5) [label={$\cdots$}] {};
			\node (btpm) at (2.5,2.5) [basept,label={$\tilde{\mathfrak{b}}_{m}$}] {};
			\node (btmm) at (3.2,2.5) [basept,label={$\tilde{\mathfrak{c}}_{m}$}] {};
			\node (bp1) at (3.6,-.2) [basept,label=right:{$\mathfrak{b}_{1}$}] {};
			\node (bm1) at (3.6,.4) [basept,label=right:{$\mathfrak{c}_{1}$}] {};
			\node (bdots) at (3.6,1) [label=right:{$\vdots$}] {};
			\node (bpm) at (3.6,2) [basept,label=right:{$\mathfrak{c}_{m}$}] {};
			\node (bpp) at (3.6,1.4) [basept,label=right:{$\mathfrak{b}_{m}$}] {};
			\node (ap1) at (.4,.1) [basept,label=above:{$\mathfrak{a}_{0}$}] {};
			\node (am1) at (.8,.4) [basept,label=above:{$\mathfrak{e}_{0}$}] {};
			\node (ap1) at (2,1.3) [basept,label=above:{$\mathfrak{a}_{m}$}] {};
			\node (am1) at (2.4,1.6) [basept,label=above:{$\mathfrak{e}_{m}$}] {};
			\node (adots) at (1.4,.85) [label=above:{$\iddots$}] {};
			\draw (1.8,-1) node [below,anchor=north] {$\p^1 \times\p^1$};
	\end{scope}
	
		\draw [->] (3,1.5)--(1.25,1.5) node[pos=0.5, below] {$\pi_n$};
	
	\begin{scope}[xshift = 4cm, yshift= 0cm]
			\draw [black, line width = 1pt] 	(4.1,2.5) 	-- (0,2.5)		node [left]  {} node[pos=1, left] {$D_2$};
			\draw [black, line width = 1pt] 	(3.6,3) -- (3.6,-.5)		node [below]  {$D_3$} node[pos=0, above, xshift=7pt] {};
			\draw [black, line width =1pt]     (0, -.2) -- (4,2.8)       node [below left, pos=0] {$D_1$};
			\draw [red, line width=1pt]		(.3, 2.3) --(.3,3) node [above] {$\tilde{B}_1$};
			\draw [red, line width=1pt]		(1, 2.3) --(1,3) node [above] {$\tilde{C}_1$};
			\node (btilddots) at (1.75,2.5) [label={$\cdots$}] {};
			\draw [red, line width=1pt]		(2.5, 2.3) --(2.5,3) node [above] {$\tilde{B}_m$};
			\draw [red, line width=1pt]		(3.2, 2.3) --(3.2,3) node [above] {$\tilde{C}_m$};
			\draw [red, line width =1pt]	(3.4,-.2) -- (4.1,-.2) node [right] {$B_1$};
			\draw [red, line width =1pt]	(3.4,.4) -- (4.1,.4) node [right] {$C_1$};
			\node (bdots) at (3.6,1) [label=right:{$\vdots$}] {};
			\draw [red, line width =1pt]	(3.4,2) -- (4.1,2) node [right] {$C_m$};
			\draw [red, line width =1pt]	(3.4,1.4) -- (4.1,1.4) node [right] {$B_m$};
			\draw [red, line width =1pt]	(.1,.5) -- (.7,-.3) node [xshift=3pt,yshift=-7pt] {$A_0$};
			\draw [red, line width =1pt]	(.5,.8) -- (1.1,0) node [xshift=7pt,yshift=-7pt] {$E_0$};
			\draw [red, line width =1pt]	(1.7,1.7) -- (2.3,.9) node [xshift=3pt,yshift=-7pt] {$A_m$};
			\draw [red, line width =1pt]	(2.1,2) -- (2.7,1.2) node [xshift=7pt,yshift=-7pt] {$E_m$};
			\node (am1) at (2.4,1.6){};
			\node (adots) at (1.1,.55) [label=above:{$\iddots$}] {};
			\draw (1.5,-1) node [below,anchor=north] {$X_n$};
	\end{scope}
	\end{tikzpicture}
 	\caption{Space of initial conditions for Class III example (deautonomised).}
	\label{fig:surfaceclassIIInonauto}
\end{figure}

\begin{proposition} \label{classIIImatrixPhionPic}
With the confinement conditions \eqref{classIIIexampleconfcondsall}, the mapping $\varphi_n$ becomes an isomorphism $\tilde{\varphi}_n = \pi_{n+1}^{-1} \circ \varphi_n \circ \pi_n : X_n \rightarrow X_{n+1}$,
and its pullback induces the following lattice automorphism $\Phi =  \iota_n^{-1} \circ \varphi_n^* \circ \iota_{n+1}$ of $\Pic(\X)$:
\begin{equation}
\Phi : 
\left\{ 
\begin{aligned}
\mathcal{H}_x 				&\mapsto 
(2m+1)\mathcal{H}_x + \mathcal{H}_y - \sum_{i=0}^{m} \mathcal{A}_{i}  - \sum_{i=0}^{m} \mathcal{E}_{i}  - \sum_{j=1}^{m} \tilde{\mathcal{B}}_{j} - \sum_{j=1}^{m} \tilde{\mathcal{C}}_{j}, \qquad
\mathcal{H}_y 				\mapsto		\mathcal{H}_x, \\
\mathcal{A}_i 				&\mapsto 		\mathcal{H}_x - \mathcal{E}_i, \qquad
\mathcal{E}_i 				\mapsto 		\mathcal{H}_x - \mathcal{A}_i, \qquad
\mathcal{B}_j	 			\mapsto 		\mathcal{H}_x - \tilde{\mathcal{B}}_j, \qquad
\mathcal{C}_j	 			\mapsto 		\mathcal{H}_x - \tilde{\mathcal{C}}_j, \\
\tilde{\mathcal{B}}_j	 		&\mapsto 			\mathcal{C}_j, \qquad
\tilde{\mathcal{C}}_j	 		\mapsto 			\mathcal{B}_j. 
\end{aligned}
\right.
\end{equation}
\end{proposition}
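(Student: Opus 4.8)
The plan is to verify directly that, under the confinement conditions \eqref{classIIIexampleconfcondsall}, the blow-ups $\pi_n : X_n \to \p^1 \times \p^1$ and $\pi_{n+1} : X_{n+1} \to \p^1 \times \p^1$ resolve $\varphi_n$ into an isomorphism $\tilde{\varphi}_n$, and then to read off the induced action on $\Pic(\X)$ by tracking total transforms of curves through the blow-ups. Since $\bar{y} = x$, every contraction of $\varphi_n$ comes from the $\bar{x}$-component: writing $N(x) = \prod_{i=0}^{m} (x-a_i)(x-d_i)$ and $D(x) = \prod_{j=1}^{m} (x - b_j)(x-c_j)$, the vertical lines $\{x = a_i\}$ and $\{x = d_i\}$, on which $N$ vanishes, collapse to the points $(-a_i, a_i)$ and $(-d_i, d_i)$, while the lines $\{x = b_j\}$ and $\{x = c_j\}$, on which $D$ vanishes, collapse to $(\infty, b_j)$ and $(\infty, c_j)$. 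The role of the confinement conditions is to guarantee that these image points coincide with the centres of the blow-ups defining $X_{n+1}$: for instance $\bar{d}_i = -a_i$ places $(-a_i, a_i)$ at the centre $\mathfrak{e}_i$ of $X_{n+1}$, while the closure of the three-step patterns $\{b_k, \infty, \bar{\bar{c}}_k\}$ and $\{c_k, \infty, \bar{\bar{b}}_k\}$ is exactly the content of the conditions on $\bar{b}_k, \bar{c}_k$ in \eqref{classIIIexampleconfcondsall}. Symmetrically, the indeterminacy points of $\varphi_n$ are precisely $\mathfrak{a}_i, \mathfrak{e}_i, \mathfrak{b}_j, \mathfrak{c}_j, \tilde{\mathfrak{b}}_j, \tilde{\mathfrak{c}}_j$, so checking that after taking proper transforms each contracted curve maps isomorphically onto an exceptional curve upstairs, with no curve left contracted, shows that $\tilde{\varphi}_n$ is an isomorphism $X_n \to X_{n+1}$.

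With the isomorphism in hand I would compute $\Phi = \iota_n^{-1} \circ \varphi_n^* \circ \iota_{n+1}$ class by class. The key point for $\Phi(\mathcal{H}_x)$ is a degree drop: the preimage of a generic curve $\{\bar{x} = c\}$ is $\{ N(x) - (x+c)(x+y) D(x) = 0 \}$, in which the leading $x^{2m+2}$ terms cancel, leaving a curve of degree $2m+1$ in $x$ and $1$ in $y$; subtracting the simple base points $\mathfrak{a}_i, \mathfrak{e}_i, \tilde{\mathfrak{b}}_j, \tilde{\mathfrak{c}}_j$ through which it passes yields the stated expression. The relation $\Phi(\mathcal{H}_y) = \mathcal{H}_x$ is immediate from $\bar{y} = x$, and the images of $\mathcal{A}_i, \mathcal{E}_i, \mathcal{B}_j, \mathcal{C}_j$ follow from the contractions above: the proper transform of the curve contracted onto the relevant centre (namely $\{x = d_i\}$, $\{x = a_i\}$, $\{x = b_j\}$, $\{x = c_j\}$ respectively) carries the class $\mathcal{H}_x - \mathcal{E}_i$, $\mathcal{H}_x - \mathcal{A}_i$, $\mathcal{H}_x - \tilde{\mathcal{B}}_j$, $\mathcal{H}_x - \tilde{\mathcal{C}}_j$, since it passes through exactly one further blow-up centre, and equals the total transform of the corresponding exceptional curve.

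The remaining relations $\Phi(\tilde{\mathcal{B}}_j) = \mathcal{C}_j$ and $\Phi(\tilde{\mathcal{C}}_j) = \mathcal{B}_j$ are the most delicate, since they encode the three-step patterns rather than a single contraction: here $\tilde{\varphi}_n$ must carry the exceptional curve over $\mathfrak{c}_j = (\infty, \ubar{c}_j)$ onto the exceptional curve over $\tilde{\mathfrak{b}}_j = (\bar{b}_j, \infty)$, and vice versa. I would establish these by working in charts near the line at infinity, blowing up $\mathfrak{b}_j, \mathfrak{c}_j, \tilde{\mathfrak{b}}_j, \tilde{\mathfrak{c}}_j$ explicitly and matching the leading asymptotics of the iterate using the confinement conditions \eqref{classIIIexampleconfcondsall}. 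This exceptional-to-exceptional bookkeeping, together with confirming that the resolution involves no infinitely near points so that the listed blow-ups genuinely suffice, is where I expect the main effort to lie; once it is in place, the verification that the resulting $\Phi$ is the claimed lattice automorphism is routine intersection theory.
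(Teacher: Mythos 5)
Your proposal is correct and follows essentially the same route as the paper, which states the result as the outcome of ``calculations in charts'' without writing them out: you track the contracted lines $\{x=a_i\}$, $\{x=d_i\}$, $\{x=b_j\}$, $\{x=c_j\}$ to the blow-up centres of $X_{n+1}$ (correctly identifying the role of each confinement condition), obtain $\Phi(\mathcal{H}_x)$ from the degree drop in the pencil $\{N(x)-(x+c)(x+y)D(x)=0\}$, and get $\tilde{\mathcal{B}}_j\mapsto\mathcal{C}_j$, $\tilde{\mathcal{C}}_j\mapsto\mathcal{B}_j$ from the exceptional-to-exceptional step of the length-three patterns. All the stated identifications check out (e.g.\ self-intersection and $\K$-intersection of the claimed $\Phi(\mathcal{H}_x)$ are preserved), so the remaining work is exactly the routine chart verification you describe.
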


\begin{remark}
This means that the mapping \eqref{classIIIexamplenonauto} with the confinement conditions \eqref{classIIIexampleconfcondsall} is a deautonomisation of \eqref{classIIIexampleauto} in the sense of Definition \ref{deautodef} and its adaptation to the $\p^1 \times \p^1$ setup in Appendix \ref{app:compactification}, via 
\begin{equation}
\begin{gathered}
    \kappa : \Pic(\X) \rightarrow \Pic(X), \\
    \A_i \mapsto \A_i^{+}, \quad \E_i \mapsto \A_i^{-}, \quad
    \B_j \mapsto \B_j^{+},  \quad \mathcal{C}_j \mapsto \B_j^{-}, \quad
    \tilde{\B}_j \mapsto \tilde{\B}_j^{+},  \quad \tilde{\mathcal{C}}_j \mapsto \tilde{\B}_j^{-}.
\end{gathered}
\end{equation}
\end{remark}

\subsection{Root basis and period map}

With the space of initial conditions and linear transformation $\Phi$ in hand, we now calculate the period map of $X_n$ and confirm the mechanism by which the dynamical degree appeared in the confinement conditions.
We take the rational 2-form to be that given in the initial affine charts for $\p^1 \times \p^1$ by 
\begin{equation} \label{omegaclassIII}
\omega = k \frac{dx \wedge dy}{x+y} = - k \frac{dX \wedge dy}{X(1+X y)} = - k \frac{dx \wedge dY}{Y(x Y + 1)} = k \frac{dX \wedge dY }{X Y (X+ Y)},
\end{equation}
with $k \in \C^{*}$ being arbitrary at this stage, and denote its lift to $X_n$ under $\pi_n$ by
\begin{equation}
\tilde{\omega}_n = \pi_n^* \omega. 
\end{equation}
We have $- \div (\tilde{\omega}_n) = D_1 + D_2 + D_3$, and the elements of $\Pic(\X)$ corresponding to the irreducible components are 
\begin{equation}
\mathcal{D}_1 = \mathcal{H}_x + \mathcal{H}_y - \sum_{i=0}^{m}\A_i - \sum_{i=0}^{m}\E_i, \quad 
\mathcal{D}_2 = \mathcal{H}_y   - \sum_{j=1}^{m}\tilde{\mathcal{B}}_j  - \sum_{j=1}^{m}\tilde{\mathcal{C}}_j , \quad 
\mathcal{D}_3 = \mathcal{H}_x   - \sum_{j=1}^{m}\mathcal{B}_j  - \sum_{j=1}^{m}\mathcal{C}_j ,
\end{equation}
so we denote their span and its orthogonal complement respectively by
\begin{equation}
Q = \operatorname{span}_{\Z} \left\{ \mathcal{D}_1, \mathcal{D}_2, \mathcal{D}_3\right\}, \quad Q^{\perp} = \left\{ \F \in \Pic(\X)~|~ \F \cdot \mathcal{D}_i  = 0\right\}.
\end{equation}

To find a root basis for $Q^{\perp}$ in the sense of Definition \ref{rootbasisdef}, we note that when $m=1$, $X_n$ becomes a Sakai surface of (additive) type $A_2^{(1) *}$ in Sakai's classification scheme so we can use the root basis there, which is formed of simple roots for an affine root system of type $E_6^{(1)}$ (see \cite[Appendix A]{SAKAI2001} or \cite[8.2.16]{KNY}). 
It turns out that this choice can be generalised to the $m>1$ case as follows.

%

\begin{proposition} \label{classIIIrootbasisprop}
We have a root basis for $Q^{\perp} \subset \Pic(\X)$ given by 
\begin{equation} 
Q^{\perp} = \sum_{i=0}^{2m} \Z \alpha_i + \sum_{j=1}^{2m-1} \Z \beta_j+ \sum_{j=1}^{2m-1} \Z \gamma_j + \Z \varepsilon_1 + \Z \varepsilon_2,
\end{equation}
where 
\begin{equation}
\begin{gathered}
\varepsilon_1 = \mathcal{H}_y - \A_0 - \B_{1}, \qquad \varepsilon_2 = \mathcal{H}_x - \A_0 - \tilde{\B}_{1}, 
\\
\begin{aligned}
\alpha_0 &= \A_{0} - \E_{0}, \\
\alpha_1 &= \E_{0} - \A_{1}, \\
\alpha_2 &= \A_{1} - \E_{1},  \\
&\vdots \\
\alpha_{2m-1} &= \E_{m-1} - \A_{m}, \\
\alpha_{2m} &= \A_{m} - \E_{m},
\end{aligned}
\quad
\begin{aligned}
\beta_1 &= \B_{1} - \mathcal{C}_{1}, \\
\beta_2 &= \mathcal{C}_{1} - \B_{2}, \\
\beta_3 &= \B_{2} - \mathcal{C}_{2}, \\
&\vdots \\
\beta_{2m-2} &= \mathcal{C}_{m-1} - \B_{m}, \\
\beta_{2m-1} &= \B_{m} - \mathcal{C}_{m},
\end{aligned}
\quad
\begin{aligned}
\gamma_1 &= \tilde{\B}_{1} - \tilde{\mathcal{C}}_{1}, \\
\gamma_2 &= \tilde{\mathcal{C}}_{1} - \tilde{\B}_{2}, \\
\gamma_3 &= \tilde{\B}_{2} - \tilde{\mathcal{C}}_{2} , \\
&\vdots \\
\gamma_{2m-2} &= \tilde{\mathcal{C}}_{m-1} - \tilde{\B}_{m}, \\
\gamma_{2m-1} &= \tilde{\B}_{m} - \tilde{\mathcal{C}}_{m}.
\end{aligned}
\end{gathered}
\end{equation}

\end{proposition}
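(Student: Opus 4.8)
The proposition asserts three things: that the listed classes lie in $Q^{\perp}$, that together they form a $\Z$-basis of $Q^{\perp}$ (and not merely a finite-index sublattice), and that they constitute a root basis in the sense of Definition~\ref{rootbasisdef}. The plan is to establish these in turn. For membership in $Q^{\perp}$ one pairs each generator with $\mathcal{D}_1,\mathcal{D}_2,\mathcal{D}_3$ using the recorded intersection form: each $\alpha_i,\beta_j,\gamma_j$ is a difference of two exceptional classes that meet the same component $\mathcal{D}_\ell$ with multiplicity one and are orthogonal to the other two, so the contributions cancel; for $\varepsilon_1=\mathcal{H}_y-\mathcal{A}_0-\mathcal{B}_1$ and $\varepsilon_2=\mathcal{H}_x-\mathcal{A}_0-\tilde{\mathcal{B}}_1$ one checks directly that the $\mathcal{H}_x\cdot\mathcal{H}_y=1$ contribution is balanced by the relevant exceptional term, e.g. $\varepsilon_1\cdot\mathcal{D}_3=1-1=0$, the two $1$'s coming from $\mathcal{H}_y\cdot\mathcal{H}_x$ and from $\mathcal{B}_1$ meeting $-\mathcal{B}_1$ inside $\mathcal{D}_3$. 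These are routine.

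The $\Z$-basis property is the main step, and is where integrality must be handled with care, since a priori the listed classes might span only a finite-index sublattice of $Q^{\perp}$. The map $\Pic(\X)\to\Z^3$, $\mathcal{F}\mapsto(\mathcal{F}\cdot\mathcal{D}_1,\mathcal{F}\cdot\mathcal{D}_2,\mathcal{F}\cdot\mathcal{D}_3)$, is surjective, since already $\mathcal{A}_0,\mathcal{B}_1,\tilde{\mathcal{B}}_1$ pair with $\mathcal{D}_1,\mathcal{D}_3,\mathcal{D}_2$ respectively to give a permutation matrix of determinant $\pm 1$; hence its kernel $Q^{\perp}$ is a primitive sublattice of rank $\operatorname{rank}\Pic(\X)-3=(6m+4)-3=6m+1$, matching the number of proposed generators. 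I would then show that the $6m+1$ listed classes together with $\mathcal{A}_0,\mathcal{B}_1,\tilde{\mathcal{B}}_1$ generate $\Pic(\X)$ over $\Z$: the $\varepsilon$'s recover $\mathcal{H}_x,\mathcal{H}_y$, while telescoping along the three chains (with starting points $\mathcal{A}_0$, $\mathcal{B}_1$, $\tilde{\mathcal{B}}_1$) recovers all of $\mathcal{E}_i,\mathcal{A}_i,\mathcal{C}_j,\mathcal{B}_j,\tilde{\mathcal{C}}_j,\tilde{\mathcal{B}}_j$. As $6m+4$ generators of the free $\Z$-module $\Pic(\X)$ of the same rank, they form a $\Z$-basis; projecting any $\mathcal{F}\in Q^{\perp}$ onto it and imposing $\mathcal{F}\cdot\mathcal{D}_i=0$, the nondegeneracy of the pairing above forces the $\mathcal{A}_0,\mathcal{B}_1,\tilde{\mathcal{B}}_1$ coefficients to vanish, so $\mathcal{F}$ lies in the span of the listed classes, which are therefore a $\Z$-basis of $Q^{\perp}$.

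It remains to verify the two conditions of Definition~\ref{rootbasisdef}. Each $\alpha_i,\beta_j,\gamma_j$ is by construction a difference of two exceptional divisor classes, i.e. of $(-1)$-curve classes. For the $\varepsilon$'s I would observe that $\mathcal{H}_y-\mathcal{A}_0$ and $\mathcal{H}_x-\mathcal{A}_0$ are the classes of the proper transforms of the lines $\{y=-a_0\}$ and $\{x=a_0\}$ through $\mathfrak{a}_0$, which for the configuration of Figure~\ref{fig:surfaceclassIIInonauto} pass through no other blown-up point; each is an irreducible curve of self-intersection $-1$ and, by the genus formula, of arithmetic genus $0$, hence a $(-1)$-curve, so $\varepsilon_1=(\mathcal{H}_y-\mathcal{A}_0)-\mathcal{B}_1$ and $\varepsilon_2=(\mathcal{H}_x-\mathcal{A}_0)-\tilde{\mathcal{B}}_1$ are differences of $(-1)$-curve classes. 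Finally I would compute the symmetric matrix of intersection numbers: every listed class has self-intersection $-2$ (each is a difference of two disjoint $(-1)$-curves), so $c_{i,i}=2$ and, off the diagonal, $c_{i,j}=-(\text{class}_i)\cdot(\text{class}_j)$. The $\alpha$'s, $\beta$'s and $\gamma$'s each form $A$-type chains (consecutive members share one exceptional class, giving intersection $1$, all others $0$), $\varepsilon_1$ meets exactly $\alpha_0$ and $\beta_1$, $\varepsilon_2$ meets exactly $\alpha_0$ and $\gamma_1$, and $\varepsilon_1\cdot\varepsilon_2=0$. Hence all off-diagonal intersection numbers lie in $\{0,1\}$, all Cartan entries in $\{0,-1\}$, and the resulting symmetric generalised Cartan matrix is that of the star-shaped diagram with three legs of equal length meeting at $\alpha_0$; for $m=1$ this is the affine $E_6^{(1)}$ diagram, in agreement with the Sakai surface identification.

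I expect the principal obstacle to be exactly the integrality in the second step, that is, verifying not only that the classes span $Q^{\perp}\otimes\Q$ but that they generate $Q^{\perp}$ over $\Z$. The device of exhibiting the three complementary classes $\mathcal{A}_0,\mathcal{B}_1,\tilde{\mathcal{B}}_1$, which pair nondegenerately with the $\mathcal{D}_i$ and complete the list to a $\Z$-basis of all of $\Pic(\X)$, is what resolves this cleanly; the intersection-number bookkeeping underlying the membership and Cartan-matrix claims is then purely mechanical.
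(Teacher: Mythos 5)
Your proposal is correct and follows essentially the same route as the paper: the same $6m+1$ elements, justified by the same rank count $\operatorname{rank}\Pic(\X)-3=6m+1$, with the $\varepsilon$'s obtained from the proper transforms of the lines $y=-a_0$ and $x=a_0$, and the same star-shaped intersection diagram (the paper's Figure for the generalised Cartan matrix, reducing to $E_6^{(1)}$ at $m=1$). The only difference is one of completeness: the paper declares the root-basis verification ``immediate,'' whereas you supply the integrality argument explicitly by completing the listed classes with $\mathcal{A}_0,\mathcal{B}_1,\tilde{\mathcal{B}}_1$ to a $\Z$-basis of all of $\Pic(\X)$ and using the nondegenerate pairing with $\mathcal{D}_1,\mathcal{D}_2,\mathcal{D}_3$ — a detail worth having on record.
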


\begin{proof}
In this case $\K_{\X} \not \in Q^{\perp}$ and $\operatorname{rank}Q=3$, so since $\operatorname{rank} \Pic(\X) = 6m + 4$ we need to find $6m+1$ elements to construct a root basis. 
We begin by taking the obvious choices of differences of pairs of classes of exceptional divisors intersecting the components $D_1, D_2, D_3$, which provides $6m-1$ elements and leads to $\alpha_0,\dots, \alpha_{2m}$, $\beta_{1},\dots, \beta_{2m-1}$ and $\gamma_{1}, \dots, \gamma_{2m-1}$ as above.
For the additional two elements required to form a root basis, we take cues from the known choice in the $m=1$ case and set $\varepsilon_1$ to be the difference of the class $\mathcal{H}_y - \A_0$  of the proper transform of the line $y=-a_0$ and the class $\B_1$ of $B_1$, which both intersect $D_3$. 
Similarly we choose $\varepsilon_2$ to be the difference of the class $\mathcal{H}_x - \A_0$ of the proper transform of the line $x=a_0$ and the class $\tilde{\B}_1$ of $\tilde{B}_1$, which both intersect $D_2$.
The fact that these choices lead to a root basis in the sense of Definition \ref{rootbasisdef} is immediate, and in Figure \ref{fig:dynkindiagramclassIII} we present the graph corresponding to the generalised Cartan matrix given by the intersection numbers of the elements, i.e. nodes corresponding to elements, with edges joining them if the pair of elements has intersection $1$.
We remark that this graph already appeared in the work of Looijenga \cite{looijenga} as the intersection configuration of a root basis for a surface with an anticanonical cycle of length three - this root basis will appear again in a way more directly related to Looijenga's work in one of the families of examples in the follow-up paper \cite{part2}.
\end{proof}

\begin{figure}[htb]
\centering
\begin{tikzpicture}[scale=.8, 
				elt/.style={circle,draw=black!100, fill=blue!75, thick, inner sep=0pt,minimum size=1.75mm},
				blk/.style={circle,draw=black!100, fill=black!75, thick, inner sep=0pt,minimum size=1.75mm}, 
				mag/.style={circle, draw=black!100, fill=magenta!100, thick, inner sep=0pt,minimum size=1.75mm}, 
				red/.style={circle, draw=black!100, fill=red!100, thick, inner sep=0pt,minimum size=1.75mm} ]
	\path 
			(-4,0) 	node 	(bleft) [blk, label={[xshift=-0pt, yshift = -22 pt] $\beta_{2m-1}$} ] {}
			(-3,0) 	node 	(bleft-1) [blk, label={[xshift=-0pt, yshift = -22 pt] $$} ] {}
			(-2.45,0)	node 	(dotsleft) {$\dots$}
			(-2,0) 	node 	(d1) [blk, label={[xshift=-0pt, yshift = -22 pt] $\beta_{1}$} ] {}
		        (-1,0) node 	(d2) [blk, label={[xshift=-0pt, yshift = -22 pt] $\varepsilon_1$} ] {}
		        ( 0,0) 	node  	(d3) [blk, label={[xshift=0pt, yshift = -22 pt] $\alpha_{0}$} ] {}
		        ( 1,0) 	node  	(d4) [blk, label={[xshift=0pt, yshift = -22 pt] $\varepsilon_{2}$} ] {}
		        ( 2,0) node 	(d5) [blk, label={[xshift=0pt, yshift = -22 pt] $\gamma_1$} ] {}
			(2.52,0)	node 	(dotsright) {$\dots$}
			(3,0) 	node 	(bright-1) [blk, label={[xshift=-0pt, yshift = -22 pt] $$} ] {}
			(4,0) 	node 	(bright) [blk, label={[xshift=5pt, yshift = -22 pt] $\gamma_{2m-1}$} ] {}
		        ( 0,1) node 	(d6) [blk, label={[xshift=12pt, yshift = -12 pt] $\alpha_{1}$} ] {}
		        ( 0,2) node 	(d0) [blk, label={[xshift=12pt, yshift = -12 pt] $\alpha_{2}$} ] {}
			(0,2.61)	node 	(dotstop) {$\vdots$}
		        ( 0,3) node 	(atop-1) [blk, label={[xshift=12pt, yshift = -12 pt] } ] {}
		        ( 0,4) node 	(atop) [blk, label={[xshift=15pt, yshift = -12 pt] $\alpha_{2m}$} ] {};
		\draw [black,line width=1pt ] (d1) -- (d2) -- (d3) -- (d4)  -- (d5);
		\draw [black,line width=1pt ] (bleft) -- (bleft-1) ;
		\draw [black,line width=1pt ] (bright) -- (bright-1) ;
		\draw [black,line width=1pt ] (d3) -- (d6) -- (d0) ;		
		\draw [black,line width=1pt ] (atop-1) -- (atop) ;
	\end{tikzpicture}
 	\caption{Intersection diagram of root basis for $Q^{\perp}$ in Class III example}
	\label{fig:dynkindiagramclassIII}
\end{figure}


We now compute the period mapping on this basis via the procedure outlined in Section \ref{section2}. 
Note that in this case $D_{\operatorname{red}}$ is the union of three copies of $\p^1$ meeting at a single point so $H_1(D_{\operatorname{red}};\Z) = 0$, and the only normalisation we need in order to define $\chi$ as a $\C$-valued function is a choice of the constant $k$ appearing in $\omega$ as in \eqref{omegaclassIII}. 
A convenient choice leads to the following, in which the period map is that on $X_n$ and we slightly abuse notation by writing $\alpha_i$ etc. when more precisely we mean their corresponding elements $\iota_n(\alpha_i) \in \Pic(X_n)$.


\begin{lemma} \label{rootvariableslemmaclassIII}
The root variables for the root basis in Proposition \ref{classIIIrootbasisprop} are given by
\begin{equation}
\begin{aligned}
\chi(\alpha_0) &=   a_0 - d_0, \\
\chi(\alpha_1) &=  d_{0} - a_{1}, \\
\chi(\alpha_2) &=  a_{1} - d_{1},  \\
\chi(\alpha_3) &=  d_{1} - a_{2},  \\
&\vdots \\
\chi(\alpha_{2m-1}) &= d_{m-1} - a_{m}, \\
\chi(\alpha_{2m}) &= a_{m} - d_{m},
\end{aligned}
\quad
\begin{aligned}
\chi(\varepsilon_1) &= - a_0 -  \ubar{b}_1, \\
\chi(\beta_1) &=  \ubar{b}_{1} - \ubar{c}_{1}, \\
\chi(\beta_2) &=  \ubar{c}_{1} - \ubar{b}_{2}, \\
\chi(\beta_3) &=  \ubar{b}_{2} - \ubar{c}_{2}, \\
&\vdots \\
\chi(\beta_{2m-2}) &=  \ubar{c}_{m-1} - \ubar{b}_{m}, \\
\chi(\beta_{2m-1}) &=  \ubar{b}_{m} - \ubar{c}_{m},
\end{aligned}
\quad
\begin{aligned}
\chi(\varepsilon_2) &= - a_{0} + b_{1}, \\
\chi(\gamma_1) &= - b_{1} + c_{1}, \\
\chi(\gamma_2) &= - c_1 + b_2, \\
\chi(\gamma_3) &= - b_{2} + c_{2}, \\
&\vdots \\
\chi(\gamma_{2m-2}) &= -  c_{m-1} + b_{m}, \\
\chi(\gamma_{2m-1}) &= - b_{m} + c_{m}.
\end{aligned}
\end{equation}
\end{lemma}

\begin{proof}
In order to compute the period mapping using the procedure outlined in Section \ref{section2}, we first note that the residues of $\tilde{\omega}_n$ along the components of the anticanonical divisor $D$ are given in coordinates by 
\begin{equation}
\begin{aligned}
\operatorname{Res}_{D_1} \tilde{\omega}_n &= \operatorname{Res}_{s=0} k \frac{dx \wedge d s}{s} = - k d x, \\
\operatorname{Res}_{D_2} \tilde{\omega}_n &= \operatorname{Res}_{Y=0} - k \frac{dx \wedge d Y}{Y(x Y+1)} = k  d x, \\
\operatorname{Res}_{D_3} \tilde{\omega}_n &= \operatorname{Res}_{X=0} - k \frac{dX \wedge d y}{X(1 + X y)} = - k  d y, 
\end{aligned}
\end{equation}
where $s = x+y$. 
To calculate $\chi(\alpha_i)$ for $i=0,\dots, 2m$ we note that we can express $\A_j - \E_k = [A_j] - [E_k]$ as the difference of two exceptional curves of the first kind which intersect $D_1$, which is the proper transform of the line $x+y=0$. 
Therefore using the residue computed above we have 
\begin{equation}
\chi( \A_j - \E_k) = 2 \pi i \int_{E_k \cap D_1}^{A_j \cap D_1} \operatorname{Res}_{D_1} \tilde{\omega}_n = 2 \pi i \int_{x=d_k}^{x=a_j}  - k d x = - 2 \pi i k ( a_j - d_k).
\end{equation}
Similarly we find 
\begin{equation}
\begin{aligned}
\chi( \tilde{\B}_j - \tilde{\mathcal{C}}_k) &= 2 \pi i \int_{\tilde{C}_k \cap D_2}^{\tilde{B}_j \cap D_2} \operatorname{Res}_{D_2} \tilde{\omega}_n = 2 \pi i \int_{x=c_k}^{x=b_j}  k d x = 2 \pi i k ( b_j - c_k),\\
\chi( \B_j - \mathcal{C}_k) &= 2 \pi i \int_{C_k \cap D_3}^{B_j \cap D_3} \operatorname{Res}_{D_3} \tilde{\omega}_n = 2 \pi i \int_{y=\ubar{c}_k}^{x=\ubar{b}_j}  - k d y = - 2 \pi i k ( \ubar{b}_j - \ubar{c}_k),
\end{aligned}
\end{equation}
which allows us to deduce the values of $\chi$ on all elements of the root basis except for $\varepsilon_1$ and $\varepsilon_2$, but these are computed similarly. 
For example for $\varepsilon_1$, we use its expression as the difference of classes $\mathcal{H}_y - \A_0$ and $\B_1$ of exceptional curves intersecting $D_3$, and note that $\mathcal{H}_y - \A_0$ corresponds to the proper transform of the line $y= - a_0$ so we compute 
\begin{equation}
\chi( \varepsilon_1 ) = 2 \pi i \int_{y=\ubar{b}_1}^{y=-a_0}  - k d y = 2 \pi i k (a_0 + \ubar{b}_1),
\end{equation}
and $\varepsilon_2$ is dealt with similarly. 
Finally we choose $k$ such that $2 \pi i k = -1$, and we have the result.
\end{proof}
\begin{remark}
At this stage we see that in order for this deautonomisation to be sufficient in the sense of Definition \ref{sufficientdeautonomisationdef} it must be nontrivial, i.e. $\bar{c}_i \neq c_i$ and $\bar{b}_i\neq b_i$. 
If not, then for example $\chi(\beta_i)=- \chi(\gamma_i)$ and the injectivity hypothesis breaks down.
\end{remark}
With this Lemma in hand, we can compute how the values of the period map on the root basis evolve with $n$ as dictated by the confinement conditions \eqref{classIIIexampleconfcondsall}. 
Denoting vectors of elements of the root basis by 
$\boldsymbol{\alpha}_e = (\alpha_0, \alpha_2, \dots , \alpha_{2m})^{T}$, $\boldsymbol{\alpha}_o = (\alpha_1, \alpha_3, \dots , \alpha_{2m-1})^{T}$, $\boldsymbol{\beta}_o = (\beta_1, \beta_3, \dots, \beta_{2m-1})^{T}$, $\boldsymbol{\beta}_e = (\beta_2, \beta_4, \dots, \beta_{2m})^{T}$, $\boldsymbol{\gamma}_o = (\gamma_1, \gamma_3, \dots, \gamma_{2m-1})^{T}$, and $\boldsymbol{\gamma}_e = (\gamma_2, \gamma_4, \dots, \gamma_{2m})^{T}$, the following is obtained by direct calculation.
\begin{proposition}
The evolution with $n$ of the root variables for the surface $X_n$ is given by
\begin{gather}
\bar{\boldsymbol{\chi}}=M_\Phi\cdot\boldsymbol{\chi}, \\
\text{where}\hskip.5cm\boldsymbol{\chi} = \left(\chi(\varepsilon_1), \chi(\varepsilon_2),\chi(\boldsymbol{\alpha}_o), \chi(\boldsymbol{\alpha}_e), \chi(\boldsymbol{\beta}_e),\chi(\boldsymbol{\beta}_o), \chi(\boldsymbol{\gamma}_e), \chi(\boldsymbol{\gamma}_o)\,\right)^T,\nonumber
\\
\hskip1.3cm\bar{\boldsymbol{\chi}} = \left(\bar{\chi}(\bar{\varepsilon}_1), \bar{\chi}(\bar{\varepsilon}_2), \bar{\chi}(\bar{\boldsymbol{\alpha}}_o), \bar{\chi}(\bar{\boldsymbol{\alpha}}_e), \bar{\chi}(\bar{\boldsymbol{\beta}}_e), \bar{\chi}(\bar{\boldsymbol{\beta}}_o), \bar{\chi}(\bar{\boldsymbol{\gamma}}_e), \bar{\chi}(\bar{\boldsymbol{\gamma}}_o) \right)^T,\nonumber\\[-2mm]\nonumber\\
\text{and}\hskip.5cm M_\Phi  =
 \left(\begin{array}{cccccccc} 0 & -1 & * & * & * & * & * & * \\1 & 2m & * & * & * & * & * & * \\ 0 & 0  & -I_{m+1} & * & 0  & 0  &  0 & 0  \\ 0 & 0  & 0  & I_{m} & 0  & 0  &  0 & 0  \\  0 &  0 &  0 & 0 & 0  & 0  & -I_m & 0  \\ 0 & 0  & 0  &  0 & 0  & 0  &  0 & -I_m \\ 0 & 0  & 0  & 0  &  I_m & * & 0  & 0  \\  0& 0  & 0  &  0 &  0 & I_m &  0 & 0 \end{array}\right)
\end{gather}
and where $\chi$, $\bar{\chi}$ are the period maps of the surfaces $X_n$, $X_{n+1}$ respectively, and bars on root basis elements  indicate their counterparts in $\Pic(X_{n+1})$ under the identification $\iota_{n+1}$.

\end{proposition}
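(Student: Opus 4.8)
The plan is to reduce the proposition to the general principle established in the subsection on the dynamics of the period map, and then to carry out the resulting linear-algebra computation. The key observation that makes this possible is that the two-form of \eqref{omegaclassIII} is preserved \emph{exactly}, i.e. $\varphi_n^* \omega = \omega$ (with $c_n = 1$) for every $n$, and not merely up to an $n$-dependent scalar. This is a short computation in the affine chart: since $\bar{y} = x$ and $\bar{x} + x = g(x)/(x+y)$ with $g(x) = \prod_{i}(x-a_i)(x-d_i)/\prod_j (x-b_j)(x-c_j)$, one finds $d\bar{x} \wedge d\bar{y} = \frac{g(x)}{(x+y)^2}\, dx \wedge dy$, so that $\varphi_n^*\omega = \omega$. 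Lifting to the surfaces gives $\tilde{\varphi}_n^* \tilde{\omega}_{n+1} = \tilde{\omega}_n$, which is exactly the hypothesis under which the relation \eqref{arepresentation} was derived.

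Granting this, the evolution of the root variables is governed by precisely the matrix $M_{i,j}$ representing $\Phi$ on $Q^{\perp}$ in the root basis of Proposition \ref{classIIIrootbasisprop}, defined by $\Phi(\alpha_i) = \sum_j M_{i,j} \alpha_j$ as in \eqref{Phirepresentation}. It therefore suffices to apply $\Phi$, in the form recorded in Proposition \ref{classIIImatrixPhionPic}, to each root vector and to re-express the image in the root basis. For the vectors assembled from exceptional classes alone, namely the $\alpha_i$, $\beta_j$ and $\gamma_j$, this is routine: in any difference of two of $\mathcal{A}_i, \mathcal{E}_i, \mathcal{B}_j, \mathcal{C}_j$ the $\mathcal{H}_x$-contributions produced by $\Phi$ cancel, so the image is again a signed (and possibly telescoping) combination of root vectors. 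For instance one gets $\Phi(\alpha_{2k}) = \alpha_{2k}$ and $\Phi(\alpha_{2k-1}) = -(\alpha_{2k-2} + \alpha_{2k-1} + \alpha_{2k})$, while the rules $\mathcal{B}_j \mapsto \mathcal{H}_x - \tilde{\mathcal{B}}_j$, $\tilde{\mathcal{B}}_j \mapsto \mathcal{C}_j$ and their $\mathcal{C}$-analogues interchange the $\beta$- and $\gamma$-sectors up to sign; these account for the identity, signed-identity and shift blocks of $M_\Phi$.

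The only place where genuine bookkeeping is required --- and hence the main obstacle --- is the action of $\Phi$ on $\varepsilon_1 = \mathcal{H}_y - \mathcal{A}_0 - \mathcal{B}_1$ and $\varepsilon_2 = \mathcal{H}_x - \mathcal{A}_0 - \tilde{\mathcal{B}}_1$, since their images pick up the long expression for $\Phi(\mathcal{H}_x)$. One finds, for example, $\Phi(\varepsilon_1) = \mathcal{E}_0 + \tilde{\mathcal{B}}_1 - \mathcal{H}_x = -\varepsilon_2 - \alpha_0$, which indeed lies in $Q^{\perp}$ because $\Phi$ permutes the components $\mathcal{D}_1, \mathcal{D}_2, \mathcal{D}_3$ of the anticanonical divisor and $\varepsilon_1 \in Q^\perp$; the analogous computation for $\varepsilon_2$ reconstructs the diagonal entry $2m$ and the starred entries in its row. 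As an independent check --- and an alternative route matching the ``direct calculation'' of the statement --- I would substitute the explicit root variables of Lemma \ref{rootvariableslemmaclassIII}, up-shift them, and impose the confinement conditions \eqref{classIIIexampleconfcondsall}, taking care with the down-shifts built into the blow-up centres $\mathfrak{b}_j$ and $\mathfrak{c}_j$. For example $\bar{\chi}(\bar{\varepsilon}_1) = -\bar{a}_0 - b_1 = d_0 - b_1 = -\chi(\varepsilon_2) - \chi(\alpha_0)$ reproduces the first row, consistently with $\Phi(\varepsilon_1) = -\varepsilon_2 - \alpha_0$. Carrying this out over the whole basis produces the stated matrix, and the two routes are forced to agree by \eqref{arepresentation}.
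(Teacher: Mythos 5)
Your argument is correct, and the sample computations you exhibit all check out against the paper's data: $\varphi_n^*\omega=\omega$ with $c_n=1$ for the form \eqref{omegaclassIII}, $\Phi(\varepsilon_1)=\mathcal{E}_0+\tilde{\mathcal{B}}_1-\mathcal{H}_x=-\varepsilon_2-\alpha_0$, $\Phi(\alpha_{2k})=\alpha_{2k}$, $\Phi(\alpha_{2k-1})=-(\alpha_{2k-2}+\alpha_{2k-1}+\alpha_{2k})$, and the consistency check $\bar{\chi}(\bar{\varepsilon}_1)=d_0-b_1=-\chi(\varepsilon_2)-\chi(\alpha_0)$ under \eqref{classIIIexampleconfcondsall}. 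The logical route, however, is the reverse of the paper's. The paper proves the proposition by the ``direct calculation'' you relegate to an independent check: up-shift the root variables of Lemma \ref{rootvariableslemmaclassIII}, substitute the confinement conditions, read off the matrix, and only afterwards observe that this matrix coincides with that of $\Phi$ on the root basis of Proposition \ref{classIIIrootbasisprop}. You instead take the intertwining relation \eqref{arepresentation} as the engine, verify its hypothesis by showing the two-form is preserved exactly (not merely up to a constant), and reduce the proposition to the purely lattice-theoretic computation of $\Phi$ from Proposition \ref{classIIImatrixPhionPic}. Your route is more economical and exhibits the proposition as a direct corollary of the general theory of Section \ref{section2}; the paper's route is self-contained at the level of the equation and serves to certify, in this example, that the period-map machinery really does reproduce the confinement conditions. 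The only hypotheses of \eqref{arepresentation} you lean on implicitly are that the span of the root basis is $\Phi$-invariant (true, since $\Phi$ permutes $\mathcal{D}_1,\mathcal{D}_2,\mathcal{D}_3$ and hence preserves $Q^{\perp}$) and that the normalisation of $\chi_n$ is consistent across $n$ (automatic here since $\omega$ is $n$-independent and $H_1(D_{\operatorname{red}};\Z)=0$); both are addressed, at least in passing, in your write-up.
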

%
%

As expected, the evolution of the root variables is given by exactly the same matrix as that of $\Phi$ on $Q^{\perp}$ with respect to the root basis.
Indeed, by direct computation using the expressions in Proposition \ref{classIIIrootbasisprop} for the root basis in terms of generators of $\Pic(\X)$ and $\Phi$ as it appears in Proposition \ref{classIIImatrixPhionPic}, we see that the evolution of the root basis for $Q^{\perp}$ under $\Phi$ takes the form
\begin{equation}
\Phi:\ \boldsymbol{\delta} \mapsto M_\Phi\, \boldsymbol{\delta},
\end{equation}
where $\boldsymbol{\delta} = \left(\varepsilon_1, \varepsilon_2, \boldsymbol{\alpha}_o,\boldsymbol{\alpha}_e, \boldsymbol{\beta}_e, \boldsymbol{\beta}_o, \boldsymbol{\gamma}_e, \boldsymbol{\gamma}_o \right)^T$.
Further, it can be verified by direct calculation that all elements of the root basis aside from $\varepsilon_1$ and $\varepsilon_2$ are in the periodic part $\operatorname{P}_{\Phi}$ of $\Pic(\X)$ under $\Phi$, so the minimal polynomial with the dynamical degree of the mapping as a root can be obtained as follows.
\begin{proposition}
The transformation of $Q^{\perp} / \operatorname{P}_{\Phi}$,  induced by $\Phi$ is given with respect to the basis $\left\{ \varepsilon_1 + \operatorname{P}_{\Phi}, \varepsilon_2 + \operatorname{P}_{\Phi} \right\}$ 
by the matrix 
\begin{equation}
\left(\begin{array}{cc}0 & -1 \\1 & 2m\end{array}\right),
\end{equation}
whose characteristic polynomial is $(t^2 - 2 m t + 1)$.
In particular the dynamical degree of the mapping is equal to the largest root $\lambda = m + \sqrt{m^2-1}$ of this polynomial.
\end{proposition}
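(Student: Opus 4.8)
The plan is to obtain the statement as a consequence of the explicit action of $\Phi$ on the root basis, by passing to the quotient by the periodic part $\operatorname{P}_{\Phi}$. The matrix $M_\Phi$ of the preceding proposition already exhibits the relevant $2\times 2$ block in its top-left corner, so what really has to be justified is that the entries marked $*$ disappear upon reduction modulo $\operatorname{P}_{\Phi}$, and that the resulting eigenvalue is the dynamical degree and not merely some eigenvalue of $\Phi$.

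First I would record, from Proposition \ref{classIIImatrixPhionPic}, the action of $\Phi$ on the elements of the root basis of Proposition \ref{classIIIrootbasisprop}. The key structural point is that $\operatorname{span}_{\Z}\{\alpha_i,\beta_j,\gamma_j\}$ is $\Phi$-invariant and that $\Phi$ acts on it as a finite-order signed permutation: for instance $\Phi(\alpha_0)=\alpha_0$, $\Phi^2(\alpha_1)=\alpha_1$, while $\Phi(\beta_1)=-\gamma_1$ and $\Phi(\gamma_1)=-\beta_1$, with analogous relations for the remaining indices, the point being that the $\mathcal{H}_x$-contributions cancel in each of these differences of exceptional classes. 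Consequently every $\alpha_i,\beta_j,\gamma_j$ lies in $\operatorname{P}_{\Phi}$. A short parallel computation shows that $\Phi$ cyclically permutes $\mathcal{D}_1,\mathcal{D}_2,\mathcal{D}_3$, so that $Q\subset\operatorname{P}_{\Phi}$ as well.

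Next I would compute $\Phi(\varepsilon_1)$ and $\Phi(\varepsilon_2)$ directly. The decisive simplification is that $\varepsilon_1$ and $\varepsilon_2$ are the \emph{only} root-basis elements carrying the classes $\mathcal{H}_y,\mathcal{H}_x$, so their coefficients in any expansion are read off from the $\mathcal{H}_y$- and $\mathcal{H}_x$-parts alone. One finds $\Phi(\varepsilon_1)=-\mathcal{H}_x+\mathcal{E}_0+\tilde{\mathcal{B}}_1$, whose coefficients of $(\varepsilon_1,\varepsilon_2)$ are therefore $(0,-1)$, and likewise $\Phi(\varepsilon_2)=2m\,\mathcal{H}_x+\mathcal{H}_y-\cdots$, with coefficients of $(\varepsilon_1,\varepsilon_2)$ equal to $(1,2m)$; in each case the remaining terms carry no $\mathcal{H}$-class and thus lie in $\operatorname{span}_{\Z}\{\alpha_i,\beta_j,\gamma_j\}\subset\operatorname{P}_{\Phi}$. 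Reducing modulo $\operatorname{P}_{\Phi}$ therefore kills every $*$-term, and the induced transformation of $Q^{\perp}/\operatorname{P}_{\Phi}=\operatorname{span}\{\varepsilon_1+\operatorname{P}_{\Phi},\varepsilon_2+\operatorname{P}_{\Phi}\}$ is exactly $\left(\begin{smallmatrix}0&-1\\1&2m\end{smallmatrix}\right)$, whose rows are the pairs just computed, with characteristic polynomial $t^2-2mt+1$ and roots $m\pm\sqrt{m^2-1}$.

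Finally, to identify $m+\sqrt{m^2-1}$ as the dynamical degree I would invoke the results of Section \ref{section2}: by Takenawa's formula the dynamical degree is the largest eigenvalue (spectral radius) of $\Phi$. Since both $Q$ and $\operatorname{span}\{\alpha_i,\beta_j,\gamma_j\}$ are periodic, all eigenvalues of $\Phi$ outside the $2\times 2$ block have modulus one, so the spectral radius is the larger root $m+\sqrt{m^2-1}$, which is $\geq 1$ and hence dominant. I expect the main obstacle to be exactly this clean separation of $\Pic(\X)$ into its periodic part and the two non-periodic directions, i.e. confirming that $Q^{\perp}\cap\operatorname{P}_{\Phi}=\operatorname{span}\{\alpha_i,\beta_j,\gamma_j\}$ with $\varepsilon_1,\varepsilon_2\notin\operatorname{P}_{\Phi}$, so that the $2\times 2$ block genuinely captures the dominant eigenvalue rather than a subdominant one; for type (c) mappings this is precisely what is guaranteed by Lemma \ref{SCasIClemma} together with the sufficient-subset property (Definition \ref{sufficientsubset} and Proposition \ref{sufficientsubsetprop}).
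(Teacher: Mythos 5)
Your proposal is correct and follows essentially the same route as the paper: the paper obtains the result by combining the matrix $M_\Phi$ of $\Phi$ on the root basis with the observation (left as a direct calculation) that all root basis elements other than $\varepsilon_1$ and $\varepsilon_2$ lie in $\operatorname{P}_{\Phi}$, so that the quotient matrix is the top-left $2\times 2$ block. You simply carry out that direct calculation explicitly (e.g.\ $\Phi(\alpha_0)=\alpha_0$, $\Phi(\beta_1)=-\gamma_1$, $\Phi(\varepsilon_1)=-\varepsilon_2-\alpha_0$, $\Phi(\varepsilon_2)\equiv\varepsilon_1+2m\,\varepsilon_2$ modulo $\operatorname{P}_{\Phi}$) and correctly tie the dominant eigenvalue to the dynamical degree via Takenawa's formula, which is exactly the intended argument.
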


%

\begin{remark} \label{rem:sec3sufficientdeauto}
The fact that, in order to read off the dynamical degree of the mapping from the confinement conditions \eqref{classIIIexampleconfcondsall}, it was sufficient to keep $a_i$ and $d_i$ constant as in \eqref{classIIIconstantparams} is precisely because the root variables that are fixed  because of this choice, correspond to elements of $Q^{\perp} \cap \operatorname{P}_{\Phi}$. 
Keeping these parameters fixed still gives a sufficient deautonomisation in the sense of Definition \ref{sufficientdeautonomisationdef}.
\end{remark}

\section{A family of mappings of QRT Class I form}\label{section6}

The next example we will consider is the family of equations \eqref{0inf2k0} introduced in the Introduction, namely 
\begin{equation} \label{section5exampleauto}
\bar{x} + \ubar{x} = \frac{1}{x^{2m}},
\end{equation}
where $m \in \Z_{\geq 1}$,
which are of the form of Class I QRT-type mappings \cite{ancillary}.
This mapping admits the confined singularity pattern $\{0,\infty^{2m},0\}$, and if we take the deautonomisation
\begin{equation} \label{section5examplenonauto}
\bar{x} + \ubar{x} = \frac{1}{x^{2m}} - \frac{b}{x},
\end{equation}
the $n$-dependence of $b$ required for the singularity confinement to persist is $\bar{b} - 2 m b + \ubar{b} = 0$,
from which we can read off minimal polynomial $t^2 - 2m t + 1$ for the dynamical degree of the mapping \eqref{section5exampleauto}, namely 
\begin{equation}
\lambda = m + \sqrt{m^2-1}.
\end{equation}

\subsection{Space of initial conditions}
The space of initial conditions for equation \eqref{section5exampleauto} was constructed in \cite{SCasIC}, which we recall here with some added details which will be necessary for deautonomisation and the computation of the period map.
Consider the mapping 
\begin{equation} \label{section5mappingauto}
\begin{gathered}
\varphi : \p^1 \times \p^1 \dashrightarrow \p^1 \times \p^1, \\
(x,y) \mapsto (\bar{x}, \bar{y}), \\
\bar{x} = \frac{1 - y x^{2m}}{x^{2m}}, \quad \bar{y}=x,
\end{gathered}
\end{equation}
defined by equation \eqref{section5exampleauto}.
In contrast to the examples studied above, here, to construct the space of initial conditions we will have to blow up infinitely near points, i.e. those lying on the exceptional divisors of previous blow-ups.
For this reason we will need to introduce notation for charts covering exceptional divisors according to the following convention: after blowing up a point $\mathfrak{p}_i$ given in some affine chart $(x,y)$ by
\begin{equation}
\mathfrak{p}_i : (x,y) = (x_*, y_*),
\end{equation}
the exceptional divisor $E_i$ of the blow-up of $\mathfrak{p}_i$ is covered by two affine coordinate charts $(u_i,v_i)$ and $(U_i,V_i)$ given by \begin{equation}
\begin{aligned}
x&= u_i v_i+ x_{*}  , 			& y &= v_i + y_{*}, \\
u_i &= \frac{x- x_{*}}{y- y_{*} },	 & v_i  &=y-y_{*},
\end{aligned}
\qquad \text{and} \qquad 
\begin{aligned}
x &= V_i+ x_{*}, 			& y&= U_i V_i +y_{*},\\
U_i &= \frac{y-y_*}{x-x_*},		& V_i &= x - x_*.
\end{aligned}
\end{equation}
In particular the exceptional divisor $E_i$ has in these charts the local equations $v_i=0$ and $V_i=0$.

For the case at hand we require $8 m$ blow-ups in total, $4m$ each over the points $(x,y)=(0,\infty)$ and $(x,y)=(\infty,0)$ which we outline now.
Firstly letting $\mathfrak{p}_{1} : (x,y) = (0,\infty)$ and introducing coordinates for the exceptional divisor as above we require first $2m$ blow-ups of points $\mathfrak{p}_{k+1} : (U_{k},V_{k}) = (0,0)$ for $k=1,\dots 2m-1$. 
Note that each of these points $\mathfrak{p}_{k+1}$ lies on the proper transform of the line $y=\infty$ under the blow-ups of $\mathfrak{p}_{1},\dots,\mathfrak{p}_{k}$.
Then we find the point $\mathfrak{p}_{2m+1} : (U_{2m},V_{2m}) = (1,0)$ away from the proper transform of $y=\infty$, and after blowing this up we require another sequence of blow-ups of points $\mathfrak{p}_{2m+2}, \dots, \mathfrak{p}_{4m}$ given in coordinates by $\mathfrak{p}_{2m+k+1} : (u_{k},v_{k}) = (0,0)$, $k=1,\dots, 2m-1$, though these points are away from the proper transforms of $y=\infty$ or any of the previous exceptional divisors.
Similarly for $\tilde{\mathfrak{p}}_{1} : (x,y) = (\infty,0)$ we initially blow up $2m$ points on the proper transform of the line $x=\infty$ given by 
$\tilde{\mathfrak{p}}_{k+1} : (\tilde{u}_{k},\tilde{v}_{k}) = (0,0)$, where we have used tildes to denote the coordinates introduced in blowing up points $\tilde{\mathfrak{p}}_i$ to distinguish them from those coming from blow-ups of $\mathfrak{p}_i$.
Then similarly we find $\tilde{\mathfrak{p}}_{2m+1} : (\tilde{u}_{2m},\tilde{v}_{2m}) = (1,0)$ followed by a sequence $\tilde{\mathfrak{p}}_{2m+2}, \dots, \tilde{\mathfrak{p}}_{4m}$ given by $\tilde{\mathfrak{p}}_{2m+k+1} : (\tilde{u}_{k},\tilde{v}_{k}) = (0,0)$, $k=1,\dots,2m-1$.
The resulting surface is shown in Figure \ref{fig:section5examplesurfaces}, where $E_i$ and $\tilde{E}_i$ are exceptional divisors from the blow-ups of $\mathfrak{p}_i$ and $\tilde{\mathfrak{p}}_i$ respectively, and we have introduced notation $D_i, \tilde{D}_i$ for the curves which will be shown to be the components of an anticanonical divisor of $X$ whose classes in $\Pic(X)$ are 
\begin{equation}
\begin{aligned}
\mathcal{D}_0 &= \mathcal{H}_y - \sum_{i=1}^{2m} \E_i,  				&& \mathcal{D}_i = \E_i - \E_{i+1}, \quad i = 1,\dots, 4m-1, \\
\tilde{\mathcal{D}}_0 &= \mathcal{H}_x - \sum_{i=1}^{2m} \tilde{\E}_i,  	&& \tilde{\mathcal{D}}_i = \tilde{\E}_i - \tilde{\E}_{i+1}, \quad i = 1,\dots, 4m-1.
\end{aligned}
\end{equation}
The fact that the mapping becomes an automorphism of $X$ is verified by calculation in charts.

%

\begin{figure}[htb]
\centering
	\begin{tikzpicture}[scale=.75,>=stealth,basept/.style={circle, draw=red!100, fill=red!100, thick, inner sep=0pt,minimum size=1.2mm}]
		\begin{scope}[xshift = -4cm]
			\draw [black, line width = 1pt] 	(4.1,2.5) 	-- (-0.5,2.5)	node [pos=0, right]  {$y=\infty$} node[pos=0, right] {$$};
			\draw [black, line width = 1pt] 	(0,3) -- (0,-1)			node [below] {$x=0$}  node[pos=0, above, xshift=-7pt] {} ;
			\draw [black, line width = 1pt] 	(3.6,3) -- (3.6,-1)		node [pos=0, above]  {$x=\infty$} node[pos=0, above, xshift=7pt] {};
			\draw [black, line width = 1pt] 	(4.1,-.5) 	-- (-0.5,-0.5)	node [left]  {$y=0$} node[pos=0, right] {$$};

			\node (p1) at (3.6,-.5) [basept,label={[xshift=-8pt, yshift = 0 pt] $\tilde{\mathfrak{p}}_{1}$}] {};
			\node (p2) at (4.3,-1.3) [basept,label={[xshift=0pt, yshift = -20 pt] $\tilde{\mathfrak{p}}_{2}$}] {};
			\node (p3) at (5.3,-1.3) [label={[xshift=0pt, yshift = -20 pt] $\cdots$}] {};
			\node (p8) at (6.3,-1.3) [basept,label={[xshift=0pt, yshift = -20 pt] $\tilde{\mathfrak{p}}_{4m}$}] {};
			\node (p9) at (0,2.5) [basept,label={[yshift=-20pt, xshift=+10pt] $\mathfrak{p}_{1}$}] {};
			\node (p10) at (-0.65,3.3) [basept,label={[xshift=0pt, yshift = 0 pt] $\mathfrak{p}_{2}$}] {};
			\node (p11) at (-1.65,3.3) [label={[xshift=-2pt, yshift = 0 pt] $\cdots$}] {};
			\node (p16) at (-2.65,3.3) [basept,label={[xshift=0pt, yshift = 0 pt] $\mathfrak{p}_{4m}$}] {};

			\draw [line width = 0.8pt, ->] (p2) -- (p1);
			\draw [line width = 0.8pt, ->] (p3) -- (p2);
			\draw [line width = 0.8pt, ->] (p8) -- (p3);

			\draw [line width = 0.8pt, ->] (p10) -- (p9);
			\draw [line width = 0.8pt, ->] (p11) -- (p10);
			\draw [line width = 0.8pt, ->] (p16) -- (p11);
			\draw (1.8,-1.8) node [below,anchor=north] {$\p^1 \times\p^1$};
	\end{scope}
	
		\draw [->] (3.5,1)--(2,1) node[pos=0.5, below] {$\pi$};
		
	\begin{scope}[xshift = 5cm, yshift=-.5cm]
			\draw [black, dashed, line width = 1pt] 	(0.,-.2) -- (0,1.35)			;
			\draw [black, dashed, line width = 1pt] 	(-.2,0) 	-- (1.35,0)	;
			\draw [black, line width = 1pt] 	(2.45,4) 	-- (4.25,4)	node [pos=0.5, above]  {} node[pos=0, right] {};
			\draw [black, line width = 1pt] 	(4,2.45) 	-- (4,4.25)	node [pos=0.5, right]  {} node[pos=0, above, xshift=0pt] {};
			\path [line width = 1pt, bend right]	(-.1, 1.1) edge (.5,2) node[pos=0.5, xshift=15pt, yshift=35pt] {};
			\path [line width = 1pt, bend right]	(.4, 1.9) edge (1.1,2.6) node[midway, xshift=15pt, yshift=35pt] {};
			\node at (1.25,2.75)  [label={[xshift=+3pt, yshift = -10 pt] $\iddots$}] {};
			\path [line width = 1pt, bend right]	(1.6, 3.1) edge (2.6,4.1) node[pos=0.5, xshift=6pt, yshift=0pt] {};
			\draw [line width =1pt]	(2.3,3.3) -- (1.4,4.2) ;
			\draw [line width =1pt]	(1.4,4) -- (2.3,4.9) ;
			\node at (2,4.2)  [label={[xshift=-5pt, yshift = 5 pt] $\ddots$}] {};
			\draw [line width =1pt]	(.7,4.7) -- (1.6,5.6) ;
			\draw [red, line width =1pt]	(.9,4.7) -- (0,5.6) ;

			\path [line width = 1pt, bend left]	(1.1,-.1) edge (2,.5) node[pos=0.5, xshift=15pt, yshift=35pt] {};
			\path [line width = 1pt, bend left]	(1.9,.4) edge (2.6,1.1) node[midway, xshift=15pt, yshift=35pt] {};
			\node at (2.75,1.25)  [label={[xshift=+2pt, yshift = -10 pt] $\iddots$}] {};
			\path [line width = 1pt, bend left]	( 3.1,1.6) edge (4.1,2.6) node[pos=0.5, xshift=6pt, yshift=0pt] {};
			\draw [line width =1pt]	(3.3,2.3) -- (4.2,1.4) ;
			\draw [line width =1pt]	(4,1.4) -- (4.9,2.3) ;
			\node at (4.7,1.5)  [label={[xshift=7pt, yshift = -7 pt] $\ddots$}] {};
			\draw [line width =1pt]	(4.7,.7) -- (5.6,1.6) ;
			\draw [red, line width =1pt]	(4.7,.9) -- (5.6,0) ;
		\node at (3.35,4.3) {\scriptsize $D_0$};
		\node at (0,1.7) {\tiny $D_{1}$};
		\node at (.65,2.55) {\tiny $D_{2}$};

		\node at (2.1, 2.85) {\tiny $D_{2m}$};

		\node at (2.9, 5.1) {\tiny $D_{2m+2}$};
		\node at (2.2, 5.8) {\tiny $D_{4m-1}$};
		\node[red] at (0,5.9) {\tiny $E_{4m}$};

		\node at (4.3,3.35) {\tiny $\tilde{D}_0$};
		\node at (1.7, 0) {\tiny $\tilde{D}_{1}$};
		\node at (2.55, .65) {\tiny $\tilde{D}_{2}$};
		
		\node at (2.75, 2) {\tiny $\tilde{D}_{2m}$};

		\node at (5.4, 2.6) {\tiny $\tilde{D}_{2m+2}$};
		\node at (6.1, 1.9) {\tiny $\tilde{D}_{4m-1}$};
		\node[red] at (6.1, 0) {\tiny $\tilde{E}_{4m}$};

			\draw (1.8,-1.2) node [below,anchor=north] {$X$};

	\end{scope}

	\end{tikzpicture}
	\caption{Space of initial conditions for Class I example}
	\label{fig:section5examplesurfaces}
\end{figure}

\subsection{Space of initial conditions for deautonomised version}

We take the deautonomised mapping \eqref{section5examplenonauto} as
\begin{equation} \label{section5mappingnonauto}
\begin{gathered}
\varphi_n : \p^1 \times \p^1 \rightarrow \p^1 \times \p^1, \\
(x,y) \mapsto (\bar{x}, \bar{y}), \\
\bar{x} = \frac{1 - b x^{2m-1} - y x^{2m}}{x^{2m}}, \quad \bar{y}=x,
\end{gathered}
\end{equation}
with the confinement condition 
\begin{equation} \label{section5exampleconfcondb}
\bar{b} - 2 m b + \ubar{b} = 0.
\end{equation}
The points to be blown up to construct the surface $X_n$ that provides a space of initial conditions for mapping \eqref{section5mappingnonauto}, are given by the same expressions in coordinates as for the autonomous case with the exception of the final points $\mathfrak{p}_{4m}$ and $\tilde{\mathfrak{p}}_{4m}$, which for the non-autonomous mapping are
\begin{equation}
\mathfrak{p}_{4m} : (u_{4m-1}, v_{4m-1}) = ( b, 0 ), \qquad \tilde{\mathfrak{p}}_{4m} : (\tilde{u}_{4m-1}, \tilde{v}_{4m-1}) = ( \ubar{b}, 0 ).
\end{equation}
Denoting again the exceptional divisors arising from the blow-ups by $E_i = \pi_n^{-1}(\mathfrak{p}_i)$ and $\tilde{E}_i = \pi_n^{-1}(\tilde{\mathfrak{p}}_i)$, we identify all $\Pic(X_n)$ into the single $\Z$-module
\begin{equation}
\Pic(\X) = \Z \mathcal{H}_x + \Z \mathcal{H}_y + \sum_{i=1}^{4m} \Z \E_i + \sum_{i=1}^{4m} \Z \tilde{\E}_i,
\end{equation}
where $[E_i] = \iota_n(\E_i)$ and $[\tilde{E}_i] = \iota_n(\tilde{\E}_i)$.
\begin{proposition} \label{section5matrixPhionPic}
With the confinement condition \eqref{section5exampleconfcondb}, the mapping $\varphi_n$ in \eqref{section5mappingnonauto} becomes an isomorphism $\tilde{\varphi}_n = \pi_{n+1}^{-1} \circ \varphi_n \circ \pi_n : X_n \rightarrow X_{n+1}$,
and its pullback induces the following lattice automorphism $\Phi =  \iota_n^{-1} \circ \varphi_n^* \circ \iota_{n+1}$ of $\Pic(\X)$:
\begin{equation}
\Phi : 
\left\{ 
\begin{aligned}
\mathcal{H}_x 				&\mapsto 		2m \mathcal{H}_x + \mathcal{H}_y - \sum_{i=1}^{4m} \mathcal{E}_i, 							\quad
\mathcal{H}_y 				\mapsto		\mathcal{H}_x, \quad
\mathcal{E}_i 				\mapsto 		\tilde{\mathcal{E}}_i, \quad
\tilde{\mathcal{E}}_i 			\mapsto 		\mathcal{H}_x - \mathcal{E}_{4m+1-i}.
\end{aligned}
\right.
\end{equation}
\end{proposition}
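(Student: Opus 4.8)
The proof is a computation in local charts, and the plan is to establish the two assertions—that $\tilde\varphi_n$ is an isomorphism, and the stated formula for $\Phi$—by following the four coordinate lines and the exceptional divisors through the blow-ups. First I would pin down the indeterminacy and contraction behaviour of $\varphi_n$ on $\p^1\times\p^1$ before any blow-up. Writing $\bar x = x^{-2m}-b\,x^{-1}-y$ and $\bar y = x$, one checks directly that the line $\{x=0\}$ is contracted to $\tilde{\mathfrak p}_1=(\infty,0)$ while $\varphi_n$ is indeterminate at $\mathfrak p_1=(0,\infty)$; the inverse $\varphi_n^{-1}:(\bar x,\bar y)\mapsto(\bar y,\ \bar y^{-2m}-b\,\bar y^{-1}-\bar x)$ has the mirror-image behaviour, contracting $\{\bar y=0\}$ to $\mathfrak p_1$ and being indeterminate at $\tilde{\mathfrak p}_1$. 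This is what forces the two chains of infinitely near points to sit over $(0,\infty)$ and $(\infty,0)$, and the near-symmetry between $\varphi_n$ and $\varphi_n^{-1}$ lets me treat the two chains in parallel and roughly halve the bookkeeping.

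To show $\tilde\varphi_n=\pi_{n+1}^{-1}\circ\varphi_n\circ\pi_n$ is an isomorphism $X_n\to X_{n+1}$, I would pass successively through the blow-up charts $(u_i,v_i)$, $(U_i,V_i)$ set up before the Proposition and verify that in each chart the lifted map is given by regular functions with regular inverse, so that no curve remains contracted and no point of indeterminacy survives. Concretely this means tracking the proper transform of $\{x=0\}$ and of each exceptional divisor along the source chain into the chart covering its image on the target, and dually for $\varphi_n^{-1}$. The decisive point is the role of the confinement condition $\bar b-2mb+\ubar b=0$: it is exactly what places the final, $b$-dependent centres $\mathfrak p_{4m}:(u_{4m-1},v_{4m-1})=(b,0)$ and $\tilde{\mathfrak p}_{4m}:(\tilde u_{4m-1},\tilde v_{4m-1})=(\ubar b,0)$ so that the image under $\varphi_n$ of the blow-up configuration defining $X_n$ coincides with the configuration defining $X_{n+1}$. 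Without it the last centre on the target would fail to be the $\varphi_n$-image of the last centre on the source, and $\tilde\varphi_n$ would retain an indeterminacy rather than extend to an isomorphism.

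Once the isomorphism is in hand, $\varphi_n^*:\Pic(X_{n+1})\to\Pic(X_n)$ is a lattice isometry and I would compute $\Phi=\iota_n^{-1}\circ\varphi_n^*\circ\iota_{n+1}$ generator by generator. The image $\mathcal H_y\mapsto\mathcal H_x$ is immediate since $\bar y=x$ sends the ruling $\{\bar y=c\}$ to $\{x=c\}$. For $\mathcal H_x$ one pulls back a generic fibre $\{\bar x=c\}$, which is the curve $1-b\,x^{2m-1}-(y+c)\,x^{2m}=0$ of bidegree $(2m,1)$, so its total class is $2m\,\mathcal H_x+\mathcal H_y$ before correction; checking in charts that its total transform passes simply through all $4m$ infinitely near points over $(0,\infty)$ yields the subtraction $-\sum_{i=1}^{4m}\mathcal E_i$. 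The images of the exceptional classes, $\mathcal E_i\mapsto\tilde{\mathcal E}_i$ and $\tilde{\mathcal E}_i\mapsto\mathcal H_x-\mathcal E_{4m+1-i}$, I would read off from the same chart computations by identifying the image curve of each exceptional divisor under $\tilde\varphi_n$; the order-reversal $i\mapsto 4m+1-i$ reflects that $\varphi_n$ carries the last blow-up of one chain to the first of the other. As a consistency check that the bookkeeping is error-free, I would confirm that the resulting $\Phi$ is an isometry of the intersection form fixing $\K_{\X}$.

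The main obstacle is precisely this chart-level tracking of the iterated, infinitely near blow-ups: correctly determining the multiplicities with which the pulled-back fibre meets the chain over $(0,\infty)$, and pinning down how $\tilde\varphi_n$ matches the two chains so that the index reversal in $\tilde{\mathcal E}_i\mapsto\mathcal H_x-\mathcal E_{4m+1-i}$ emerges. This is where the confinement condition must be used, and the isometry-plus-$\K_{\X}$ constraints are the most reliable way to guard against sign and indexing slips in the computation.
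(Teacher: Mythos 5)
Your proposal is correct and follows essentially the same route as the paper, which states this proposition as the outcome of a direct computation in the blow-up charts (as it does for the analogous Propositions \ref{classIIImatrixPhionPic} and \ref{classIVmatrixPhionPic}) without writing the calculation out. Your identification of the contracted curve $\{x=0\}$ and indeterminacy point $\mathfrak{p}_1$, of the role of the confinement condition \eqref{section5exampleconfcondb} in matching the $b$-dependent centres $\mathfrak{p}_{4m}$ and $\tilde{\mathfrak{p}}_{4m}$ across $X_n$ and $X_{n+1}$, and of the class $2m\mathcal{H}_x+\mathcal{H}_y-\sum_i\mathcal{E}_i$ of the pulled-back fibre all agree with the paper's account, including the chain $[x=0]\to\tilde{E}_{4m}\to E_{4m}\to[y=0]$ that underlies the index reversal.
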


\subsection{Root basis and period map}
The rational 2-form $\omega$ on $\p^1 \times \p^1$ preserved by $\varphi_n$ is given by 
\begin{equation}
\omega = k dx \wedge dy  = - k \frac{ dx \wedge dY}{Y^2} = - k \frac{ dX \wedge dy}{X^2}  = k \frac{dX\wedge dY}{X^2 Y^2},
\end{equation}
 where $k\in \C^*$, and we denote its lift to $X_n$ by $\tilde{\omega}_n = \pi_n^* \omega$.
 The pole divisor of $\tilde{\omega}_n$ is effective and is given in terms of the curves $D_j$, $\tilde{D}_j$ by 
 \begin{equation}
 - \div \tilde{\omega}_n = 2 D_0 + \sum_{j=1}^{2m} j D_j + \sum_{j=1}^{2m-1} (2m-j) D_{2m+j} + 2 \tilde{D}_0 + \sum_{j=1}^{2m} j \tilde{D}_j + \sum_{j=1}^{2m-1} (2m-j) \tilde{D}_{2m+j}.
 \end{equation}
 To construct a root basis in this case we perform a similar procedure to that outlined in the proof of Proposition \ref{sufficientsubsetprop}.
The singularity pattern $\{0,\infty^{2m},0\}$ corresponds to the movement of the following exceptional curves of the first kind under the mapping according to 
\begin{equation}
[ x=0  ]  \longrightarrow \tilde{E}_{4m} \longrightarrow  E_{4m} \longrightarrow  [ y = 0],
\end{equation}
where $[ x=0  ] $ and $[ y=0  ]$ indicate proper transforms under the blow-ups. 
Passing to $\Pic(\X)$ we have
\begin{equation}
\mathcal{H}_x - \E_1 \xleftarrow{\quad \Phi\quad }  \tilde{\E}_{4m} \xleftarrow{\quad \Phi\quad } \E_{4m} \xleftarrow{\quad \Phi\quad } \mathcal{H}_y - \tilde{\E}_{1},
\end{equation}
so we take $\mathcal{H}_y - \tilde{\E}_1$ and see that the component of $D$ it intersects corresponds to $\tilde{\mathcal{D}}_1 = \tilde{\E}_1 - \tilde{\E}_2$, which has period $4$ under $\Phi$, so we compute 
\begin{equation}
\Phi^4(\mathcal{H}_y - \tilde{\E}_1) = 2 m\mathcal{H}_x + \mathcal{H}_y - \sum_{i=1}^{4m} \E_i - \tilde{\E}_{1}.
\end{equation} 
and obtain an element of $Q^{\perp}$ given by  
\begin{equation}
\alpha_1 = (\Phi^4 - 1) (\mathcal{H}_y - \tilde{\E}_1) = 2 m\mathcal{H}_x - \sum_{i=1}^{4m} \E_i.
\end{equation}
Applying $\Phi$ to $\alpha_1$ we see that a root basis can be found as follows.
\begin{proposition} \label{rootbasispropsection5}
We have a root basis for $Q^{\perp}$ given by 
\begin{equation}
\alpha_1 = 2 m \mathcal{H}_x - \sum_{i=1}^{4m}\E_i, \quad \alpha_2 = 2 m \mathcal{H}_y - \sum_{i=1}^{4m}\tilde{\E}_i.
\end{equation}
\end{proposition}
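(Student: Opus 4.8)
The plan is to establish, for the pair $\{\alpha_1,\alpha_2\}$, the three defining properties of a root basis in the sense of Definition \ref{rootbasisdef}: that it is a $\Z$-basis of $Q^{\perp}$, that each $\alpha_i$ is a difference of two classes of exceptional curves of the first kind, and that the matrix $c_{ij}=2\,\alpha_i\cdot\alpha_j/(\alpha_i\cdot\alpha_i)$ is a generalised Cartan matrix. All of these reduce to explicit lattice computations using $\mathcal{H}_x^2=\mathcal{H}_y^2=0$, $\mathcal{H}_x\cdot\mathcal{H}_y=1$, $\E_i\cdot\E_j=\tilde{\E}_i\cdot\tilde{\E}_j=-\delta_{ij}$, and the orthogonality of the three families of generators of $\Pic(\X)$.

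The cleanest way to obtain simultaneously the rank of $Q^{\perp}$, membership of $\alpha_1,\alpha_2$, and the $\Z$-basis property is to solve the orthogonality conditions directly for a general class $\F=a\mathcal{H}_x+b\mathcal{H}_y+\sum_i c_i\E_i+\sum_i d_i\tilde{\E}_i$. The conditions $\F\cdot\mathcal{D}_i=\F\cdot\tilde{\mathcal{D}}_i=0$ for $i\ge 1$ force all $c_i$ equal to a common value $c$ and all $d_i$ equal to a common $d$, after which $\F\cdot\mathcal{D}_0=0$ and $\F\cdot\tilde{\mathcal{D}}_0=0$ give $a=-2mc$ and $b=-2md$. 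Thus every element of $Q^{\perp}$ equals $-c\,\alpha_1-d\,\alpha_2$ with $c,d\in\Z$; taking $(c,d)=(-1,0)$ and $(0,-1)$ recovers $\alpha_1,\alpha_2\in Q^{\perp}$, and the computation shows both that $\operatorname{rank}Q^{\perp}=2$ and that $\alpha_1,\alpha_2$ generate $Q^{\perp}$ over $\Z$. (As a consistency check the rank is forced anyway: the $8m$ component classes $\mathcal{D}_0,\dots,\tilde{\mathcal{D}}_{4m-1}$ are linearly independent, so nondegeneracy of the intersection form gives $\operatorname{rank}Q^{\perp}=(8m+2)-8m=2$.)

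It then remains to verify the two root-basis conditions. That $\alpha_1$ is a difference of two $(-1)$-curve classes is built into its construction $\alpha_1=(\Phi^4-1)(\mathcal{H}_y-\tilde{\E}_1)$: both $\mathcal{H}_y-\tilde{\E}_1$ (the proper transform of $\{y=0\}$) and its image $\Phi^4(\mathcal{H}_y-\tilde{\E}_1)$ are classes of exceptional curves of the first kind, since $\Phi$ is induced by the isomorphisms $\tilde{\varphi}_n$ and hence carries such classes to such classes. For $\alpha_2$ I would compute the action of $\Phi$ on the basis, obtaining $\Phi\alpha_1=2m\,\alpha_1+\alpha_2$ and $\Phi\alpha_2=-\alpha_1$; the second relation gives $\alpha_2=-\Phi^{-1}\alpha_1$, which exhibits $\alpha_2$ as the difference of the $\Phi^{-1}$-images of the two $(-1)$-curves furnishing $\alpha_1$. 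Finally $\alpha_1^2=\alpha_2^2=-4m$ and $\alpha_1\cdot\alpha_2=4m^2$ yield $c_{12}=c_{21}=-2m$, so the matrix is $\left(\begin{smallmatrix}2&-2m\\-2m&2\end{smallmatrix}\right)$, a generalised Cartan matrix (of affine type $A_1^{(1)}$ when $m=1$ and of indefinite type for $m\ge 2$).

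The main obstacle is the primitivity claim rather than the $\Q$-span: one must be certain that the two explicit classes generate the full saturated orthogonal lattice and not an index-$k$ sublattice, which is exactly why solving the integral orthogonality system explicitly — rather than merely counting ranks — is the essential step. A secondary point requiring care is realising $\alpha_2$ as an \emph{honest} difference of $(-1)$-curve classes on $X_n$, which the relation $\Phi\alpha_2=-\alpha_1$ settles. As a byproduct the matrix of $\Phi|_{Q^{\perp}}$ in the basis $(\alpha_1,\alpha_2)$ is $\left(\begin{smallmatrix}2m&-1\\1&0\end{smallmatrix}\right)$, whose characteristic polynomial $t^2-2mt+1$ recovers the dynamical degree $\lambda=m+\sqrt{m^2-1}$.
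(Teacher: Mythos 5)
Your proof is correct and follows essentially the same route as the paper: the paper obtains $\alpha_1=(\Phi^4-1)(\mathcal{H}_y-\tilde{\E}_1)$ by the procedure of Proposition \ref{sufficientsubsetprop} and then produces $\alpha_2$ by applying $\Phi$, exactly as you do, and your intersection computations agree with the paper's generalised Cartan matrix and with \eqref{Phionrootbasissection5} up to a harmless change of matrix convention. The one point where you go beyond the paper --- solving the orthogonality system $\F\cdot\D_i=\F\cdot\tilde{\D}_i=0$ explicitly to show $Q^{\perp}=\Z\alpha_1\oplus\Z\alpha_2$ over $\Z$, rather than merely counting ranks --- is a welcome addition, since the paper asserts the $\Z$-basis (primitivity) property without spelling it out.
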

In Figure \ref{fig:dynkindiagramclassI} we give the intersection diagram of the irreducible components of the anticanonical divisor as well as the diagram corresponding to the generalised Cartan matrix formed by the numbers $c_{i,j} = 2 \frac{\alpha_i \cdot \alpha_j}{\alpha_i \cdot \alpha_i}$, which here is
\begin{equation}
\left(\begin{array}{cc}2 & -2m \\-2m & 2\end{array}\right).
\end{equation}
In particular in the $m=1$ case $X_n$ is a Sakai surface of type $D_7^{(1)}$, and the the root basis in Proposition \ref{rootbasispropsection5} is a basis of simple roots for an affine root system of type $\underset{|\alpha|^2=4}{A_1^{(1)}}$, where the non-standard length of roots does not affect the fact that the matrix with entries $c_{i,j} = 2 \frac{\alpha_i \cdot \alpha_j}{\alpha_i \cdot \alpha_i}$ becomes the usual generalised Cartan matrix of type $A_1^{(1)}$.
\begin{figure}[htb]
\centering
\begin{tikzpicture}[scale=.9, 
				elt/.style={circle,draw=black!100, fill=blue!75, thick, inner sep=0pt,minimum size=1.75mm},
				blk/.style={circle,draw=black!100, fill=black!75, thick, inner sep=0pt,minimum size=1.75mm}, 
				mag/.style={circle, draw=black!100, fill=magenta!100, thick, inner sep=0pt,minimum size=1.75mm}, 
				red/.style={circle, draw=black!100, fill=red!100, thick, inner sep=0pt,minimum size=1.75mm} ]
	\path 
			(-2,0) 	node 	(d2mleft) [blk, label={[xshift=-15pt, yshift = -10 pt] $\D_{2m}$} ] {}
			(-1,0) 	node 	(d0left) [blk, label={[xshift=-0pt, yshift = -22 pt] $\D_{0}$} ] {}
			(0,0) 	node 		(d0right) [blk, label={[xshift=-0pt, yshift = -22 pt] $\tilde{\D}_0$} ] {}
			(1,0) 	node 		(d2mright) [blk, label={[xshift=18pt, yshift = -10 pt] $\tilde{\D}_{2m}$} ] {}
			(1.75,.75) 	node 		(dupright) [blk, label={[xshift=-0pt, yshift = 0 pt] $\tilde{\D}_{2m+1}$} ] {}
			(2.5,.75) 	node 		(dotstopright) [label={[xshift=6pt, yshift = -7 pt] $\dots$} ] {}
			(3.25,.75) 	node 		(d4mright) [blk, label={[xshift=-0pt, yshift = 0 pt] $\tilde{\D}_{4m-1}$} ] {}
			(1.75,-.75) 	node 		(ddownright) [blk, label={[xshift=-0pt, yshift =-22 pt] $\tilde{\D}_{2m-1}$} ] {}
			(2.5,-.75) 	node 		(dotsbottomright) [label={[xshift=6pt, yshift = -7 pt] $\dots$} ] {}
			(3.25,-.75) 	node 		(d1right) [blk, label={[xshift=0pt, yshift =-22pt] $\tilde{\D}_{1}$} ] {}
			(-2.75,.75) 	node 		(dupleft) [blk, label={[xshift=-0pt, yshift = 0 pt] $\D_{2m+1}$} ] {}
			(-3.5,.75) 	node 		(dotstopleft) [label={[xshift=-4pt, yshift = -7 pt] $\dots$} ] {}
			(-4.25,.75) 	node 		(d4mleft) [blk, label={[xshift=-0pt, yshift = 0 pt] $\D_{4m-1}$} ] {}
			(-2.75,-.75) 	node 		(ddownleft) [blk, label={[xshift=-0pt, yshift =-22 pt] $\D_{2m-1}$} ] {}
			(-3.5,-.75) 	node 		(dotsbottomleft) [label={[xshift=-4pt, yshift = -7 pt] $\dots$} ] {}
			(-4.25,-.75) 	node 		(d1left) [blk, label={[xshift=0pt, yshift =-22pt] $\D_{1}$} ] {};

		\draw [black,line width=1pt ] (d0left) -- (d0right);
		\draw [black,line width=1pt ] (d2mleft) -- (d0left);
		\draw [black,line width=1pt ] (d2mright) -- (d0right);
		\draw [black,line width=1pt ] (dupright) -- (d2mright) -- (ddownright);
		\draw [black,line width=1pt ] (dupleft) -- (d2mleft) -- (ddownleft);
		\draw [black,line width=1pt ] (dupleft) -- (dotstopleft);
		\draw [black,line width=1pt ] (ddownleft) -- (dotsbottomleft);
		\draw [black,line width=1pt ] (dupright) -- (dotstopright);
		\draw [black,line width=1pt ] (ddownright) -- (dotsbottomright);
	\begin{scope}[xshift=6.5cm]
		\path 
		        (-.9,0)     node 		(a1) [blk, label={[xshift=-0pt, yshift = -22 pt] $\alpha_1$} ] {}
		        ( .9,0) 	node  	(a2) [blk, label={[xshift=0pt, yshift = -22 pt] $\alpha_2$} ] {}
			(0,0)		node 	(label) [label={[xshift=0pt, yshift = 3 pt] $2m$} ] {};
			(0,-2)		node 	(label) [label={[xshift=0pt, yshift = 5 pt] $~$} ] {};
		\draw [black,line width=1pt ] (a1) -- (a2);
	\end{scope}
	\end{tikzpicture}
 	\caption{Intersection diagrams of anticanonical divisor components and root basis for Class I example}
	\label{fig:dynkindiagramclassI}
\end{figure}

The evolution of the root basis under $\Phi$ is given by 
\begin{equation} \label{Phionrootbasissection5}
\Phi : \left(\begin{array}{c} \alpha_1 \\ \alpha_2 \end{array}\right) \mapsto \left(\begin{array}{cc}0 & -1 \\1 & 2m\end{array}\right) \left(\begin{array}{c} \alpha_1 \\ \alpha_2 \end{array}\right),
\end{equation}
and we see the factor $(t^2 - 2m t + 1)$ from the characteristic polynomial of $\Phi$ on $\Pic(\X)$ as the characteristic polynomial of its restriction to $Q^{\perp}$, and we can deduce that the dynamical degree of the mapping is $\lambda = m + \sqrt{m^2 -1 }$. 
To see this on the level of the confinement condition on $b$ we use the expressions for the root basis as differences of classes of exceptional curves of the first kind and find the root variables for the root basis in Proposition \ref{rootbasispropsection5} to be 
\begin{equation}
\chi(\alpha_1) = \ubar{b}, \qquad \chi(\alpha_2) = - b.
\end{equation}
Therefore the confinement condition $\bar{b} = 2 m b - \ubar{b}$ corresponds to evolution of the root variables according to the same matrix as $\Phi$ on the root basis as in \eqref{Phionrootbasissection5}, i.e.
\begin{equation}
\bar{\chi}(\bar{\alpha}_1) = b = - \chi(\alpha_2), \qquad
\bar{\chi}(\bar{\alpha}_2) = - \bar{b} = \ubar{b} - 2m b = \chi(\alpha_1) + 2 m \chi(\alpha_2),
\end{equation}
and we see how the parameter evolution reflects the dynamical degree.

\section{Conclusion}

In summary, in this paper we have provided an explanation (as summarised in Theorem \ref{bigsummarytheorem}) of the correspondence between confinement conditions for a vast class of birational mappings of the plane and their dynamical degrees. 
As we explained, this correspondence underlies the efficacy of the method of full deautonomisation by singularity confinement, as a detector of the dynamical degree.
While the problem of effectiveness of the anticanonical divisor class means that the method is not guaranteed to work for birational mappings in general, examples with a space of initial conditions but no effective anticanonical divisor are rare, and Theorem \ref{effectivenesstheorem} can be regarded as an affirmative answer to a non-autonomous counterpart to the conjecture initially posed by Gizatullin, in the case of mappings preserving a rational 2-form.

Another interpretation of the results in this paper is as a generalisation of Takenawa's attempt \cite{takenawaindefinite} towards a version of the Sakai theory of discrete Painlev\'e equations for non-integrable mappings, involving rational surfaces associated via $Q^{\perp}$ to root systems of indefinite type. 
In particular non-integrable mappings from the class to which Theorem \ref{effectivenesstheorem} applies provide a suite of new examples with indefinite type root bases, extending some which appeared in Looijenga's study of rational surfaces with anticanonical cycle \cite{looijenga}.
While we have not confirmed this, it is natural to expect that symmetries of the families of surfaces constructed in this paper can be described in terms of the Weyl group associated to $Q^{\perp}$.
Further, while a classification of the kinds of effective anticanonical divisors of surfaces forming the space of initial conditions for non-integrable mappings cannot be given by a finite list as in the discrete Painlev\'e case, we have given a local classification of the intersection configuration of irreducible components in Section \ref{section4}.


\subsubsection*{Acknowledgements}

The authors thank an anonymous referee for many helpful comments and suggestions that greatly improved the paper.
AS was supported by a Japan Society for the Promotion of Science (JSPS) Postdoctoral Fellowship for Research in Japan and also acknowledges the support of JSPS KAKENHI Grant Numbers 21F21775, 22KF0073 and 24K22843. 
TM and RW would also like to acknowledge support from JSPS, respectively through JSPS KAKENHI Grant Numbers 18K13438 and 23K12996, and 22H01130.

\subsubsection*{Conflict of interest statement}

The authors have no conflicts of interest to declare.

\subsubsection*{Data availability statement} 
There is no associated data.


\appendix
\section{Choice of compactification} \label{app:compactification}

In constructing the space of initial conditions for a mapping coming from a discrete equation 
\begin{equation}  \label{discreteequationgeneralcompactification}
(x_{n+1}, y_{n+1} ) = ( f_n(x_n,y_n), g_n(x_n,y_n) ),
\end{equation}
it is often more convenient to consider the variables $x_n,y_n$ as individual affine coordinates on two copies of $\p^1$, rather than $(x_n,y_n)$ being an affine chart for $\p^2$. 
This amounts to choosing $\p^1 \times \p^1$ rather than $\p^2$ as compactification of $\C^2$, and we note that other choices are possible, for example one of the Hirzebruch surfaces $\mathbb{F}_{\ell}$. 
We use $\p^1 \times \p^1$ when we demonstrate the results of Sections \ref{section2}-\ref{section4} in the examples in Sections \ref{section5} and \ref{section6}, so in this Appendix we give an outline of how the formulation using $\p^2$ translates to the alternative choice of $\p^1 \times \p^1$ as compactification.

\subsection{Autonomous case}
We first consider the case where the functions $f_n$, $g_n$ in the equation \eqref{discreteequationgeneralcompactification} do not depend on $n$ explicitly.
Denoting homogeneous coordinates for $\p^2$ by $[Z_0:Z_1:Z_2]$, the variables $(x,y)=(x_n,y_n)$ can be taken as affine coordinates via $[Z_0:Z_1:Z_2] = [1:x:y]$ on the patch $X_0 \neq 0$.
Regarding $(x_{n+1},y_{n+1})$ as affine coordinates for another copy of $\p^2$ in the same way, the discrete equation \eqref{discreteequationgeneralcompactification} in the autonomous case defines a birational mapping
\begin{equation}
\varphi : \p^2 \dashrightarrow \p^2.
\end{equation}
Alternatively, taking $\p^1\times \p^1$ with coordinates $([X_0:X_1], [Y_0:Y_1])$ and regarding $(x,y)$ as affine coordinates in the two factors via $[X_0:X_1] = [1:x]$ and $[Y_0,Y_1]=[1:y]$ on the patches $X_0\neq0$, $Y_0\neq 0$ respectively, the equation \eqref{discreteequationgeneralcompactification} defines a birational mapping of $\p^1 \times \p^1$ which we denote
\begin{equation}
\varphi' : \p^1 \times \p^1 \dashrightarrow \p^1 \times \p^1. 
\end{equation}
Then $\varphi$ and $\varphi'$ will be birationally conjugate, so the autonomous mapping $\varphi$ has a space of initial conditions if and only if $\varphi'$ does, in the sense that there exists a rational surface $X'$ and birational mapping $\pi' : X' \dashrightarrow \p^1 \times \p^1$ such that $\tilde{\varphi} ' \defeq \pi'^{-1} \circ \varphi' \circ \pi'$ is an automorphism of $X'$.
The birational conjugacy of $\varphi$ and $\varphi'$ is achieved by, for example, the usual elementary birational transformation between $\p^2$ and $\p^1\times \p^1$ which we recall now.

\subsection{Elementary birational transformation between $\p^2$ and $\p^1 \times \p^1$}

Take any two distinct points $p_1$ and $p_2$ on $\p^2$ and choose homogeneous coordinates $[Z_0:Z_1:Z_2]$ such that $p_1 : [Z_0:Z_1:Z_2] = [0:0:1]$ and $p_2 : [Z_0:Z_1:Z_2] = [0:1:0]$.
Blow up these points, denoting the resulting surface by $\operatorname{Bl}_{p_1,p_2}(\p^2)$ and the birational map given by the composition of the two blow-ups $\operatorname{Bl}_{p_1,p_2} : \p^2 \dashrightarrow \operatorname{Bl}_{p_1,p_2}(\p^2)$. 
Then the proper transform under these blow-ups of the line $Z_0=0$ is an exceptional curve of the first kind and can be contracted, which leads to $\p^1 \times \p^1$. 
Take homogeneous coordinates $[X_0:X_1] = [1:x]$ and $[Y_0:Y_1]=[1:y]$ for the two $\p^1$-factors in $\p^1 \times \p^1$ such that the point to which the line $Z_0=0$ was contracted is $q : ([X_0:X_1],[Y_0:Y_1])=([0:1],[0:1])$.
Then the blow-up of this point $\operatorname{Bl}_{q}(\p^1\times \p^1)$ is isomorphic to $\operatorname{Bl}_{p_1,p_2}(\p^2)$, and we have the birational map $\operatorname{Bl}_q^{-1} \circ \operatorname{Bl}_{p_1,p_2} : \p^2 \rightarrow \p^1 \times \p^1$.
In coordinates this is given explicitly by
\begin{equation}
\p^2 \ni [Z_0:Z_1:Z_2] \mapsto ( [X_0:X_1], [Y_0:Y_1]) =( [Z_0:Z_1], [Z_0:Z_2]) \in \p^1 \times \p^1,
\end{equation}
which is exactly compatible with how we took $(x,y)$ as coordinates on an affine subset of $\p^2$ versus $\p^1 \times \p^1$.

Denote the class of a hyperplane divisor on $\p^2$ by $\mathcal{H} \in \Pic(\p^2)$, and the classes of exceptional divisors of the blow-ups of $p_1$ and $p_2$ by $\mathcal{E}_1$ and $\mathcal{E}_2$ respectively.
Similarly denote the classes of lines of constant $x$ and $y$ on $\p^1 \times \p^1$ by $\mathcal{H}_1$ and $\mathcal{H}_2$ respectively, and the class of the exceptional divisor of the blow-up of $q$ by $\mathcal{F}$.
Then we have the following correspondence between $\Pic(\operatorname{Bl}_{p_1,p_2}(\p^2))$ and $\Pic(\operatorname{Bl}_{q}(\p^1\times \p^1))$:
\begin{equation}
\begin{aligned}
\mathcal{H} &= \mathcal{H}_1 + \mathcal{H}_2 - \mathcal{F},\\
\mathcal{E}_1 &= \mathcal{H}_2 - \mathcal{F},\\
\mathcal{E}_2 &= \mathcal{H}_1 - \mathcal{F},
\end{aligned}
\quad \text{ and conversely }
\quad 
 \begin{aligned}
\mathcal{H}_1 &= \mathcal{H} - \mathcal{E}_1,\\
\mathcal{H}_2 &= \mathcal{H} - \mathcal{E}_2,\\
\mathcal{F} &= \mathcal{H} - \mathcal{E}_1 - \mathcal{E}_2.
\end{aligned}
\end{equation}
We give a schematic illustration of this in Figure \ref{fig:elementarytransformation}.

\begin{figure}[htb]
\centering
    \begin{tikzpicture}[basept/.style={circle, draw=red!100, fill=red!100, thick, inner sep=0pt,minimum size=1.2mm},baseptblue/.style={circle, draw=blue!100, fill=blue!100, thick, inner sep=0pt,minimum size=1.2mm},scale=.8]
    
    \begin{scope}[xshift=+5cm, yshift=-8cm]
    \node at (0,-3) {$\mathbb{P}^1 \times \mathbb{P}^1$};
    \draw[thick,black] (-2.5,-2)  -- (+2.5,-2)  node[pos=0,left] {\small $y=0$} ;
    \draw[thick,red] (-2.5,+2) -- (+2.5,+2) node[pos=0,left,black] {\small $y=\infty$} ;
    \draw[thick,black] (-2,-2.5) -- (-2,+2.5) node[pos=0,below] {\small $x=0$} ;
    \draw[thick,red] (+2,-2.5) -- (+2,+2.5)node[pos=0,below,black] {\small$ x=\infty$} ;
    \node (q1) at (2,2) [baseptblue,label={[xshift=10pt, yshift = 0 pt] \small $q$}] {};

    \end{scope}

    \draw[dashed,thick,black,->] (-1.5,-8)--(+1,-8) node[ midway,above] {$\operatorname{Bl}_q^{-1} \circ \operatorname{Bl}_{p_1,p_2}$};

    \begin{scope}[xshift=-4.5cm, yshift=-8cm]
    \node at (0,-3) {$\mathbb{P}^2$};
    \draw[thick,black] (-2.5,-2) -- (+2.5,-2) node [pos=0,left] {\small $Z_2=0$} ;
    \draw[thick,black] (-2.25,-2.5) -- (+.3,+2) node [pos=0,below,black] {\small $Z_1=0$} ;
    \draw[thick,blue] (+2.25,-2.5) -- (-.3,+2) node [pos=1,left,black] {\small $Z_0=0$} ;
    \node (p1) at (0,+1.475) [basept,label={[xshift=12pt, yshift = -10 pt] \small $p_{1}$}] {};
    \node (p2) at (1.96,-2) [basept,label={[xshift=10pt, yshift = 0 pt] \small $p_{2}$}] {};

    \end{scope}
    
    \begin{scope}[yshift=-1cm]
    \node at (0,3.5) {$\operatorname{Bl}_{p_1,p_2} (\p^2) \cong \operatorname{Bl}_{q}(\p^1 \times \p^1) $};
    \draw[thick,black] (-2.5,-2)  -- (+2.5,-2)  ;
    \draw[thick,red] (-2.5,+2) -- (+1,+2)  node [midway, above, red] {$\mathcal{E}_1$};
    \draw[thick,black] (-2,-2.5) -- (-2,+2.5) ;
    \draw[thick,red] (+2,-2.5) -- (+2,+1) node [midway, right, red] {$\mathcal{E}_2$};    
    \draw[thick,blue] (+2.5,0) -- (0,+2.5) node [midway, above right, blue] {$\mathcal{F}$};    
    \end{scope}
	\draw[thick, black, dashed,->] (-4,-5.5) -- (-2.5,-4) node[midway, above left] {$\operatorname{Bl}_{p_1,p_2}$};
	\draw[thick, black, dashed,->] (+4,-5.5) -- (+2.5,-4) node[midway, above right] {$\operatorname{Bl}_{q}$};

\end{tikzpicture}
\caption{Elementary birational transformation $\operatorname{Bl}_q^{-1} \circ \operatorname{Bl}_{p_1,p_2} : \p^2 \dashrightarrow \p^1 \times \p^1$}
\label{fig:elementarytransformation}
\end{figure}

\subsection{Non-autonomous case}

The setup in Section \ref{section2} can be adapted to use $\p^1 \times \p^1$ as compactification instead of $\p^2$, which we outline now. 
We use $'$ to indicate objects associated with the $\p^1 \times \p^1$ choice, for example $\varphi'_n$ for a non-autonomous mapping of $\p^1 \times \p^1$ and $X'_n$ for surfaces forming its space of initial conditions obtained from $\p^1\times\p^1$. 
Again the main technical consideration arises from the need to restrict the $n$-dependence of the mapping and surfaces to prevent pathological examples, for which we will need the following.

\begin{definition}[geometric basis; $\p^1\times\p^1$ version]
Let $X'$ be a rational surface such that there exists a birational morphism $\pi' : X' \rightarrow \p^1 \times \p^1$ and $(h^{(1)},h^{(2)},f^{(1)}, \dots, f^{(r')})$ be a $\Z$-basis for $\Pic(X')$. 
We call $(h^{(1)},h^{(2)},f^{(1)}, \dots, f^{(r')})$ a geometric basis if there exists a composition of blow-ups $\pi' = \pi'^{(1)} \circ \cdots \circ \pi'^{(r')} : X' \rightarrow \p^1 \times \p^1$ such that $h^{(1)} = \pi'^* (\mathcal{O}_{\p^1}(1)\times 1), h^{(2)} = \pi'^* (1 \times \mathcal{O}_{\p^1}(1))$ and $f^{(i)}$ is the class of the total transform (under $\pi'^{(i+1)} \circ \cdots \circ \pi'^{(r')}$) of the exceptional curve of $\pi'^{(i)}$ for $i=1,\dots,r'$. 
\end{definition}
Then we can define a space of initial conditions for a non-autonomous mapping on $\p^1 \times \p^1$ for example as follows.

\begin{definition}[space of initial conditions for a non-autonomous mapping; without blow-downs; $\p^1 \times \p^1$ version] \label{def:spaceofICsP1P1}
A space of initial conditions for a non-autonomous mapping $\varphi_n' : \p^1\times \p^1 \dashrightarrow \p^1 \times \p^1$ consists of sequences $(X_n')_n$ and $(\pi_n')_n$, where each $X_n'$ is a rational surface and $\pi_n'$ is a birational morphism $\pi_n' : X_n' \rightarrow \p^1 \times \p^1$ written as a sequence of blow-ups as $\pi_n' = \pi_n'^{(1)} \circ \cdots \circ \pi_n'^{(r')}$, such that the following conditions hold:
\begin{itemize}
\item The mappings $\varphi_n'$ become isomorphisms $\tilde{\varphi}'_n \defeq \pi_{n+1}'^{-1} \circ \varphi_n' \circ \pi_n'$ as in Figure \ref{fig:spaceofICsP1P1}.

\item Let $e_n' =\left(h_n^{(1)},h_n^{(2)},f_n^{(1)}, \dots, f_n^{(r')}\right)$ be the geometric basis for $\Pic(X_n')$ corresponding to $\pi_n'$. Then the matrices of $\tilde{\varphi}_n'^* : \Pic(X_{n+1}' \rightarrow \Pic(X_n')$ with respect to these bases do not depend on $n$.
\item The set of effective classes in $\Pic(X'_n)$ in terms of the basis $e_n'$ does not depend on $n$.
\end{itemize} 

\end{definition}
\begin{figure}[htb]
\begin{center}
\begin{tikzcd}[sep=1cm]
\cdots  \arrow[r] & X'_{n-1}  \arrow[r, "\tilde{\varphi}'_{n-1}"] \arrow[d, "\pi'_{n-1}"] & X'_{n} \arrow[r, "\tilde{\varphi}'_{n}"]  \arrow[d, "\pi'_{n}"] & X'_{n+1} \arrow[r]  \arrow[d, "\pi'_{n+1}"] &\cdots \\
\cdots \arrow[r,dashed] &\p^1\times \p^1 \arrow[r, "\varphi'_{n-1}",dashed] &\p^1\times \p^1 \arrow[r, "\varphi'_n",dashed] &\p^1\times \p^1 \arrow[r,dashed] & \cdots ,
\end{tikzcd}
\end{center}
 	\caption{Space of initial conditions for a non-autonomous mapping; without blow-downs}
	\label{fig:spaceofICsP1P1}
\end{figure}

It is also possible to adapt the more general Definition \ref{spaceofICsdefblowdowns} allowing blow-downs to the $\p^1 \times \p^1$ setting, but all of the examples in this paper and the follow-up \cite{part2} involve only blow-ups, so to use them as demonstrations of the general theory the above definition will suffice.

The following Lemma allows the examples in Sections \ref{section5} and \ref{section6} to serve as demonstrations of the general theory developed in Sections \ref{section2} to \ref{section4}.

\begin{lemma} \label{lem:spaceofICsP1P1toP2}
Let $\varphi_n' : \p^1 \times \p^1 \dashrightarrow \p^1 \times \p^1$ be a non-autonomous mapping with a space of initial conditions as in Definition \ref{def:spaceofICsP1P1} with 
\begin{itemize}
    \item surfaces $X_n'$,
    \item birational morphisms $\pi_n' : X_n' \rightarrow \p^1 \times \p^1$ written as sequences of blow-ups as $\pi_n' = \pi_n'^{(1)} \circ \cdots \circ \pi_n'^{(r')}$,
    \item geometric bases $e_n' = (h_n^{(1)},h_n^{(2)},f_n^{(1)}, \dots, f_n^{(r')})$ for $Pic(X_n')$ corresponding to $\pi_n'$,
    \item isomorphisms $\tilde{\varphi}_n' = (\pi_{n+1}')^{-1} \circ \varphi_n' \circ \pi_n' : X_n' \rightarrow X_{n+1}'$.
\end{itemize}
Let $\eta_n : \p^1 \times \p^1 \dashrightarrow \p^2$ be the elementary birational transformation outlined above with $q$ being the centre of the blow-up $\pi_n'^{(1)}$.
Note that this involves choosing coordinates for $\p^1 \times \p^1$ and $\p^2$.
Then we have a space of initial conditions for $\varphi_n \defeq \eta_{n+1} \circ \varphi_n' \circ \eta_n^{-1} : \p^2 \dashrightarrow \p^2$ (without blow-downs) consisting of
\begin{itemize}
    \item surfaces $X_n = X_{n}'$,
    \item birational morphisms $\pi_n : X_n\rightarrow \p^2$ written as sequences of blow-ups as $\pi_n = \pi_n^{(1)}\circ \cdots \circ \pi_n^{(r)}$, with 
    \begin{itemize}
        \item $r = r'+1$, 
        \item $\pi_n^{(k)} = \pi_n'^{(k-1)}$ for $k=3,\dots,r=r'+1$,
        \item $\pi_n^{(1)}$, $\pi_n^{(2)}$ being the contractions in $\eta_n$,
    \end{itemize}    
    \item geometric bases $e_n = (e_n^{(0)},e_n^{(1)}, \dots, e_n^{(r)})$ for $Pic(X_n)$ corresponding to $\pi_n$, related to $e_n'$ by 
    \begin{equation}
    \begin{gathered}
                e_n^{(0)} = h_n^{(1)} + h_n^{(2)} - f_n^{(1)}, \quad 
        e_n^{(1)} = h_n^{(2)}-f_n^{(1)}, \quad 
        e_n^{(2)} = h_n^{(1)}-f_n^{(1)}, \\
        e_n^{(3)} = f_n^{(2)}, \quad 
        \dots, \quad 
        e_n^{(r)} = f_n^{(r-1)} = f^{(r')}.
    \end{gathered}
    \end{equation}
\end{itemize}
\end{lemma}

For a non-autonomous mapping $\varphi'_n$ with a space of initial conditions as in Definition \ref{def:spaceofICsP1P1}, we can identify all $\Pic(X_n')$ into the Picard lattice $\Pic(\X') = \Z h^{(1)} + \Z h^{(2)} + \sum_{i=1}^{r'} f^{(i)}$ via the geometric bases $e'_n$. 
We then have the lattice automorphism $\Phi' : \Pic(\X')\rightarrow \Pic(\X')$ induced by $(\tilde{\varphi}_n')^*$, similarly to in Definitions \ref{def:picardlattice} and \ref{def:actiononpicardlattice}.
The Picard lattice and action will be related to those of the space of initial conditions for $\varphi_n = \eta_{n+1}\circ \varphi_n' \circ \eta_n^{-1}$ in Lemma \ref{lem:spaceofICsP1P1toP2} according to 
\begin{equation}
\begin{gathered}
        e^{(0)} = h^{(1)} + h^{(2)} - f^{(1)}, \quad 
        e^{(1)} = h^{(2)}-f^{(1)}, \quad 
        e^{(2)} = h^{(1)}-f^{(1)}, \\
        e^{(3)} = f^{(2)}, \quad 
        \dots, \quad 
        e_n^{(r)} = f_n^{(r-1)} = f^{(r')}.
\end{gathered}
\end{equation}



\end{document}